\numberwithin{equation}{section}
\numberwithin{figure}{section}
\newcommand{\Id}{\mathbbm{1}}
\newcommand{\Or}{\mathcal{O}}
\newcommand{\Pb}{\mathbbm{P}}
\newcommand{\e}{\varepsilon}
\newcommand{\I}{{\rm i}}
\newcommand{\R}{\mathbb{R}}
\newcommand{\N}{\mathbb{N}}
\newcommand{\Z}{\mathbb{Z}}
\renewcommand{\Re}{\mathrm{Re}}
\renewcommand{\Im}{\mathrm{Im}}
\DeclareMathOperator*{\Ai}{Ai}
\newtheorem{prop}{Proposition}[section]
\newtheorem{thm}[prop]{Theorem}
\newtheorem{lem}[prop]{Lemma}
\newtheorem{defin}[prop]{Definition}
\newtheorem{cla}[prop]{Claim}
\newtheorem{rem}[prop]{Remark}
\newenvironment{remark}{\begin{rem}\normalfont}{\end{rem}}
\title{Limit processes for TASEP\\ with shocks and rarefaction fans}
\author{Ivan Corwin\thanks{Courant Institute of Mathematical Sciences,
  New York University, \newline 251 Mercer Street, New York, NY 10012, USA; E-mail:~\texttt{corwin@cims.nyu.edu}},
Patrik L. Ferrari\thanks{Institute for Applied Mathematics, University of Bonn, Endenicher Allee 60,\newline 53115 Bonn, Germany; E-mail:~\texttt{ferrari@uni-bonn.de}},
Sandrine P\'ech\'e\thanks{Institut Fourier, 100 Rue des maths, 38402 Saint Martin d'Heres, France;\newline E-mail:~\texttt{Sandrine.Peche@ujf-grenoble.fr}}}
\date{11. May 2010}
\begin{document}
\sloppy
\maketitle

\begin{abstract}
We consider the totally asymmetric simple exclusion process (TASEP) with two-sided Bernoulli initial condition, i.e., with left density $\rho_-$ and right density $\rho_+$. We study the associated height function, whose discrete gradient is given by the particle occurrences. Macroscopically one has a deterministic limit shape with a shock or a rarefaction fan depending on the values of $\rho_\pm$. We characterize the large time scaling limit of the multipoint fluctuations as a function of the densities $\rho_\pm$ and of the different macroscopic regions. Moreover, using a slow decorrelation phenomena, the results are extended from fixed time to the whole space-time, except along the some directions (the characteristic solutions of the related Burgers equation) where the problem is still open.

On the way to proving the results for TASEP, we obtain the limit processes for the fluctuations in a class of corner growth processes with external sources, of equivalently for the last passage time in a directed percolation model with two-sided boundary conditions. Additionally, we provide analogous results for eigenvalues of perturbed complex Wishart (sample covariance) matrices.
\end{abstract}

\section{Introduction}
We consider the totally asymmetric simple exclusion process (TASEP) on $\Z$. This is one of the basic one-dimensional interacting stochastic particle
systems that, despite its simplicity, exhibits a number of interesting features. TASEP is a Markov process $\eta_t$ with state space $\{0,1\}^{\Z}$. For a given time $t\in \R_+$ and position $x\in \Z$, we say that site $x$ is occupied at time $t$ if  $\eta_t(x)=1$ and it is empty if $\eta_t(x)=0$ (we can have at most one particle at each site: exclusion principle).
The dynamics is defined as follows. Particles jump to the neighboring right site with rate $1$ provided that the site is empty. Jumps are independent of each other and take place after an exponential waiting time with mean $1$, which is counted from the time instant when the right neighbor site is empty (for a rigorous construction, see~\cite{Li99,Li85b}).

One may study a variety of slightly different observables of TASEP such as the total current, the location of a tagged particles or the TASEP height function. Here we focus on the height function, $h_t$, defined from a TASEP configuration $\eta_{t}$ as
\begin{equation}\label{height_function}
 h_t(j) = \begin{cases}
       2N_t + \sum_{i=1}^{j} (1-2\eta_t(i)) & \textrm{for }j\geq 1,\\
       2N_t & \textrm{for } j=0,\\
       2N_t - \sum_{i=j+1}^{0} (1-2\eta_t(i)) & \textrm{for }j\leq -1,
       \end{cases}
\end{equation}
where $N_t$ is the total number of particles which jumped from site $0$ to site $1$ during the time interval $[0,t]$.

In this paper we consider the simplest family of (random) initial condition, in which shockwaves or rarefaction fans occur. More precisely, our initial condition is Bernoulli product measure with density $\rho_-$ on $\Z_-=\{\ldots,-2,-1\}$ and $\rho_+$ on $\Z_+^*=\{0,1,\ldots\}$. We refer to this as \emph{two-sided Bernoulli initial condition}. Particular cases which have already been studied are:\\[0.5em]
$\bullet$ the step-initial condition ($\rho_-=1$ and $\rho_+=0$), where $\Z_-$ is completely filled. In this case, there is a rarefaction fan, the fluctuations of $h_t$ scale as $t^{1/3}$, the correlation length as $t^{2/3}$, and the limit process is the Airy$_2$ process (see the case $b\equiv 0$ in~\cite{BF07}). The Airy$_2$ process occurred first in closely related growth models~\cite{PS02,Jo03b}.\\[0.5em]
$\bullet$ Stationary initial condition ($\rho\equiv\rho_-=\rho_+\in (0,1)$). The only stationary and translation invariant measures are Bernoulli product measures with constant density $\rho\in [0,1]$ ($\rho=0$ and $\rho=1$ are however trivial)~\cite{Lig76}. The scaling limit for the multi-point distribution of stationary TASEP has been recently unraveled in~\cite{BFP09}.\vspace{0.5em}

Therefore we have only to focus on $\rho_+\neq\rho_-$ (the results below are the content of Theorem~\ref{ThmTASEP} and are illustrated in Figure~\ref{FigMacroTASEP}). There are two cases:\\[0.5em]
(a) $\rho_- > \rho_+$. For large time $t$ the asymptotic density decreases linearly from $1-\rho_-$ to $1-\rho_+$ over the region from $(1-2\rho_-)t$ to $(1-2\rho_+)t$ called a rarefaction fan (see Figure~\ref{FigMacroTASEP}~(a)). In this region the height fluctuations live on a $t^{1/3}$ and are governed by the Airy$_2$ process like for step-initial condition (with correlation length scaling as $t^{2/3}$). Around positions $(1-2\rho_\pm)t$ the randomness of the initial conditions start being relevant and there is a transition process from Airy$_2$ to Brownian Motion. When the fluctuations coming from the initial condition are on the $t^{1/2}$ scale, they dominate the fluctuations created by the dynamics ($t^{1/3}$ scale) and are governed by Brownian Motion. This is the case on the left and on the right of the rarefaction fan.\\[0.5em]
(b) $\rho_- < \rho_+$. For large time $t$ there is a macroscopic shock with density jump from $\rho_-$ to $\rho_+$ around the position $(1-\rho_- - \rho_+)t$ (see Figure~\ref{FigMacroTASEP}~(b)). For large time $t$, the fluctuations on the left and on the right of the shock are independent. Of particular interest, is then the joint-distribution of the height function around the shockwave $(1-\rho_- - \rho_+)t$ at different times.
The initial conditions considered here are random and therefore looking far enough away the initial randomness becomes more important than the fluctuations created by the dynamics, which live on the $t^{1/3}$ scale only. For non-random initial conditions this does not happen and further limit processes arise, see~\cite{BFPS06,BFS07} and, for one-sided random initial condition, see~\cite{BFS09}.\\[.5em]
The results for the one-point distributions in Theorem~\ref{ThmTASEP} were conjectured in~\cite{PS01} and recently proven in~\cite{BC09}. The conjecture was based on universality, since analogue results were available for a stochastic growth model (the polynuclear growth (PNG) model)~\cite{BR00}, which is in the same universality class, named for Kardar-Parisi-Zhang (KPZ)~\cite{KPZ86}. The extension to multi-point distributions at fixed time in the PNG model was carried out in~\cite{SI04} (except for the case corresponding to stationary TASEP).

Extensions away from fixed-time have been previously obtained in TASEP (with different type of initial conditions) in~\cite{SI07,BF07}. However, the extension was technically restricted to \emph{space-like paths}, for which one could still get explicit expressions for the correlation functions. Our main result (Theorem~\ref{ThmTASEP}) is much more general, since it covers almost all space-time. In particular we can analyze situations where the correlation functions are not explicitly known!

The only directions where the question of the limit process remains open are the characteristic solutions of the Burgers equation associated to TASEP, also called \emph{characteristic lines}~\cite{Var04,Ev98}. Along these space-time lines the appropriate scaling limit is different, because the decorrelation occurs on a much longer time scale compared with the usual decorrelation length. This because second-class particles follow (on a macroscopic scale) exactly these trajectories. This phenomenon was first proven in a PNG model~\cite{Fer08} and it is called \emph{slow decorrelation phenomenon} (recently proven in greater generality in~\cite{CFP10b}, see Proposition~\ref{ThmSlowDec} below for TASEP).

\bigskip
\noindent \textbf{Methods in the proof of Theorem~\ref{ThmTASEP}}\\
The proof of our main result, Theorem~\ref{ThmTASEP}, employs a combination of many of the state-of-the-art methods in the study of TASEP fluctuations (exact determinantal correlation formulas, the connection to last passage percolation, coupling methods, and slow decorrelation). In outlining our proof we also provide a brief review of the literature on these different techniques.

The first step in the proof is to establish a multipoint fluctuation result along a fixed space-time cut for a simpler initial condition corresponding to fixing $\rho_+=0$. For certain space-time cuts there exist exact determinantal expressions for the correlation functions. Specifically, if one considers the fixed-time cut for TASEP (i.e., the joint-distribution of the height at a fixed time) then there is a way to extend~\cite{BFPS06,BFS09} to get the necessary formulas for the correlation functions~\cite{Bor08_privatecomm}. We consider a different cut which corresponds to a directed last passage percolation model with one-sided boundary condition (see Section~\ref{subsectOneSided}). In that case the Schur process gives the multipoint correlation kernel, Equation~(\ref{cor_kernel}) (for details on the Schur process and applications see~\cite{Ok01,OR01,Jo05,BP07}). In the TASEP setting, the Schur process is the process of a given (tagged) particle observed at different times. Using techniques of asymptotic analysis for Fredholm determinants we can extract our desired limit theorems from these formulas. This is done in Section~\ref{subsectProofOneSided} and recorded as Proposition \ref{ThmOneSidedLPPbasic}. As opposed to the related work of~\cite{SI04}, this is the only case for which we must appeal to the exact correlation formulas and take asymptotics.

From this point on our proof relies entirely on probabilistic methods. The only case which is not covered by these other techniques is $\rho_-=\rho_+$ at the characteristic speed, but this was analyzed independently in~\cite{BFP09}. These methods allow us to avoid the more involved shift and analytic continuation arguments of~\cite{BFP09} in addition to the Schur process (see Remark~9 of~\cite{BFP09}).

The first probabilistic method we use is slow decorrelation~\cite{CFP10b} (given as Proposition~\ref{ThmSlowDec}). This implies that the fluctuation limit process for one-sided initial conditions extends away from cut on which it was proved (Proposition~\ref{ThmOneSidedLPP}). In order to bootstrap the one-sided process result to the full two-sided case we appeal to a coupling method introduced in~\cite{BC09}. In fact, versions of this method can be found in the literature in~\cite{AD95,Sep98,Rez02} under the names ``microscopic Lax-Oleinik formula'' or ``strong monotonicity''. The new component offered by the coupling method of~\cite{BC09} is that due to a few (fairly simple) lemmas (see Section~\ref{coupling_lemmas}) one can now prove $t^{1/3}$ fluctuation results. The essential idea behind these methods is that statistics associated to complicated particle system or growth process initial conditions can often be written in terms of statistics associated to simpler particle systems which have been coupled to the original system. Once this connection is in place (for us Section~\ref{proof_two_sided_section}) it is generally possible to translate asymptotic results about the simplier systems into results about the original (more complicated) system. This method is applied in Section~\ref{proof_two_sided_section} to prove Theorem~\ref{ThmTwoSidedLPP} which is the last passage percolation equivalent of our main TASEP result. Finally, in Section~\ref{subsectProofTASEP} we show how to translate this back into a proof of Theorem~\ref{ThmTASEP}.

One-side last passage percolation is closely related to the largest eigenvalue of some complex Wishart (sample covariance) matrices~\cite{BBP06}. Using the connection established in~\cite{BP07,DW08}, we restate our one-sided last passage percolation process result in terms of a random matrix eigenvalue process(see Theorem~\ref{ThmRM}).

\medskip
There are a variety of conjectured results which go under the title of universality. The results of this paper deal with universality of the PNG and continuous time TASEP. However, TASEP is also the extreme case of the partially asymmetric version (PASEP), where particles can jump both left and right with different jump rates. For the one-point distribution function progress in this direction was made in~\cite{FKS91, Fer90,  FF94, FF94b, FK95} in the early 1990s. Very recently, due to the efforts of Tracy and Widom~\cite{TW08c, TW08, TW08b, TW09, TW09b}, Derrida and Gerschenfeld~\cite{DG09}, Bal\'{a}zs and Sepp\"{a}l\"{a}inen~\cite{BS06, BS09}, Quastel and Valk\'{o}~\cite{QV07}, Mountford and Guiol~\cite{MG05} significant progress has been made in answering this question in the general PASEP. Of particular note is the recent result of Tracy and Widom~\cite{TW09b} which shows that the results of~\cite{BC09} for TASEP with two-sided Bernoulli initial conditions extend to the PASEP setting for $\rho_+=0$ and general values of $\rho_-$. It seems hopeful that the integrable systems methods which proved useful in that paper will, eventually be able to deal with general two-sided Bernoulli initial conditions as well as multi-point distribution functions. With that eventuality in mind, this paper should serve as a guide in that pursuit.

In addition to extending TASEP results to the context of the PASEP, Tracy and Widom's formula has played a prominent role in the long sought after calculation~\cite{ACQ10,SS10} of the one-point function for the KPZ stochastic PDE with narrow wedge initial condition (for an alternative approach using the replica trick see also~\cite{CDR10,Dot10}).

\subsection*{Acknowledgments}
We wish to thank Jinho Baik for helpful discussions related to this material. Some of these discussions occurred during the MSRI Random Matrix Theory workshop during the summer of 2009. I.~Corwin would like to thanks G\'erard Ben~Arous for introducing him to the study of TASEP fluctuations. His work is partially funded by the NSF Graduate Research Fellowship and also has received travel funding from the PIRE grant OISE-07-30136. S.~P\'ech\'e would like to thank Herv\'e Guiol for useful discussion on TASEP and her work is partially supported by the Agence Nationale de la Recherche grant ANR-08-BLAN-0311-01.

\section{Results}\label{sectResults}
Here we present the limit results, first for TASEP, then for last passage percolation and we end with random matrices. The limit processes in the following statements are defined in Section~\ref{subsectDefProc}.

\subsection{Continuous time TASEP}\label{subsectTASEP}
We want to analyze the fluctuations of the height function (\ref{height_function}) with respect to the macroscopic behavior. Thus, the first quantity we need to determine is the limit shape
\begin{equation}
h_{\rm ma}(\xi):=\lim_{t\to\infty}\frac{1}{t}h_t(\lfloor \xi t\rfloor)
\end{equation}
which can be obtained by integrating the asymptotic macroscopic density of particles, $\varrho(\xi,\tau)$, given heuristically by
\begin{equation}
\varrho(\xi,\tau):=\lim_{T\to\infty}\Pb(\textrm{there is a particle at }[\xi T]\textrm{ at time }\tau T).
\end{equation}
The average current of particles for a density $\varrho$ is $\varrho(1-\varrho)$, thus $\varrho$ satisfies Burgers equation~\cite{Rez91}
\begin{equation}
\partial_\tau\varrho+\partial_\xi(\varrho(1-\varrho))=0.
\end{equation}
The initial condition $\varrho(\xi,0)=\rho_-$ for $\xi<0$ and $\varrho(\xi,0)=\rho_+$ for $\xi>0$ gives:\\
(a) for $\rho_- \geq \rho_+$,
\begin{equation}
\varrho(\xi,1)=\begin{cases}
\rho_-    &\textrm{for }\xi\leq 1-2\rho_-,\\
(1-\xi)/2 &\textrm{for }\xi\in [1-2\rho_-,1-2\rho_+],\\
\rho_+    &\textrm{for }\xi\geq 1-2\rho_+,
\end{cases}
\end{equation}
(b) while for $\rho_- < \rho_+$,
\begin{equation}\label{eqn2.5}
\varrho(\xi,1)=\begin{cases}
\rho_-    &\textrm{for }\xi< 1-(\rho_- + \rho_+),\\
\rho_+    &\textrm{for }\xi> 1-(\rho_- + \rho_+).
\end{cases}
\end{equation}
The characteristic lines\footnote{These characteristics are the ones coming from the entropy condition~\cite{Var04,Ev98}.}, $(t,x(t))_{t\geq 0}$, of the Burgers equation with constant density $\rho$ are straight lines with speed $1-2\rho$: $\{x(t)-x(0)=(1-2\rho)t, t\geq 0\}$. In the case of non-constant density (see case~(a)), then all the rays leaving from the origin with speed $\xi\in [1-2\rho_-,1-2\rho_+]$ are also characteristic lines (see Figure~\ref{FigTwoSided}).

Translated into the limit shape using (\ref{height_function}), one obtains:\\
(a) for $\rho_- \geq \rho_+$,
\begin{equation}\label{eq2.6}
h_{\rm ma}(\xi)=\begin{cases}
2\rho_-(1-\rho_-)+(1-2\rho_-)\xi &\textrm{for }\xi\leq 1-2\rho_-,\\
(1+\xi^2)/2 &\textrm{for }\xi\in [1-2\rho_-,1-2\rho_+],\\
2\rho_+(1-\rho_+)+(1-2\rho_+)\xi    &\textrm{for }\xi\geq 1-2\rho_+,
\end{cases}
\end{equation}
(b) while for $\rho_- < \rho_+$,
\begin{equation}
h_{\rm ma}(\xi)=\begin{cases}
2\rho_-(1-\rho_-)+(1-2\rho_-)\xi    &\textrm{for }\xi< 1-(\rho_- + \rho_+),\\
2\rho_+(1-\rho_+)+(1-2\rho_+)\xi    &\textrm{for }\xi> 1-(\rho_- + \rho_+).
\end{cases}
\end{equation}
This is illustrated in Figure~\ref{FigMacroTASEP}.
\begin{figure}[t!]
\begin{center}
\psfrag{(a)}[c]{(a)}
\psfrag{(b)}[c]{(b)}
\psfrag{x}[c]{$\xi$}
\psfrag{rho}[l]{$\varrho(\xi,1)$}
\psfrag{rho-}[l]{$\rho_-$}
\psfrag{rho+}[l]{$\rho_+$}
\psfrag{hma}[l]{$h_{\rm ma}(\xi)$}
\psfrag{xim}[c]{$\xi_-$}
\psfrag{xip}[c]{$\xi_+$}
\psfrag{xis}[c]{$\xi_s$}
\psfrag{B}[c]{$\mathcal{B}$}
\psfrag{B'}[c]{$\mathcal{B}'$}
\psfrag{A2}[l]{$\mathcal{A}_2$}
\psfrag{BM2}[l]{$\mathcal{A}_{{\rm BM}\to 2}$}
\psfrag{BM2'}[l]{$\mathcal{A}_{2\to{\rm BM}}$}
\includegraphics[width=12cm]{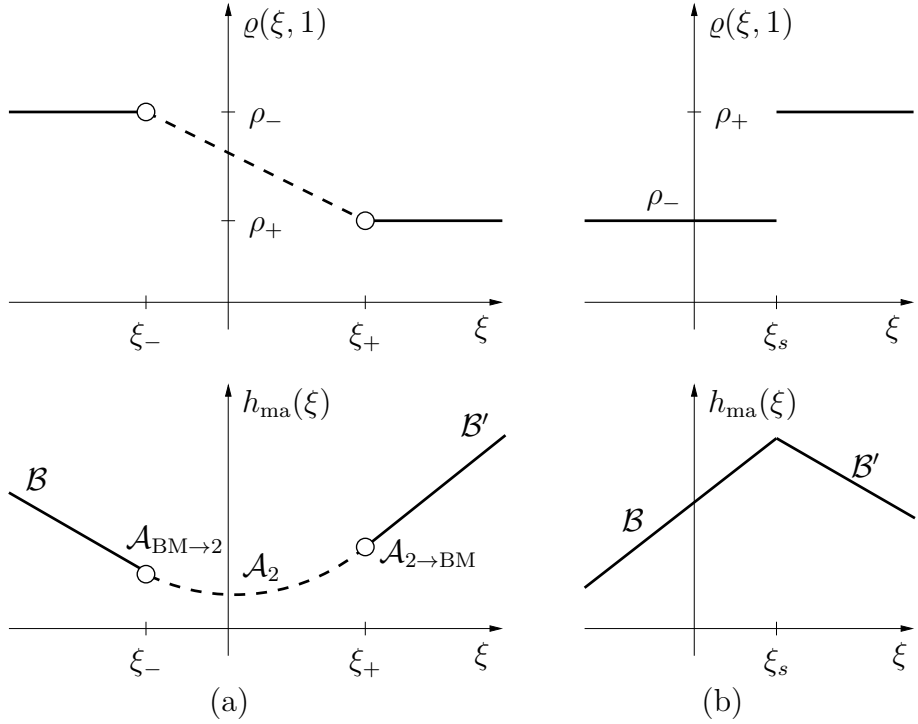}
\caption{The asymptotic density $\varrho$ and the limit shape in the cases (a) $\rho_->\rho_+$ and (b) $\rho_-<\rho_+$. Transitions happen at $\xi_\pm=1-2\rho_\pm$ and shockwave at $\xi_s=1-(\rho_-+\rho_+)$. The fluctuations processes are also indicated: $\mathcal{B}$ and $\mathcal{B}'$ two independent standard Brownian motions, $\mathcal{A}_2$ the Airy$_2$ process, $\mathcal{A}_{{\rm BM}\to 2}$ is the transition process from Brownian behavior to Airy$_2$ process, and $\mathcal{A}_{2\to{\rm BM}}$ is its time-reversed version. See Section~\ref{subsectDefProc} for definitions.}
\label{FigMacroTASEP}
\end{center}
\end{figure}

For simplicity, we discuss the fixed time fluctuation results of our main Theorem~\ref{ThmTASEP}, which is however more general and holds for unequal times too. Concerning the fluctuations for $\rho_-\neq \rho_+$, if we focus around a macroscopic position $\xi t$ we have:\\[0.5em]
Case (a) and $1-2\rho_-<\xi<1-2\rho_+$: the limit shape is curved and the behavior is like the one of step-initial condition, namely, for large time $t$ the fluctuations scale as $t^{1/3}$, correlations as $t^{2/3}$, and the multi-point statistics are governed by the Airy$_2$ process, $\mathcal{A}_2$. More precisely, there are two coefficients $\kappa_h=(2(1-\xi)^2)^{1/3}$ and $\kappa_v=-(1-\xi^2)^{2/3}/2^{1/3}$ depending only on $\xi$ (compare with (\ref{eq2.8}) below) such that for large time $t$,
\begin{equation}\label{eq2.8b}
h(\xi t+\tau \kappa_h t^{2/3})\simeq t h_{\rm ma}(\xi+\tau \kappa_h t^{-1/3})+\kappa_v {\cal A}_2(\tau) t^{1/3}.
\end{equation}
Case (a) and $\xi=1-2\rho_-$ (or $\xi=1-2\rho_+$): the influence of the randomness in the initial condition and the randomness built up by the dynamics are of the same order. The fluctuations of $h_t$ are on the $t^{1/3}$ scale with correlation scale $t^{2/3}$ and are governed by a transition process $\mathcal{A}_{{\rm BM}\to 2}$ between Brownian Motion behavior and the Airy$_2$ process.\\[0.5em]
Case (a) and $\xi<1-2\rho_-$ (or $\xi>1-2\rho_+$) or Case (b) away from the shock position: the influence of the initial randomness dominates and one has simply Brownian Motion (with fluctuations scale $t^{1/2}$ and correlation scale $t$). The two sides are asymptotically independent. \\[0.5em]
Case (b) at the shock position: the statistics of the height function is influenced by both the right and left particle densities.\\[0.5em]
The stationary case, $\rho_-=\rho_+=\rho$ was analyzed already in~\cite{BFP09} (see Theorem~1.7 therein). At $\xi=1-2\rho$ the fluctuations are of order $t^{1/3}$ and there is a transition process ${\cal A}_{\rm stat}$ over a distance of order $t^{2/3}$ to the Gaussian behavior.

In a related model~\cite{Fer08} the following slow decorrelation phenomenon was noticed: along the characteristic lines the height-height correlations live on a longer space-time scale than the fixed-time correlation scale. For instance, in the rarefaction fan, the height function at two space-time points on the same characteristic line, the first at time $T$ and the second at time $T+T^\nu$, $\nu<1$, will differ by a deterministic factor (speed of growth $\times$ $T^\nu$) plus $o(T^{1/3})$. This means that the two height functions (centered and rescaled by $T^{-1/3}$) are asymptotically the same random variable (i.e., they are perfectly correlated on the $T^{1/3}$ scale). The proof in~\cite{Fer08} uses several results of other papers and a considerable amount of work is needed to reproduce them for other models. While looking for a proof for the TASEP, we discovered a much simpler proof, which applies not only to TASEP but to a large number of models in the KPZ class, see~\cite{CFP10b}. The statement for TASEP is reported in Proposition~\ref{ThmSlowDec}.

This allows us to extend the fixed-time statement to space-time. This is the reason for the following limit theorem: for $\xi\in [1-2\rho_-,1-2\rho_+]$, let us set
\begin{equation}\label{eq2.8}
\begin{aligned}
X(\tau,\theta)&=\lfloor\xi(T+\theta T^\nu)+\tau (2(1-\xi^2))^{1/3} T^{2/3}\rfloor,\\
H(\tau,\theta,s)&=\frac{1+\xi^2}{2}(T+\theta T^\nu)+\xi\tau (2(1-\xi^2))^{1/3} T^{2/3}+(\tau^2-s) \frac{(1-\xi^2)^{2/3}}{2^{1/3}} T^{1/3}.
\end{aligned}
\end{equation}
The value of $H(\tau,\theta,0)$ is a generalization of the term $t h_{\rm ma}(\xi+\tau \kappa_h t^{-1/3})$ in (\ref{eq2.8b}), namely the macroscopic approximation. Indeed,
\begin{equation}
H(\tau,\theta,0)=(T+\theta T^\nu) \frac12 \bigg(1+\left(\frac{X(\tau,\theta)}{T+\theta T^\nu}\right)^2\bigg)+o(T^{1/3}),
\end{equation}
compare with (\ref{eq2.6}), while $H(\tau,\theta,s)-H(\tau,\theta,0)$ measures the fluctuations. The definitions of the limit processes occurring in the following theorem are collected in Section~\ref{subsectDefProc}.

\begin{thm}\label{ThmTASEP}
(a) Fix $m\in \N$, $\nu\in [0,1)$, $\xi\in\R$, and $\rho_+\in (0,1]$, $\rho_-\in[0,1)$.
Then, for any choice of real numbers $\tau_1<\tau_2<\ldots,\tau_m$, $\theta_1,\ldots,\theta_m$, and $s_1,\ldots,s_m$, we have:\\
(a1) If $\rho_+<\rho_-$ and $\xi\in (1-2\rho_-,1-2\rho_+)$, then
\begin{equation}
\lim_{T\to\infty} \Pb\left(\bigcap_{k=1}^{m} \{h_{T+\theta_k T^\nu}(X(\tau_k,\theta_k))\geq H(\tau_k,\theta_k,s_k)\}\right) = \Pb\left(\bigcap_{k=1}^{m}\{\mathcal{A}_2(\tau_k)\leq s_k\}\right).
\end{equation}
(a2) If $\rho_+<\rho_-$ and $\xi=1-2\rho_-$, then
\begin{equation}
\lim_{T\to\infty} \Pb\left(\bigcap_{k=1}^{m} \{h_{T+\theta_k T^\nu}(X(\tau_k,\theta_k))\geq H(\tau_k,\theta_k,s_k)\}\right) = \Pb\left(\bigcap_{k=1}^{m}\{\mathcal{A}_{{\rm BM}\to 2}(\tau_k)\leq s_k\}\right).
\end{equation}
(a3) If $\rho_+=\rho_-\equiv \rho$ and $\xi=1-2\rho$, then
\begin{equation}\label{eq2.11}
\lim_{T\to\infty} \Pb\left(\bigcap_{k=1}^{m} \{h_{T+\theta_k T^\nu}(X(\tau_k,\theta_k))\geq H(\tau_k,\theta_k,s_k)\}\right) = \Pb\left(\bigcap_{k=1}^{m}\{\mathcal{A}_{\rm stat}(\tau_k)\leq s_k+\tau_k^2\}\right).
\end{equation}
(b) Fix $m\in \N$, and $\rho_+\in (0,1]$, $\rho_-\in[0,1)$. Fix $m_l,m_s,m_r\in \Z_+^*$ and set $m=m_l+m_s+m_r$ and real numbers $\theta_1,\ldots, \theta_m$. Consider a set of $m$ space-time points with macroscopic coordinates $(\xi_i \theta_i T,\theta_i T)$. Let
$m_l$ of the points be such that $\xi_i<1-\rho_- - \max\{\rho_+,\rho_-\}$, $m_b$ such that $\xi_i>1-\rho_+ - \min\{\rho_+,\rho_-\}$, and $m_s$ on the shockwave ($\xi_i=1-\rho_+-\rho_-$ if $\rho_+>\rho_-$). Then
\begin{equation}\label{eq2.12}
\begin{aligned}
\lim_{T\to \infty} &\Pb\left(\bigcap_{k=1}^{m} \{h_{\theta_k T}(\xi_k \theta_k T)\geq h_{\rm ma}(\xi_k)\theta_k T-2 s_k T^{1/2}\}\right)\\
=& \Pb\left(\bigcap_{k=1}^{m_l+m_s}\left\{\mathcal{B}\left(\theta_k(1-2\rho_--\xi_k)(\rho_-(1-\rho_-))\right)\leq s_k\right\}\right) \\
\times &\Pb\left(\bigcap_{k=m_l+1}^m\left\{\mathcal{B}'\left(\theta_k(\xi_k+2\rho_+-1)(\rho_+(1-\rho_+)\right)\leq s_k\right\}\right).
\end{aligned}
\end{equation}
where $\mathcal{B}$ and $\mathcal{B}'$ are two independent copies of Brownian Motion.
\end{thm}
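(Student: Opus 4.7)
My plan is to reduce the TASEP height function statements to a last passage percolation (LPP) model, prove the result first for a one-sided initial condition along a convenient space-time cut (where an exact determinantal/Schur-process formula is available), and then use two soft tools to bootstrap up to the full two-sided statement on arbitrary space-like (and beyond) cuts: the slow-decorrelation proposition \ref{ThmSlowDec} to deform the admissible cut, and the coupling technique of \cite{BC09} (``microscopic Lax--Oleinik'') to pass from one-sided to two-sided initial data.

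First I would set up the usual bijection between the continuous-time TASEP height function started from two-sided Bernoulli data and last passage times in a directed percolation model with two-sided boundary weights (exponential rates $1-\rho_-$ along one column-boundary and $\rho_+$ along one row-boundary, with i.i.d.\ unit exponentials in the bulk). The events $\{h_{T+\theta_k T^\nu}(X(\tau_k,\theta_k))\geq H(\tau_k,\theta_k,s_k)\}$ translate, via the standard identification, into events of the form $\{L(\mathbf p_k)\leq \ell_k\}$ for LPP last passage times at macroscopic points $\mathbf p_k$, with the centering and scaling corresponding to the KPZ exponents $1/3$ and $2/3$. This converts Theorem~\ref{ThmTASEP} into the joint-distribution statement Theorem~\ref{ThmTwoSidedLPP}.

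Next I would handle the one-sided case (take the right-boundary source parameter to $0$, so only the left density $\rho_-$ remains as a parameter on the boundary) and on the single space-time cut corresponding to a \emph{tagged particle} trajectory. On this cut the joint law of the last passage times at the selected points is exactly the Schur process, and its correlation functions are an explicit determinantal point process with the kernel recalled in (\ref{cor_kernel}). The multipoint statement is then a Fredholm determinant and it reduces to a steepest-descent analysis as $T\to\infty$, yielding in the scaling window of (\ref{eq2.8}) the Airy$_2$ process (interior of the fan), the transition process $\mathcal{A}_{{\rm BM}\to 2}$ (edge), and in the $\rho_-=\rho_+$ critical limit the process $\mathcal{A}_{\rm stat}$ (via the result of \cite{BFP09}). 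This gives Proposition~\ref{ThmOneSidedLPPbasic}. Then I would invoke slow decorrelation (Proposition~\ref{ThmSlowDec}) to drag this limiting process along the characteristic direction: given that fluctuations decorrelate on a time scale $T$ but grow only on scale $T^{1/3}$, points at space-time separation $T^\nu$ with $\nu<1$ along the characteristic give the same $T^{1/3}$-fluctuations, which upgrades the fixed-cut result to all space-like cuts and, crucially, to the three-parameter family $(\tau_k,\theta_k,s_k)$ of Theorem~\ref{ThmTASEP}(a), away from the characteristic itself.

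Now to reach the two-sided case I would follow \cite{BC09} and express two-sided LPP as a maximum/coupling of last passage times in auxiliary one-sided systems built on the same environment. The elementary ``coupling lemmas'' of Section~\ref{coupling_lemmas} show that away from the shock/rarefaction-fan edges the two-sided last passage time coincides with high probability with the one-sided one on the relevant boundary, while the corrections are strictly sub-leading on the $T^{1/3}$ scale; this transports the Airy$_2$/$\mathcal{A}_{{\rm BM}\to 2}$ limits of the one-sided model to the two-sided model. For case (b) of the theorem, the shock position $\xi_s=1-\rho_--\rho_+$, the coupling identifies the height fluctuations with those coming from counting Bernoulli$(\rho_\pm)$ variables in the initial data over distances of order $T$, hence the $T^{1/2}$ scale and the Brownian limits $\mathcal{B},\mathcal{B}'$ with variances $\rho_\pm(1-\rho_\pm)$; independence of the two Brownians on the two sides of the shock follows from the fact that the backward characteristics from left-of-shock and right-of-shock macroscopic points hit disjoint regions of the initial Bernoulli configuration with probability tending to $1$. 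Finally, in Section~\ref{subsectProofTASEP} one translates the LPP statement back into the TASEP height function statement via the bijection used in Step~1.

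The main obstacle I expect is the combined use of the coupling with slow decorrelation for the multipoint statement: individual one-point tightness and convergence are easier, but showing that the auxiliary one-sided systems controlling the $m$ different space-time points can be aligned \emph{simultaneously} (so that the joint distribution and not only the marginals converge) requires showing that the ``location of the argmax'' in the coupling identity concentrates on the correct macroscopic side of each $\xi_k$ uniformly over the $m$ points; this is precisely what the lemmas of Section~\ref{coupling_lemmas} are designed to control, but one must be careful to apply them in the right order along the time-ordering $\tau_1<\dots<\tau_m$ and with the right $T^{2/3}$-separation to preserve the joint distribution limit. Everything else — the Schur-process steepest descent and the Burgers/characteristic geometry — is either available in the literature or a routine adaptation.
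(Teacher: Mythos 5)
Your proposal follows essentially the same route as the paper: TASEP--LPP bijection, Schur-process steepest descent for the one-sided model on a fixed cut (Proposition~\ref{ThmOneSidedLPPbasic}), slow decorrelation (Proposition~\ref{ThmSlowDec}) to move off that cut, the coupling lemmas of~\cite{BC09} to pass to two-sided data, and translation back to the height function. Your worry in the last paragraph about time-ordering and $T^{2/3}$-separation is in fact unnecessary: the paper's Lemma~\ref{multipoint_marginal_lemma} resolves the multipoint transfer purely from coordinate-wise stochastic domination plus marginal convergence (via Lemma~\ref{stoch_dom_convergence}), with no concentration-of-argmax argument needed.
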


\begin{remark}
Although in the statement we fix $\theta_1,\ldots,\theta_m$, the same holds true if they depend on $T$ provided that they are uniformly bounded in $T$. What we need is that there exists a $\nu<1$ such that $\lim_{T\to\infty}\ln(|\theta_k T^\nu|)/\ln(T)<1$. For instance, we can take $\theta_k T^\nu=\tilde \theta_k T^{2/3}$ with $\tilde\theta_k$ fixed real numbers.
\end{remark}

\begin{remark}
The case $\rho_+<\rho_-$ and $\xi=1-2\rho_+$ can be recovered from (a2) by particle-hole symmetry. The entries in the Brownian motions in (\ref{eq2.12}) are (proportional to) the projections of the space-time points to time $t=0$ along the characteristics to the initial conditions; the proportionality takes just into account the variance of the random walk of the initial condition. This is illustrated in Figure~\ref{FigTwoSided}.
\end{remark}

\begin{figure}[t!]
\begin{center}
\psfrag{(a)}[c]{(a)}
\psfrag{(b)}[c]{(b)}
\psfrag{x}[c]{$x$}
\psfrag{t}[c]{$t$}
\includegraphics[height=5cm]{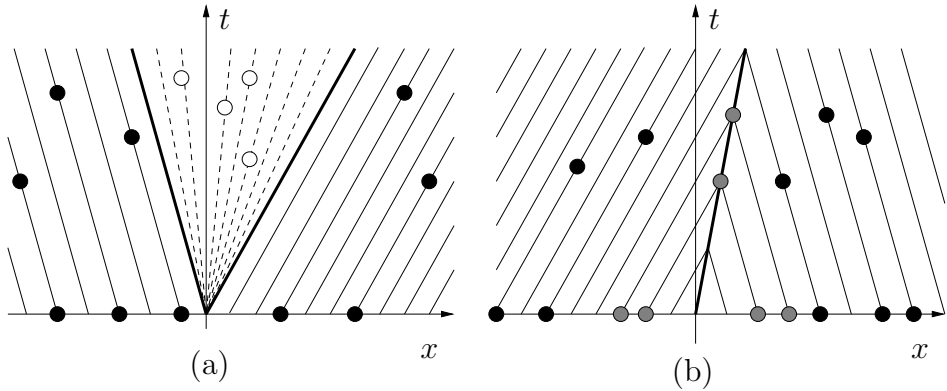}
\caption{Illustration of the characteristic lines for (a) $\rho_->\rho_+$ and (b) $\rho_-<\rho_+$. The fluctuations of the black points depend on fluctuation of their projections to the $t=0$ line. Points on the shockwave, the gray points, depend on the projections on the two directions. Finally, the fluctuations in the rarefaction fan (white points) do not depend on the initial randomness.}
\label{FigTwoSided}
\end{center}
\end{figure}

The proof of Theorem~\ref{ThmTASEP} is in Section~\ref{subsectProofTASEP}. It is a consequence of the corresponding result for last passage percolation (see Theorem~\ref{ThmTwoSidedLPP}), together with the slow decorrelation phenomena (see Proposition~\ref{ThmSlowDec}). The stationary case, (a3), was analyzed in~\cite{BFP09} (see Theorem~1.7 therein); the $\tau_k^2$ term in (\ref{eq2.11}) compensates the fact that the scaling (\ref{eq2.8}) is not following the (straight) limit shape approximation. For presentation simplicity in~\cite{BFP09} only the fixed-time result was stated, but slow decorrelation allow immediately to extend it as in Theorem~\ref{ThmTASEP}.

\subsection{Directed percolation}\label{subsectLPP}
In Section~\ref{sectConnection} we explain the precise connection between TASEP and last passage percolation (LPP). In Proposition~\ref{ThmOneSidedLPP} we give the asymptotics for one-sided LPP, which uses the determinantal structure of the Schur process and slow decorrelation. The extension to two-sided boundary conditions via coupling arguments is stated in Theorem~\ref{ThmTwoSidedLPP}.

\subsubsection{Connection with TASEP}\label{sectConnection}
We define a directed last passage percolation model by assigning random waiting times $w_{i,j}$ to each site $(i,j)$ in $(\Z_+^*)^2$ ($\Z_+^*=\{0,1,\ldots\}$). We require $w_{i,j}$'s are independent and exponentially distributed variables\footnote{We use the notation ${\rm Exp}(m)$ for a random variable which is exponentially distributed with mean $m$.} (to be specified below). To every directed (up/right only) path $\pi$ from $(0,0)$ to $(x,y)$ we associate the waiting time $T(\pi)=\sum_{(i,j)\in \pi} w_{i,j}$. Then, the last passage time from $(0,0)$ to $(x,y)$ is the longest waiting time over all directed paths:
\begin{equation}
L(x,y) = \max_{\pi:(0,0)\to (x,y)} T(\pi).
\end{equation}

It is well known that the height function for TASEP with our initial condition is expressible in terms of a LPP model. Let us shortly recall this connection which is established in full generality in~\cite{PS01} and briefly reexplained~\cite{BFP09}, extending the step-initial condition case considered in~\cite{Jo00b}. We label particles from right to left and denote by $\textbf{x}_k(t)$ the position of particle $k$ at time $t$. We set the label so that $\cdots<\textbf{x}_{2}(0)<\textbf{x}_1(0)<0\leq \textbf{x}_0(0)<\textbf{x}_{-1}(0)<\cdots$.\\
(a) For $i,j\geq 1$, $w_{i,j}$ is the waiting time that particle $j$ jump from site $i-j-1$ to site $i-j$ (of course, the waiting time counted from the instant where site $i-j$ is empty). Thus, $w_{i,j}\sim {\rm Exp}(1)$ random variables, $i,j\geq 1$.\\
(b) On the other hand, the effect on the dynamics on $\textbf{x}_0(t)$ due to the particles on the its right is equivalent to set the jump rate of particle $0$ to be $1-\rho_+$ instead of $1$. This is a consequence of Burke's Theorem~\cite{Bur56}. Therefore we set $w_{i,0}\sim {\rm Exp}(1/(1-\rho_+))$ for $i> \textbf{x}_0(0)$ and $w_{i,1}=0$ otherwise.\\
(c) By looking at the particle-hole transformation, we set $w_{1,j}\sim {\rm Exp}(1/\rho_-)$ for $j\geq -\textbf{x}_1(0)$ and $w_{0,j}=0$ otherwise. Finally, we set $w_{0,0}=0$.

With this settings, the correspondence between last passage time, particle positions, and height function is the following: for $x_k,y_k\geq 1$, $t_k>0$, we have
\begin{equation}\label{eqTASEPvsDP}
\begin{aligned}
\Pb\left(\cap_{k=1}^m\{L(x_k,y_k)\leq t_k\}\right)
& =\Pb\left(\cap_{k=1}^m\{\textbf{x}_{y_k}(t_k)\geq x_k-y_k\}\right) \\
&= \Pb\left(\cap_{k=1}^m\{h_{t_k}(x_k-y_k)\geq x_k+y_k\}\right).
\end{aligned}
\end{equation}
Since $\textbf{x}_0(0)\sim {\rm Geom}(1-\rho_+)$ and $-(1+\textbf{x}_1(0))\sim {\rm Geom}(\rho_-)$, we can set $w_{i,1}\sim {\rm Exp}(1/(1-\rho_+))$ for all $i\geq 1$, and $w_{1,j}\sim {\rm Exp}(1/\rho_-)$ for all $j\geq 1$ without changing the large time asymptotics (see e.g.\ Proposition 2.2 in~\cite{FS05a}) (but keeping $w_{0,0}=0$).

\subsubsection{One-sided LPP}\label{subsectOneSided}
As briefly mentioned in the introduction, the proof of our result uses a mixture of analytic and probabilistic methods. On the analytic side, we have to analyze the following LPP model (referred as one-sided LPP):
\begin{equation}\label{eqWaitTimeLPPone}
w_{i,j} = \begin{cases}
\textrm{Exp}(1) &\textrm{for } i,j\geq 1,\\
\textrm{Exp}(1/\eta) &\textrm{for } i=0,j\geq 1,\\
0 &\textrm{for } i\geq 0,j=0,
\end{cases}
\end{equation}
where $\eta\in (0,1]$ is a constant. We denote by $L_1$ the last passage time for the waiting times (\ref{eqWaitTimeLPPone}), where $1$ stands for one-sided. This problem is related to TASEP with $\rho_+=0$ and $\rho_-=\eta$. Moreover, see Section~\ref{subsectRM}, the statistics of $L_1$ are related to those of the largest eigenvalue of a perturbed Wishart (sample covariance) matrix.

To set the scaling variables, we need an expression for the limit shape. Let us focus along the line $y=\gamma^2 x$. There are two cases:\\[0.5em]
(a) for $\gamma(1+\gamma)^{-1}\leq\eta\leq 1$,
\begin{equation}\label{eqLimShapeRM}
\lim_{T\to\infty} \frac{1}{T} L_1(\xi T,\xi \gamma^2 T)=\xi(1+\gamma)^2,
\end{equation}
(b) for $0<\eta\leq \gamma(1+\gamma)^{-1}$,
\begin{equation}\label{eqLimShapeGauss}
\lim_{T\to\infty} \frac{1}{T} L_1(T,\gamma^2 T)=\frac{1}{1-\eta}+\frac{\gamma^2}{\eta}.
\end{equation}

In the regime where the limit shape is (\ref{eqLimShapeRM}), the fluctuations of $L_1$ are of random matrix type with correlation on the $T^{2/3}$ scale and fluctuations in the $T^{1/3}$ scale. Therefore we introduce the scaling
\begin{equation}\label{scaling0}
\begin{aligned}
x(\tau)&=\left\lfloor \frac{1}{(1+\gamma)^2}T + \frac{2\tau}{(1+\gamma)^{2/3}\gamma^{2/3}}T^{2/3}\right\rfloor,\\
y(\tau)&=\left\lfloor \frac{\gamma^2}{(1+\gamma)^2} T\right\rfloor,\\
\ell(\tau,s)&= T+ \frac{2\tau(1+\gamma)^{1/3}}{\gamma^{2/3}} T^{2/3} +(s-\tau^2)\frac{(1+\gamma)^{2/3}}{\gamma^{1/3}}T^{1/3},
\end{aligned}
\end{equation}
where the parameter $s$ is a measure of the fluctuations with respect to $\ell(\tau,0)$, that is what we expect to see from (\ref{eqLimShapeRM}). Under this scaling, the height fluctuations are governed by the Airy$_2$ process, $\mathcal{A}_2$, up to the critical value $\eta=\gamma(1+\gamma)^{-1}$ where there is a transition process, $\mathcal{A}_{{\rm BM}\to 2}$, to the Brownian motion behavior. In the regime where the limit shape is (\ref{eqLimShapeGauss}), the fluctuations will be governed by the boundary sources. They have fluctuation on the $T^{1/2}$ scale, correlation length of order $T$ and limit process the Brownian Motion, $\mathcal{B}$. This is precisely stated in following theorem.

\begin{prop}\label{ThmOneSidedLPPbasic}
Let $\mathcal{A}_2$, $\mathcal{A}_{{\rm BM}\to 2}$ and $\mathcal{B}$ be the processes defined in Section~\ref{subsectDefProc}. \\
(a) Fix $m\in \N$, $\eta\in (0,1]$ and $\gamma\in (0,\infty)$ with $\eta\geq \gamma (1+\gamma)^{-1}$. Then, for any given $\tau_1<\tau_2<\ldots<\tau_m$ and $s_1,\ldots,s_m\in\R$, we have:\\
(a1) if $\eta> \gamma (1+\gamma)^{-1}$, then
\begin{equation}
\lim_{T\to\infty} \Pb\left(\bigcap_{k=1}^{m} \{L_1(x(\tau_k),y(\tau_k))\leq \ell(\tau_k,s_k)\}\right) =
\Pb\left(\bigcap_{k=1}^{m}\{\mathcal{A}_2(\tau_k)\leq s_k\}\right),
\end{equation}
(a2) while if $\eta=\gamma (1+\gamma)^{-1}$, then
\begin{equation}
 \lim_{T\to\infty} \Pb\left(\bigcap_{k=1}^{m} \{L_1(x(\tau_k),y(\tau_k))\leq \ell(\tau_k,s_k)\}\right) = \Pb\left(\bigcap_{k=1}^{m}\{\mathcal{A}_{{\rm BM}\to 2}(\tau_k)\leq s_k\}\right).
\end{equation}
(b) Fix $m\in \N$, $\eta\in (0,1]$. Then, for any given $\gamma_1<\gamma_2<\ldots<\gamma_m$ such that $\eta<\gamma_1(1+\gamma_1)^{-1}$, and $s_1,\ldots,s_m\in\R$, we have
\begin{multline}
\lim_{T\to\infty}\Pb\left(\bigcap_{k=1}^{m} \left\{L_1(T,\gamma_k^2 T)\leq \left(\frac{\gamma_k^2}{\eta}+\frac{1}{1-\eta}\right) T+ s_k T^{1/2}\right\}\right)\\
=\Pb\left(\bigcap_{k=1}^{m}\left\{\mathcal{B}\left(\left[\frac{\gamma_k^2}{\eta^2}-\frac{1}{(1-\eta)^2}\right]\right) \leq s_k\right\}\right).
\end{multline}
\end{prop}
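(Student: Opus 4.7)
The proof relies on the Schur process determinantal structure for the one-sided model, followed by a saddle-point asymptotic analysis of the correlation kernel whose outcome depends on the position of the boundary parameter $\eta$ relative to the critical value $\gamma(1+\gamma)^{-1}$. The key observation is that, because in (a) only the $x$-coordinate varies with $\tau$ (the $y$-coordinate $y(\tau) = \lfloor \gamma^2 T/(1+\gamma)^2\rfloor$ is constant), the joint distribution of $L_1(x(\tau_k),y(\tau_k))$ at different $\tau_k$ is the distribution of a tagged TASEP particle at several times. By the discussion in Section~\ref{sectConnection}, this is the Schur process, whose multi-point correlations are determinantal with an explicit double-contour integral kernel (the factor $(1-w/\eta)^{-1}$ encodes the boundary source). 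Part (b) is handled similarly with the roles of the two coordinates interchanged (or, alternatively, by a direct probabilistic argument, see below).

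The first step is to write
\begin{equation*}
\Pb\left(\bigcap_{k=1}^{m} \{L_1(x(\tau_k),y(\tau_k))\leq \ell(\tau_k,s_k)\}\right) = \det(\Id-\chi K_T \chi)_{L^2(\{\tau_1,\ldots,\tau_m\}\times\R)}
\end{equation*}
where $\chi$ is the appropriate projection and $K_T$ is the Schur process kernel~(\ref{cor_kernel}). The second and main step is to perform steepest descent on the double integral. Writing the integrand as $e^{T(G(w)-G(z))}/(w-z)$ times the boundary factor $(1-w/\eta)^{-1}$ and the conjugation factor, I find the critical point $z_c$ of $G$ at the location dictated by the macroscopic direction $\gamma$; the scaling in~(\ref{scaling0}) is designed exactly so that a local change of variables $w=z_c+W/T^{1/3}$, $z=z_c+Z/T^{1/3}$ rescales $G$ to an Airy phase $W^3/3-s W$ minus $Z^3/3-s Z$. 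In regime (a1) the pole $w=\eta$ lies strictly outside the deformed contour, so it contributes nothing and the limit is the extended Airy$_2$ kernel. In regime (a2) the pole $\eta$ coincides with $z_c$ and must be crossed during the contour deformation; the residue contribution produces precisely the extra rank-one modification that defines $\mathcal{A}_{{\rm BM}\to 2}$. In regime (b) the pole lies on the wrong side of the saddle and dominates: one deforms the contour to collect the residue, whose resulting one-dimensional integral is evaluated by a classical saddle point and yields the Gaussian (Brownian) kernel with covariance $[\gamma_k^2/\eta^2-1/(1-\eta)^2]$.

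To upgrade pointwise convergence of kernels to convergence of Fredholm determinants, one needs uniform exponential decay estimates on $K_T$ in the variable $s$. These follow from standard moderate-deviation bounds on $\Re G$ along the steepest descent contours, arguments which are essentially already in the literature (e.g.\ the papers cited for the Schur process and for one-sided LPP, including~\cite{BBP06,BFP09,BFS09,BP07}); I will import those estimates and verify that the contours can be chosen uniformly in $T$ and in the external parameters $\tau_k, s_k$ within compact sets. An alternative, shorter route for part (b) is to note that in the supercritical regime the maximizing path spends a macroscopic length $\Theta(T)$ on the $y$-axis collecting $\text{Exp}(1/\eta)$ weights, and a $\Theta(T)$ bulk portion whose fluctuations are only $O(T^{1/3})$; a CLT for the boundary sum plus the slow-decorrelation Proposition~\ref{ThmSlowDec} then yields the joint Brownian limit. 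I will use whichever approach is cleanest.

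The main obstacle is the critical case (a2): the saddle-point and the pole of $(1-w/\eta)^{-1}$ coalesce, so the standard Laplace argument must be refined to keep track of the residue during the contour deformation and to obtain the correct transition kernel in the scaling limit, uniformly in the space parameters $\tau_k$. This requires a careful choice of steepest descent contours that pass through $z_c$ at the right angle while staying a distance of order $T^{-1/3}$ from the pole, together with a subdivision of the contours into a local $|w-z_c|\leq \delta$ piece (giving the limiting kernel) and a global piece (controlled by the exponential decay of $\Re G$). Once these pieces are in place, assembling the Fredholm determinant limits is routine.
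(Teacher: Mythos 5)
Your plan mirrors the paper's proof quite closely: both start from the Schur-process determinantal formula (\ref{cor_kernel}), perform a steepest-descent analysis of the double-contour kernel, and identify the trichotomy (a1)/(a2)/(b) with whether the critical point of the phase lies strictly away from, exactly on, or on the other side of the simple pole carried by the boundary factor (the pole sits at $w=0$ in the paper's parametrization, with critical point $w_c=\eta-\gamma(1+\gamma)^{-1}$, rather than at $w=\eta$ as in your normalization, but the mechanism is identical). One technical point you should make explicit for (a2): since the $w$-contour is required to enclose that pole, you cannot simply keep it a distance $\Theta(T^{-1/3})$ away on the decay side of the saddle; the paper resolves this by first conjugating the kernel by $e^{(s_i-s_j)\delta}$, which shifts the reference point to $\tilde w_c = w_c+\delta/(\rho T^{1/3})$ and lets the $w$- and $z$-contours sit on opposite sides of $\tilde w_c$ while $\mathcal{C}'$ still encircles $0$ — without this conjugation the exponential tail bounds needed for Fredholm-determinant convergence fail in the critical case.
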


This theorem is proved in Section~\ref{subsectProofOneSided} using the Schur process and applying methods of asymptotic analysis. Having established this theorem for the one-sided boundary condition model above we use coupling methods to prove a general two-sided boundary condition theorem.

To understand intuitively the cutoff $\eta=\gamma(1+\gamma)^{-1}$ there is a simple argument. The last passage path (the random directed path which achieves the last passage time) goes along the left boundary for some distance and then will depart. Restricting the set of paths to only those which go a certain macroscopic distance along the boundary and then depart into the bulk, one may use independence of the boundary and the bulk to establish a law of large number and fluctuation theorem for the restricted last passage time. If the mean $1/\eta$ of the boundary waiting times is large enough ($\eta$ small enough), the restricted law of large numbers will be maximized for a positive macroscopic distance along the boundary. In this case, the fluctuations will come entirely from the boundary fluctuations and they will be given by the standard CLT: Gaussian fluctuations on the scale $T^{1/2}$. On the other hand, if the boundary waiting times are too small, the restricted law of large numbers will be maximized for a distance along the boundary of $o(T^{2/3})$, and hence the fluctuations will come from the bulk, which are known to be $T^{1/3}$ and GUE Tracy-Widom distributed. At the cutoff, the two fluctuations compete and yield a perturbation of the bulk fluctuations. This intuition will be useful in some of the arguments used in this paper.

To extend Proposition~\ref{ThmOneSidedLPPbasic} to points which have different $y$ coordinates we use the fact that in certain directions (the characteristics) the last passage time fluctuations decorrelate not in the scale $T^{2/3}$ but rather in the scale of order $T$. This means that the passage time at two points at distance $o(T)$ but on the same characteristic have the same fluctuations (up to $o(T^{1/3})$). This phenomena, known as slow decorrelation\footnote{For flat interfaces as considered in~\cite{KK99} the dynamic scale invariance~\cite{KS92} implies a scaling form for the temporal autocorrelation, from which one could expect to see a slow-decorrelation type of phenomenon. This phenomenon is also related with what is known as \emph{persistence}, see~\cite{KK99} for KPZ class and~\cite{KKMCBS97} for Gaussian-type models.}, was observed (and proven) in the related PNG model in~\cite{Fer08} and then extended to a much greater generality within models in the KPZ universality class~\cite{CFP10b}. We recall the result needed here below.

The characteristic lines for TASEP with particle density $\rho$ move with speed $1-2\rho$. In the present LPP picture this implies the following:\\[0.5em]
(a) if $\gamma\leq\frac{\eta}{1-\eta}$, the line $y=\gamma^2 x$ is a characteristic related to the rarefaction fan of TASEP,\\[0.5em]
(b) while in the case $\gamma>\frac{\eta}{1-\eta}$, the characteristic passing by $(T,\gamma^2 T)$ is given by
\begin{equation}
y=T\gamma^2+(x-T)\eta^2(1-\eta)^{-2}.
\end{equation}
With these preliminaries, we can state the slow decorrelation theorem for LPP.

\begin{prop} [Corollary of Theorem 2.2 of~\cite{CFP10b}]\label{ThmSlowDec}
$ $\\
(a) For $\gamma\leq\frac{\eta}{1-\eta}$, define
\begin{equation}
P=\left(\left\lfloor \frac{T}{(1+\gamma)^2}\right\rfloor,\left\lfloor\frac{\gamma^2 T}{(1+\gamma)^2}\right\rfloor\right),\quad Q=\left(\left\lfloor \frac{T+r}{(1+\gamma)^2}\right\rfloor,\left\lfloor\frac{\gamma^2 (T+r)}{(1+\gamma)^2}\right\rfloor\right).
\end{equation}
Then, for any $r\sim T^{\nu}$ with $\nu\in [0,1)$ and any given $M>0$, it holds
\begin{equation}
\lim_{T\to\infty}\Pb\left(\left|L_1(Q)-L_1(P)-r\right|\geq M T^{1/3}\right)=0.
\end{equation}
(b) For $\gamma>\frac{\eta}{1-\eta}$, define
\begin{equation}P=(\lfloor T\rfloor,\lfloor\gamma^2 T\rfloor),\quad Q=\left(\lfloor T+r\rfloor,\left\lfloor\gamma^2 T+r\eta^2(1-\eta)^{-2}\right\rfloor\right).
\end{equation}
Then, for any $r\sim T^{\nu}$ with $\nu\in [0,3/2)$ and any given $M>0$, it holds
\begin{equation}
\lim_{T\to\infty} \Pb\left(\left|L_1(Q)-L_1(P)-\frac{r}{(1-\eta)^2}\right|\geq M T^{1/2}\right)=0.
\end{equation}
\end{prop}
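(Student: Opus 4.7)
Proposition~\ref{ThmSlowDec} is stated as a corollary of Theorem~2.2 of~\cite{CFP10b}, so the strategy is to verify the hypotheses of that general slow-decorrelation machinery in the present LPP setting; I sketch the mechanism so as to exhibit what has to be checked. The key probabilistic input is the superadditivity bound
\[
L_1(Q) \;\geq\; L_1(P) + \tilde L(P,Q),
\]
where $\tilde L(P,Q)$ is the last-passage time from $P$ to $Q$ using only the bulk $\mathrm{Exp}(1)$ weights. In both (a) and (b) the point $P$ lies strictly in the bulk, so no boundary weights enter $\tilde L(P,Q)$, and the latter is by translation invariance a standard i.i.d.\ Rost-type LPP from the origin to a point of size $O(T^\nu)$.

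The first step is to apply Rost's shape theorem and the Tracy--Widom tail bound to $\tilde L(P,Q)$. In case~(a) the translated endpoint is $(r/(1+\gamma)^2,\, r\gamma^2/(1+\gamma)^2)$ with $r \sim T^\nu$, giving $\tilde L(P,Q) = r + O_{\Pb}(r^{1/3})$, and $r^{1/3} = o(T^{1/3})$ precisely because $\nu < 1$. In case~(b) the endpoint $(r,\, r\eta^2/(1-\eta)^2)$ has direction ratio $(\eta/(1-\eta))^2$, so Rost's formula gives $\tilde L(P,Q) = r/(1-\eta)^2 + O_{\Pb}(r^{1/3})$, with $r^{1/3} = o(T^{1/2})$ precisely because $\nu < 3/2$. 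This delivers the one-sided bound $L_1(Q) - L_1(P) \geq c_T - M T^{1/3}/2$ (respectively $c_T - MT^{1/2}/2$) with probability tending to one, where $c_T = r$ in case~(a) and $c_T = r/(1-\eta)^2$ in case~(b).

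The second step is to invoke Proposition~\ref{ThmOneSidedLPPbasic} to obtain one-point convergence at both $P$ and $Q$ to the \emph{same} limit law: in case~(a) both $(L_1(P)-T)/T^{1/3}$ and $(L_1(Q)-(T+r))/T^{1/3}$ converge in distribution to the GUE Tracy--Widom law (or to its transition analogue at the critical slope $\gamma = \eta/(1-\eta)$), because the shift $r = o(T)$ is absorbed into the same scaling window; case~(b) is analogous with a Gaussian limit, $T^{1/2}$ scaling, and the tolerance $r = o(T^{3/2})$. Writing
\[
\frac{L_1(Q) - (T+r)}{T^{1/3}} \;=\; \frac{L_1(P) - T}{T^{1/3}} \;+\; \frac{L_1(Q) - L_1(P) - r}{T^{1/3}},
\]
the left side and the first term on the right converge in law to the same limit, while the remainder is bounded below by $-o(1)$ in probability by the first step. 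The elementary fact that if $X \overset{d}{=} X + Y$ and $Y \geq 0$ almost surely then $Y = 0$ almost surely, together with Slutsky's theorem, collapses the remainder to zero in probability, which is exactly the claim. Case~(b) is identical with $T^{1/3}$ replaced by $T^{1/2}$ and $r$ by $r/(1-\eta)^2$. The only nontrivial point is the $T^\nu$-scale fluctuation control of $\tilde L(P,Q)$ in the first step; this is standard via Johansson's tail bound, and is precisely the hypothesis that the CFP10b framework packages into its slow-decorrelation theorem. No shock-specific issue arises because $P$ and $Q$ are chosen on a single characteristic line.
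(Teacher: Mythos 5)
Your sketch is essentially the argument behind Theorem~2.2 of \cite{CFP10b}, which is what the paper itself appeals to without reproof; this proposition carries no independent proof in the present paper, so the right comparison is with the mechanism of the cited result, and you have reproduced it faithfully. The three ingredients you isolate — (i) the superadditivity inequality $L_1(Q)\geq L_1(P)+\tilde L(P,Q)$ where $\tilde L(P,Q)$ sees only bulk weights, (ii) the LLN-plus-$T^{1/3}$-fluctuation control for $\tilde L(P,Q)$ along the chosen characteristic, giving $\tilde L(P,Q)-c_T=o_{\Pb}(T^{1/3})$ resp.\ $o_{\Pb}(T^{1/2})$, and (iii) the one-point convergence of the properly centered and scaled $L_1(P)$ and $L_1(Q)$ to the \emph{same} limit law — are exactly what the slow-decorrelation framework packages. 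Your computations of the Rost shape constants ($r$ in case~(a), $r/(1-\eta)^2$ in case~(b)) and of the admissible ranges $\nu<1$ and $\nu<3/2$ are correct, and your observation that for case~(b) the Gaussian variance at $Q$ equals that at $P$ (since $Q-P$ lies exactly along the boundary characteristic) is the non-obvious point that makes (iii) hold even when $r$ exceeds $T$.

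One small imprecision worth tightening: the ``elementary fact'' you invoke should not be stated in the almost-sure form ``$X\overset{d}{=}X+Y$, $Y\geq 0$ a.s.\ implies $Y=0$ a.s.'' but in the sequential form that is actually needed and that the paper records as Lemma~\ref{stoch_dom_convergence}: if $A_n\leq B_n$ coordinate-wise, $A_n\Rightarrow D$, and $B_n\Rightarrow D$, then $B_n-A_n\to 0$ in probability. Applying it with $A_n=(L_1(P)+\tilde L(P,Q)-c_T-\textrm{LLN}(P))/T^{1/3}$ and $B_n=(L_1(Q)-\textrm{LLN}(Q))/T^{1/3}$ (resp.\ $T^{1/2}$ in case~(b)) avoids the distributional-equality-in-the-limit argument, which as you state it is not literally valid. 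With that substitution your outline is a correct proof.

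A second point you gloss over but should flag: in case~(b), when $\nu\geq 1$, the point $Q$ lies on a $T$-dependent ray whose slope $\gamma_Q^2\to\eta^2/(1-\eta)^2$, so Proposition~\ref{ThmOneSidedLPPbasic}~(b) is being applied with a drifting parameter. This is harmless here because $\gamma_Q$ stays bounded away from the critical value from above (a weighted average of $\gamma^2$ and $\eta^2/(1-\eta)^2$ never reaches the latter) and the variance is $T$-independent as you computed, but it requires the one-point asymptotics to hold uniformly over a compact set of directions; in the CFP10b framework this uniformity is part of the hypothesis being checked, and you should say so rather than quoting a fixed-parameter statement verbatim.
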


\begin{remark}
In Proposition~\ref{ThmSlowDec} $\gamma$ can be also chosen to be $T$-dependent, provided that it converges to a fixed number in the $T\to\infty$ limit.
\end{remark}

\begin{remark}\label{decorrelation_remark}
We will use a generalization of Proposition~\ref{ThmSlowDec} to joint distributions. We let $r$ be a vector $r=(r_1,r_2,\ldots, r_m)$ with each \mbox{$r_i\sim T^{\nu_i}$}. Let us interprete $P$ and $L_1(P)$ as vectors $P=(P_1,P_2,\ldots, P_m)$ and \mbox{$L_1(P) = (L_1(P_1),L_1(P_2),\ldots, L_1(P_m))$}, and similarly for $Q$ and $L_1(Q)$. Then, the theorem still holds, with absolute values replaced by Euclidean norms. Case (a) holds for $\max(\nu_i)<1$ and case (b) for $\max(\nu_i)<3/2)$. This follows directly from triangular inequality the union probability bound.
\end{remark}

As immediate application of Proposition~\ref{ThmSlowDec} (and Remark~\ref{decorrelation_remark}) is the extension of Proposition~\ref{ThmOneSidedLPPbasic} away from the fixed-$y$ line. Indeed, often one considers the cut $x+y=t$ and $t$ is then interpreted as the time parameter in a stochastic growth model (see e.g.~\cite{Jo03b,PS02}). For that reason we consider the following modification of the scaling (\ref{scaling0}): for a given $\nu\in [0,1)$,
\begin{equation}\label{scaling1}
\begin{aligned}
x(\tau,\theta)&=\left\lfloor \frac{1}{(1+\gamma)^2}(T + \theta T^\nu) +\tau \frac{2\gamma^{4/3}}{(1+\gamma^2)(1+\gamma)^{2/3}}T^{2/3}\right\rfloor,\\
y(\tau,\theta)&=\left\lfloor \frac{\gamma^2}{(1+\gamma)^2} (T+\theta T^\nu)-\tau \frac{2\gamma^{4/3}}{(1+\gamma^2)(1+\gamma)^{2/3}}T^{2/3}\right\rfloor,\\
\ell(\tau,\theta,s)&= T+\theta T^\nu+ \tau\frac{2\gamma^{1/3}(1+\gamma)^{1/3}(\gamma-1)}{1+\gamma^2} T^{2/3} + (s-\tau^2)\frac{(1+\gamma)^{2/3}}{\gamma^{1/3}}T^{1/3}.
\end{aligned}
\end{equation}
One might have noticed that in the scaling (\ref{scaling1}) we extend $x$ and $y$ along the characteristic for $\tau=0$, which is not exactly the characteristic for $\tau\neq 0$. However, this is not a problem, since the projection of $(x(\tau,\theta),y(\tau,\theta))$ along the true characteristic on the line $x+y=\frac{1+\gamma^2}{(1+\gamma)^2}T$ is $(x(\tilde \tau,0),y(\tilde \tau,0))$ with $\tilde \tau=\tau+o(1)$. Then, the extension of Proposition~\ref{ThmOneSidedLPPbasic} to the scaling (\ref{scaling1}) is the following.

\begin{prop}\label{ThmOneSidedLPP}
$ $\\
(a) Fix $m\in \N$, $\nu\in [0,1)$, $\eta\in (0,1]$, and $\gamma\in (0,\infty)$ such that \mbox{$\eta\geq \gamma (1+\gamma)^{-1}$}. Then, for any given real numbers $\tau_1<\tau_2<\ldots<\tau_m$, $\theta_1,\ldots, \theta_m$ and $s_1,\ldots, s_m$, we have:\\
(a1) if $\eta > \gamma (1+\gamma)^{-1}$, then
\begin{equation}
\lim_{T\to\infty} \Pb\left(\bigcap_{k=1}^{m} \{L_1(x(\tau_k,\theta_k),y(\tau_k,\theta_k))\leq \ell(\tau_k,\theta_k,s_k)\}\right) = \Pb\left(\bigcap_{k=1}^{m}\{\mathcal{A}_2(\tau_k)\leq s_k\}\right),
\end{equation}
(a2) while if $\eta=\gamma(1+\gamma)^{-1}$, then
\begin{equation}
\lim_{T\to\infty} \Pb\left(\bigcap_{k=1}^{m} \{L_1(x(\tau_k,\theta_k),y(\tau_k,\theta_k))\leq \ell(\tau_k,\theta_k,s_k)\}\right) = \Pb\left(\bigcap_{k=1}^{m}\{\mathcal{A}_{{\rm BM}\to 2}(\tau_i)\leq s_k\}\right).
\end{equation}
(b) Fix $m\in \N$, $\eta\in(0,1]$. Then, for any given $\gamma_1<\gamma_2<\ldots<\gamma_m$ such that $\eta<\gamma_1(1+\gamma_1)^{-1}$, and $s_1,\ldots, s_m\in\R$, it holds
\begin{multline}\label{eq2.15}
\lim_{T\to\infty}\Pb\left(\bigcap_{k=1}^{m} \left\{L_1(\theta_k T,\gamma_k^2\theta_k T)\leq \left(\frac{\gamma_k^2}{\eta}+\frac{1}{1-\eta}\right)\theta_k T+ s_k T^{1/2}\right\}\right)\\ =\Pb\left(\bigcap_{k=1}^{m}\left\{\mathcal{B}\left(\theta_k\left[\frac{\gamma_k^2}{\eta^2}-\frac{1}{(1-\eta)^2}\right]\right) \leq s_k\right\}\right).
\end{multline}
\end{prop}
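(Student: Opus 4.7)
The plan is to reduce Proposition~\ref{ThmOneSidedLPP} to the fixed-cut statement of Proposition~\ref{ThmOneSidedLPPbasic} by sliding each point of interest along its Burgers characteristic to a common reference slice, with the displacement absorbed by slow decorrelation (Proposition~\ref{ThmSlowDec}, extended jointly by Remark~\ref{decorrelation_remark}). The two cases use the two corresponding parts of Proposition~\ref{ThmSlowDec} and will be treated separately.

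For case (a), the characteristic of the scaling-(\ref{scaling1}) point $Q_k=(x(\tau_k,\theta_k),y(\tau_k,\theta_k))$ is (to leading order) a ray from the origin of slope $\gamma^2$, with a $O(T^{-1/3})$ correction absorbed by the $T$-dependent parameter of Remark~2.7. A direct computation shows that the displacement from the scaling-(\ref{scaling0}) point $P_k=(x(\tau_k),y(\tau_k))$ to $Q_k$ is parallel to the characteristic direction $(1,\gamma^2)$ with $x$-component
\begin{equation}
x(\tau_k,\theta_k)-x(\tau_k)=\frac{\theta_kT^\nu}{(1+\gamma)^2}-\frac{2\tau_kT^{2/3}}{(1+\gamma)^{2/3}\gamma^{2/3}(1+\gamma^2)},
\end{equation}
which is $o(T)$, within the range of validity of Proposition~\ref{ThmSlowDec}(a). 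Applying slow decorrelation with a $T$-dependent slope chosen so that the ray passes through $P_k$ (and through $Q_k$ up to an $o(T^{1/3})$-level perpendicular error that is controlled by local smoothness of the limit shape) and using Remark~\ref{decorrelation_remark} to do this jointly in $k$, I obtain
\begin{equation}
L_1(Q_k)-L_1(P_k)=\Delta_k+o(T^{1/3})\qquad\text{jointly in }k,
\end{equation}
with $\Delta_k=(1+\gamma)^2(x(\tau_k,\theta_k)-x(\tau_k))$. A short algebraic check shows $\Delta_k=\ell(\tau_k,\theta_k,s_k)-\ell(\tau_k,s_k)$ at leading order; the events $\{L_1(Q_k)\le\ell(\tau_k,\theta_k,s_k)\}$ are therefore equivalent, jointly and up to $o(T^{1/3})$, to $\{L_1(P_k)\le\ell(\tau_k,s_k)\}$, and Proposition~\ref{ThmOneSidedLPPbasic}(a1) or (a2) delivers the desired $\mathcal{A}_2$ or $\mathcal{A}_{{\rm BM}\to 2}$ limit by continuity of the limiting finite-dimensional distributions.

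For case (b), the characteristic through $(T,\gamma^2T)$ has constant slope $\eta^2/(1-\eta)^2$. I project each point $(\theta_kT,\gamma_k^2\theta_kT)$ along this characteristic onto the reference cut $\{x=T\}$, landing at $(T,\tilde\gamma_k^2T)$ with
\begin{equation}
\tilde\gamma_k^2=\gamma_k^2\theta_k-(\theta_k-1)\frac{\eta^2}{(1-\eta)^2}.
\end{equation}
The hypotheses $\gamma_k>\eta/(1-\eta)$ and $\theta_k>0$ give $\tilde\gamma_k>\eta/(1-\eta)$, so Proposition~\ref{ThmOneSidedLPPbasic}(b) applies at the projected points (permuting indices if needed to enforce the monotonicity hypothesis, which is harmless since the Brownian joint distribution is permutation-invariant). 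Proposition~\ref{ThmSlowDec}(b) with $r_k=(\theta_k-1)T$ (well inside the allowed range $\nu<3/2$), made joint via Remark~\ref{decorrelation_remark}, yields
\begin{equation}
L_1(\theta_kT,\gamma_k^2\theta_kT)-L_1(T,\tilde\gamma_k^2T)=\frac{(\theta_k-1)T}{(1-\eta)^2}+o(T^{1/2}).
\end{equation}
Simplifying, $(\tilde\gamma_k^2/\eta+1/(1-\eta))T+(\theta_k-1)T/(1-\eta)^2=(\gamma_k^2/\eta+1/(1-\eta))\theta_kT$ and $\tilde\gamma_k^2/\eta^2-1/(1-\eta)^2=\theta_k[\gamma_k^2/\eta^2-1/(1-\eta)^2]$, so plugging Proposition~\ref{ThmOneSidedLPPbasic}(b) into the displayed equation gives (\ref{eq2.15}) exactly.

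The main obstacle is the subtlety in case (a) already flagged before the statement: scaling~(\ref{scaling1}) is extended along the $\tau=0$ characteristic, so for general $(\tau_k,\theta_k)$ the true characteristic and the canonical ray $\{y=\gamma^2 x\}$ have slopes that differ at order $T^{-1/3}$. Naively using Proposition~\ref{ThmSlowDec}(a) with the fixed value $\gamma$ leaves $Q_k$ at perpendicular distance $O(T^{1/3})$ from the on-ray target point, which would alter $L_1$ by $O(T^{1/3})$ and thus be comparable with the fluctuations we track. Remark~2.7 is what lets the slope parameter follow the true characteristic; verifying that both $P_k$ and $Q_k$ can be placed on a common $T$-dependent ray up to perpendicular deviations whose effect on $L_1$ is genuinely $o(T^{1/3})$, and that the resulting shift $\Delta_k$ matches $\ell(\tau_k,\theta_k,s_k)-\ell(\tau_k,s_k)$, is the one calculation the proof must carry out carefully.
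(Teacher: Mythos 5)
Your strategy is the same as the paper's: reduce Proposition~\ref{ThmOneSidedLPP} to Proposition~\ref{ThmOneSidedLPPbasic} by sliding each point along its characteristic and applying slow decorrelation (Proposition~\ref{ThmSlowDec} with Remark~\ref{decorrelation_remark}). For case (b) your execution is correct and in fact more carefully worked out than the paper's: you compute the projected slope $\tilde\gamma_k$ explicitly and verify that both the centering and the variance match, whereas the paper simply asserts that $(\theta_kT,\gamma_k^2\theta_kT)$ is on the same characteristic as $(T,\gamma_k^2T)$, which as stated is only true when $\theta_k=1$ or $\gamma_k=\eta/(1-\eta)$; what is really meant is precisely your computation with $\tilde\gamma_k$.

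For case (a), your overall plan is also what the paper does, and your observation that $Q_k-P_k$ is exactly parallel to $(1,\gamma^2)$, together with the algebraic identity $\Delta_k=\ell(\tau_k,\theta_k,s_k)-\ell(\tau_k,s_k)$, is correct. But the size of the residual error you flag is misstated. With a $T$-dependent slope $\mu_k(T)^2=y(\tau_k)/x(\tau_k)$ through $P_k$, the slope differs from $\gamma^2$ by $O(T^{-1/3})$, and the displacement from $P_k$ to $Q_k$ has $x$-component of order $T^{\max(\nu,2/3)}$. The transverse deviation of $Q_k$ from that ray is therefore $O(T^{\max(\nu,2/3)-1/3})$, which for $\nu$ near $1$ is as large as $T^{2/3-\epsilon}$, i.e., of the order of the correlation length --- not $o(T^{1/3})$. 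Consequently ``local smoothness of the limit shape'' is not the right tool: the transverse deviation corresponds to an $O(T^{\nu-1})$ shift in the rescaled $\tau$-coordinate, so the correct closing step (the one the paper uses, introducing $\tilde\tau_k$) is that the projection along the true characteristic lands at $(x(\tilde\tau_k),y(\tilde\tau_k))$ with $\tilde\tau_k=\tau_k+O(T^{\nu-1})\to\tau_k$, and one then appeals to convergence of the finite-dimensional distributions in Proposition~\ref{ThmOneSidedLPPbasic} at $\tau$-arguments converging to $\tau_k$ (which holds since the kernel estimates there are locally uniform in $\tau$). With that correction the argument closes; the rest of your proof matches the paper.
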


\subsubsection{Two-sided LPP}
The main object of interest in this paper is last passage percolation with two-sided boundary conditions defined as follow. Given two paramaters \mbox{$\pi,\eta\in (0,1]$}, the independent waiting times $w_{i,j}$ satisfy
\begin{equation}\label{LPP_weights}
w_{i,j} = \begin{cases}
       \textrm{Exp}(1/\pi) &\textrm{for } i\geq 1,j=0,\\
           \textrm{Exp}(1/\eta) &\textrm{if } i=0,j\geq 1,\\
       \textrm{Exp}(1) &\textrm{if } i,j\geq 1,\\
           0& \textrm{for } i=0,j=0.\\
          \end{cases}
\end{equation}
We denote by $L_2$ the last passage percolation time for the waiting times (\ref{LPP_weights}), where the subscript $2$ stands for the two-sided.
This corresponds with TASEP with two-sided Bernoulli initial conditions. The connection with two-sided Bernoulli initial condition for TASEP is obtained by setting $\eta=\rho_-$ and $\pi=1-\rho_+$.

The new phenomenon that occurs for two-sided LPP with respect to one-sided is the possible presence of shockwaves in the corresponding TASEP picture. This occurs when $\pi+\eta<1$, i.e., when characteristics meet. Indeed, the characteristic leaving from the axis $(\R_+,0)$ have slope $(1-\pi)^2\pi^{-2}$ and, whenever $\eta+\pi<1$, they meet the characteristics leaving from the $(0,\R_+)$ axis, whose slope is $\eta^2(1-\eta)^{-2}$. The slope of the shockwave is determined by the Rankine-Hugoniot condition and it is given by the equation
\begin{equation}\label{eqShockWave}
y=\frac{\eta(1-\pi)}{\pi(1-\eta)}x.
\end{equation}
The limit shape is not anymore always as in (\ref{eq2.15}) but depends on the which side of the shockwave we focus on $S(\gamma)\equiv\lim_{T\to\infty}\frac{1}{T}L_2(T,\gamma^2 T)$ given by
\begin{equation}
S(\gamma)= \begin{cases}
{\displaystyle \frac{\gamma^2}{\eta}+\frac1{1-\eta}},&\textrm{if }\pi\leq (1+\gamma)^{-1}, \eta<\gamma(1+\gamma)^{-1}, \gamma^2>\frac{\eta(1-\pi)}{\pi(1-\eta)},\\[0.5em]
{\displaystyle \frac{\gamma^2}{1-\pi}+\frac1{\pi}},&\textrm{if }\pi< (1+\gamma)^{-1}, \eta\leq\gamma(1+\gamma)^{-1}, \gamma^2<\frac{\eta(1-\pi)}{\pi(1-\eta)}.
\end{cases}
\end{equation}
When $\eta+\pi<1$, then the fluctuations are dominated by the boundary terms and live on a $T^{1/2}$ scale, while the bulk contribution to the fluctuations is only on a $T^{1/3}$ scale. Therefore, the limit process describing the fluctuations on each side of (not including) the shockwave is given by the Brownian motion obtained as the boundary contribution from the origin to the projections along the characteristics of the points we focus on. Thus, the two side of the shockwave will be independent. On the shockwave, there is a balance between the two boundary contributions: the last passage time for a point $P$ on the shockwave is the maximum between the last passage time of the one-sided problem with $w_{i,0}=0$ and the transposed one-sided problem $w_{0,j}=0$. Since the fluctuations come only from the boundaries, the distribution of $P$ will be the product of the distribution of the two one-sided problems, see Figure~\ref{FigTwoSided}.

This intuitive picture is confirmed by Theorem~\ref{ThmTwoSidedLPP}, which can be obtained from Proposition~\ref{ThmOneSidedLPP} without any additional hard analysis by using coupling arguments introduced in~\cite{BC09}.

\begin{thm}\label{ThmTwoSidedLPP}
Consider the same scaling (\ref{scaling1}) as in Proposition~\ref{ThmOneSidedLPP}.\\
(a) Fix $m\in \N$, $\eta,\pi\in (0,1]$, $\gamma\in (0,\infty)$, and $\nu\in [0,1)$. Then, for any choice of real numbers $\tau_1<\tau_2<\ldots,\tau_m$, $\theta_1,\ldots,\theta_m$, and $s_1,\ldots,s_m$, we have:\\
(a1) If $\pi>(1+\gamma)^{-1}$ and $\eta>\gamma(1+\gamma)^{-1}$, then
\begin{equation}
\lim_{T\to\infty} \Pb\bigg(\bigcap_{k=1}^{m} \{L_2(x(\tau_k,\theta_k),y(\tau_k,\theta_k))\leq \ell(\tau_k,\theta_k,s_k)\}\bigg) = \Pb\bigg(\bigcap_{k=1}^{m}\{\mathcal{A}_2(\tau_k)\leq s_k\}\bigg).
\end{equation}
(a2) If $\pi>(1+\gamma)^{-1}$ and $\eta=\gamma(1+\gamma)^{-1}$, then
\begin{equation}\label{eqL21}
\lim_{T\to\infty} \Pb\bigg(\bigcap_{k=1}^{m} \{L_2(x(\tau_k,\theta_k),y(\tau_k,\theta_k))\leq \ell(\tau_k,\theta_k,s_k)\}\bigg) = \Pb\bigg(\bigcap_{k=1}^{m}\{\mathcal{A}_{{\rm BM}\to 2}(\tau_k)\leq s_k\}\bigg).
\end{equation}
(a3) If $\pi=(1+\gamma)^{-1}$ and $\eta=\gamma(1+\gamma)^{-1}$, then
\begin{equation}
\lim_{T\to\infty} \Pb\bigg(\bigcap_{k=1}^{m} \{L_2(x(\tau_k,\theta_k),y(\tau_k,\theta_k))\leq \ell(\tau_k,\theta_k,s_k)\}\bigg) = \Pb\bigg(\bigcap_{k=1}^{m}\{\mathcal{A}_{\rm stat}(\tau_k)\leq s_k+\tau_k^2\}\bigg).
\end{equation}

(b) Fix $m_l,m_s,m_b\in \Z_+^*$ and set $m=m_l+m_s+m_b$. $m_l$ is the number of points associated with characteristics from the left boundary of the LPP, $m_b$ to the bottom boundary, and $m_s$ on the shockwave (if exists). For $\pi,\eta\in (0,1]$ and real numbers $\theta_1,\ldots, \theta_m$, choose $\gamma_i$ corresponding to each case. Then
\begin{equation}\label{bmeqn}
\begin{aligned}
\lim_{T\to \infty} &\Pb\bigg(\bigcap_{k=1}^{m} \{L_2(\theta_k T,\gamma_k^2\theta_k T)\leq S(\gamma_k)\theta_k T+s_k T^{1/2}\}\bigg)\\
=& \Pb\bigg(\bigcap_{k=1}^{m_l+m_s}\left\{\mathcal{B}\left(\theta_k\left[\frac{\gamma_k^2}{\eta^2}-\frac{1}{(1-\eta)^2}\right]\bigg)\leq s_k\right\}\right) \\
\times &\Pb\bigg(\bigcap_{k=m_l+1}^m\left\{\mathcal{B}'\left(\theta_k\left[\frac{1}{\pi^2}-\frac{\gamma_k^2}{(1-\pi)^2}\right]\right)\leq s_k\right\}\bigg),
\end{aligned}
\end{equation}
where $\mathcal{B}$ and $\mathcal{B}'$ are two independent copies of Brownian Motion.
\end{thm}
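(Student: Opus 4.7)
The plan is to deduce Theorem~\ref{ThmTwoSidedLPP} from Proposition~\ref{ThmOneSidedLPP} by means of the coupling identity introduced in~\cite{BC09}, without any further analytic/determinantal input. The starting point is that any up-right lattice path from $(0,0)$ to $(x,y)$ with $x,y\geq 1$ must take its first step either to $(1,0)$, using only weights on the horizontal boundary $\{(i,0):i\geq 1\}$ and in the bulk, or to $(0,1)$, using only vertical-boundary and bulk weights. Since $w_{0,0}=0$, no path ever touches both axes. Couple two one-sided LPPs on the same bulk weights $\{w_{i,j}\}_{i,j\geq 1}$: let $L_1^\pi$ denote the one with only horizontal boundary of parameter $\pi$ and $L_1^\eta$ the one with only vertical boundary of parameter $\eta$, with independent boundary weights. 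Then
\begin{equation}
L_2(x,y)=\max\{L_1^\pi(x,y),\,L_1^\eta(x,y)\},
\end{equation}
and the marginals of the two one-sided passage times are controlled by Proposition~\ref{ThmOneSidedLPP} (applied directly to $L_1^\eta$, and after the $x\leftrightarrow y$ transposition, which sends $\gamma\mapsto 1/\gamma$, to $L_1^\pi$).

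For part (a), case (a1), both one-sided marginals are in the bulk-dominated regime of Proposition~\ref{ThmOneSidedLPP}(a1), so each converges to $\mathcal{A}_2$. To conclude for $L_2$ one must upgrade this to coupled convergence: $L_1^\pi=L_1^\eta+o(T^{1/3})$ in probability. The standard way is to sandwich both against the bulk-only last passage time $L_0$ (boundary weights set to zero), which is monotone below $L_1^\pi$ and $L_1^\eta$ and, by Johansson-type results, shares the same $\mathcal{A}_2$ limit under the scaling (\ref{scaling1}); the three random variables, being ordered and having the same limiting distribution, then differ by $o(T^{1/3})$, and the max coincides with each in the limit. Case (a2) proceeds along the same lines, except that the critical $\eta$-boundary pushes $L_1^\eta$ to the $\mathcal{A}_{{\rm BM}\to 2}$ limit, which at the $T^{1/3}$ scale strictly dominates the subcritical $L_1^\pi$, so $L_2=L_1^\eta$ with probability tending to one. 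Case (a3) has $\pi+\eta=1$, hence corresponds to the stationary density $\rho_-=\rho_+$, and the statement is exactly Theorem~1.7 of~\cite{BFP09}, which can be invoked directly.

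For part (b), the shockwave is present when $\pi+\eta<1$ and is located along (\ref{eqShockWave}). For each test point $(\theta_k T,\gamma_k^2\theta_k T)$, check its position relative to the shock line. For points strictly on the $\eta$-side (left of the shock) one has $\eta<\gamma_k(1+\gamma_k)^{-1}$ but $\pi\geq(1+\gamma_k)^{-1}$, so $L_1^\eta$ falls under Proposition~\ref{ThmOneSidedLPP}(b) with Gaussian fluctuations of order $T^{1/2}$, while $L_1^\pi$ has a strictly smaller law of large numbers (its characteristic has not yet swept past the point) with only $T^{1/3}$-fluctuations; hence $L_2=L_1^\eta$ with probability one in the limit, and the corresponding marginal is the claimed $\mathcal{B}$-factor. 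Right-of-shock points are symmetric, yielding the $\mathcal{B}'$-factor. For points exactly on the shock the two laws of large numbers coincide at $S(\gamma_k)T$ and both $L_1^\eta,L_1^\pi$ have Gaussian $T^{1/2}$-fluctuations, determined respectively by the vertical and horizontal boundary weights, which are independent. The shared bulk contribution contributes only $O(T^{1/3})=o(T^{1/2})$ and is therefore negligible on the Gaussian scale; hence $L_1^\pi$ and $L_1^\eta$ are asymptotically independent. Using $L_2=\max(L_1^\pi,L_1^\eta)$, the event $\{L_2\leq S(\gamma_k)T+s_k T^{1/2}\}$ factors asymptotically into $\{L_1^\pi\leq\cdot\}\cap\{L_1^\eta\leq\cdot\}$, producing both the $\mathcal{B}$- and $\mathcal{B}'$-factor at the shock index. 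The joint statement across several points follows coordinate-by-coordinate from the same argument, since the $\eta$-boundary and the $\pi$-boundary weights remain independent even when many $(\theta_k,\gamma_k)$ are considered.

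The main obstacle is the quantitative comparison in the coupling: proving that in regime (a1) the three coupled variables $L_0\leq L_1^\pi,L_1^\eta\leq L_2$ agree up to $o(T^{1/3})$, and that on the shock the bulk-induced correlation between $L_1^\pi$ and $L_1^\eta$ vanishes on the $T^{1/2}$ scale. Both reductions are essentially soft monotonicity/comparison lemmas for LPP as a function of boundary parameters (as collected in Section~\ref{coupling_lemmas}), combined with the one-sided fluctuation bounds provided by Proposition~\ref{ThmOneSidedLPP}; the paper's use of slow decorrelation (Proposition~\ref{ThmSlowDec}) enters to pass from the particular scaling on one line to the full space-time scaling (\ref{scaling1}).
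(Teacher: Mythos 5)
Your decomposition $L_2=\max(L_1^\pi,L_1^\eta)$ and the reduction via the ordering lemmas of Section~\ref{coupling_lemmas} is exactly the paper's strategy, and your treatment of part~(a) is essentially correct. One stylistic difference: you sandwich against the bulk-only passage time $L_0$ with all boundary weights set to zero, which forces you to invoke an external multipoint Airy$_2$ result for pure exponential LPP under the scaling~(\ref{scaling1}) that this paper does not itself establish. The paper instead rescales one boundary by its parameter (for (a1), multiply the subcritical boundary weights by their rate so they become $\textrm{Exp}(1)$), which keeps the auxiliary $\tilde Z$ identifiable with the one-sided model whose limit is supplied by Proposition~\ref{ThmOneSidedLPP}, closing the same logical loop more self-containedly. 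Your phrase that the critical boundary ``strictly dominates'' in (a2) is informal (both sides fluctuate on the $T^{1/3}$ scale and the max need not equal $L_1^\eta$ realization-by-realization), but the underlying order-plus-common-limit argument, i.e.\ Lemma~\ref{max_lemma}, is correct. For (a3) both you and the paper simply defer to~\cite{BFP09}.

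The genuine gap is in part~(b). You assert ``the shared bulk contribution contributes only $O(T^{1/3})=o(T^{1/2})$ and is therefore negligible,'' but this is not an a priori fact about $L_1^\pi,L_1^\eta$: each is a maximum over a continuum of entry points into the bulk, and the optimal entry point is random, so there is no immediate additive decomposition into ``boundary Gaussian plus independent bulk $O(T^{1/3})$.'' The content of the paper's proof is exactly the construction that makes this decomposition rigorous: one defines $\tilde X^i,\tilde Y^i$ by \emph{forcing} the path to run a fixed deterministic macroscopic distance, (\ref{X_distance}) resp.\ (\ref{Y_distance}), along the appropriate boundary before stepping into the bulk. Only then is $\tilde X^i$ a clean sum of a deterministic-length boundary CLT piece (giving the Gaussian on scale $T^{1/2}$ and, crucially, the explicit Brownian covariance when several $\theta_k,\gamma_k$ are considered) plus an independent bulk LPP on scale $T^{1/3}$. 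One then needs to know that $\tilde Z^i$ and $Z^i$ converge to the same one-point law (the paper invokes the one-point result of~\cite{BC09} for $L_2$ here; alternatively one can compare $\tilde X\le L_1^\pi$, $\tilde Y\le L_1^\eta$ with Proposition~\ref{ThmOneSidedLPP}(b)), after which Lemma~\ref{multipoint_marginal_lemma} transfers the joint law from $(\tilde X,\tilde Y)$ to $(X,Y)$. Without the deterministic-distance restriction you have no handle on the covariance structure and no way to see that the shared bulk drops out, so the claimed factorization and Brownian limit are unsubstantiated. A secondary imprecision: your stated condition for the $\eta$-side, $\pi\geq(1+\gamma_k)^{-1}$, is not generic; when the shock exists both boundaries are subcritical in a $\gamma$-window adjacent to the shock, so $L_1^\pi$ there is also Gaussian on the $T^{1/2}$ scale (with strictly smaller mean), not $T^{1/3}$ as you claim --- the argument should rest solely on the strict LLN gap, not on the fluctuation scale of the losing side.
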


\begin{remark} The case $\pi=(1+\gamma)^{-1}$ and $\eta>\gamma(1+\gamma)^{-1}$ can be recovered by Theorem~\ref{ThmTwoSidedLPP} (a2) by the change of variable $(x,y)\to (y,x)$ and $\gamma\to \gamma^{-1}$.
\end{remark}

Because of the nice correspondence with the TASEP, we focused here on LPP with two-sided ``of width one'', i.e., with modified weights only for $i=0$ and $j=0$ in (\ref{LPP_weights}). From a LPP point of view a natural extension is to consider weights different from $1$ for a larger number of columns/rows. For example, the one-sided with boundary width equal to $r$ was considered in~\cite{BBP06}. Like for the Airy$_2$ process~\cite{AvM03}, also one can describe joint distributions by PDE's~\cite{AvMD08}. For an extension to two-sided thick boundaries, see~\cite{BP07}. The particular case of boundaries of sizes $1$ and $r$ corresponds, in terms of TASEP, to Bernoulli initial condition on $\Z_-$ with the first $r$ particles having a different jump rate~\cite{BFS09}. The coupling techniques would also work, except to the critical cases (like $\eta+\pi=1$).

\subsection{Random sample covariance matrices}\label{subsectRM}
We now give a random matrix interpretation of our result. For that purpose, we use a result proven in~\cite{DW08}.
Consider an infinite array \mbox{$A(N)=[A_{i,j}]_{i\geq 1,1\leq j \leq N}$} where the $A_{i,j}$ are independent complex Gaussian random variables with mean zero and variance $1/(\pi_i+\tilde \pi_j)$.
Define $A(n,N)$ to be the matrix obtained by considering the first $n$ rows from $A$. Define the $N\times N$ matrix $M_N(n):=A(n,N)^*A(n,N)$ and denote by $\lambda_{N,\max}(n)$ its largest eigenvalue.

Let also $L(n,N)$ be the last passage time from $(1,1)$ to $(n,N)$ in the percolation model with independent exponential random variables with
expectations $1/(\pi_i +\tilde \pi_j)$. Then it is proved in~\cite{BP07} and~\cite{DW08} that the process $n\mapsto \lambda_{N,\max}(n)$ and the process of last passage times $n\mapsto L(n,N)$ have the same distributions. Therefore, from Proposition~\ref{ThmOneSidedLPPbasic} we deduce the following.

\begin{thm}\label{ThmRM}
Set $\pi_i=1-\eta, i\geq 2$, $\pi_1=0$, $\tilde \pi_j=\eta, j\geq 1$.\\
(a)  Fix $m\in \N$, $\gamma\in (0,\infty)$ and $\eta\in (0,1]$. Recall $x(\tau_k)$, $y(\tau_k)$, and $\ell(\tau_k,s_k)$ defined in~(\ref{scaling0}). Set $N=y(\tau_k)=\left\lfloor\frac{\gamma^2}{(1+\gamma)^2}T\right\rfloor$.\\
(a1) If $\eta>\frac{\gamma}{1+\gamma}$, then for real numbers $\tau_1<\tau_2<\ldots<\tau_m$ and $s_1,\ldots, s_m$ it holds
\begin{equation}
\lim_{T\to\infty} \Pb\left(\bigcap_{k=1}^{m} \{\lambda_{N,\max}(x(\tau_k))\leq \ell(\tau_k,s_k)\}\right) = \Pb\left(\bigcap_{i=1}^{m}\{{\mathcal{A}_2}(\tau_i)\leq  s_i\}\right).
\end{equation}
(a2) If $\eta=\frac{\gamma}{1+\gamma}$, then for real numbers $\tau_1<\tau_2<\ldots<\tau_m$ and $s_1,\ldots, s_m$ it holds
\begin{equation}
 \lim_{T\to\infty} \Pb\left(\bigcap_{k=1}^{m} \{\lambda_{N,\max}(x(\tau_k))\leq \ell(\tau_k,s_k)\}\right) = \Pb\left(\bigcap_{i=1}^{m}\{\mathcal{A}_{{\rm BM}\to 2}(\tau_i)\leq  s_i\}\right).
\end{equation}
(b) Fix $m\in \N$, $\eta\in(0,1]$. Set $N=\lfloor T\rfloor$, then for any $\gamma_1<\gamma_2<\ldots<\gamma_m$ such that $\eta<\gamma_1(1+\gamma_1)^{-1}$, and $s_1,\ldots, s_m\in\R$, it holds
\begin{multline}
\lim_{T\to\infty}\Pb\left(\bigcap_{k=1}^{m} \left\{\lambda_{N,\max}(\gamma_k^2 T)\leq \left(\frac{\gamma_k^2}{\eta}+\frac{1}{1-\eta}\right) T+ s_kT^{1/2}\right\}\right)\\ =\Pb\left(\bigcap_{k=1}^{m}\left\{\mathcal{B}\left(\left[\frac{\gamma_k^2}{\eta^2}-\frac{1}{(1-\eta)^2}\right]\right) \leq s_k\right\}\right).
\end{multline}
\end{thm}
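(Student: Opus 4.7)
\emph{Proof plan.} The statement should follow as an essentially direct corollary of Proposition~\ref{ThmOneSidedLPPbasic} via the distributional identity between the perturbed Wishart eigenvalue process and exponential directed last passage percolation that is quoted in the paragraph preceding the theorem. The first step is to invoke that identity from~\cite{BP07,DW08}: the process $n\mapsto \lambda_{N,\max}(n)$ has the same finite-dimensional distributions as the process $n\mapsto L(n,N)$, where $L(n,N)$ is the last passage time from $(1,1)$ to $(n,N)$ in the exponential LPP model with independent weights of mean $1/(\pi_i+\tilde\pi_j)$. Consequently, each joint probability on the left-hand side of (a1), (a2), (b) can be rewritten verbatim as the corresponding joint probability for the LPP process $n\mapsto L(n,N)$.

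The second step is to verify that, under the specific choice $\pi_1=0$, $\pi_i=1-\eta$ for $i\geq 2$, and $\tilde\pi_j=\eta$ for $j\geq 1$, the model $L(n,N)$ agrees (up to reindexing and a transposition) with the one-sided model $L_1$ of (\ref{eqWaitTimeLPPone}). With these parameters the weights become $\text{Exp}(1/\eta)$ in the first row and $\text{Exp}(1)$ in all rows $i\geq 2$. Shifting $(i,j)\mapsto(i-1,j-1)$ places the special weights on the boundary row $\{i=0\}$, and the transposition $(i,j)\leftrightarrow(j,i)$ then moves them onto the boundary column $\{i=0\}$ as in~(\ref{eqWaitTimeLPPone}); the convention $w_{i,0}=0$ there is harmless since no optimal path starting at the origin benefits from those cells. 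The single discrepancy at the corner cell $(1,1)$ and the $\pm 1$ index shifts contribute only $O(1)$ terms, which are negligible on the $T^{1/3}$ (and $T^{1/2}$) scale.

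With the identification $L(n,N)\stackrel{d}{=}L_1(N,n)$ modulo these $O(1)$ adjustments in place, the scaling (\ref{scaling0}) in Theorem~\ref{ThmRM} matches the scaling of Proposition~\ref{ThmOneSidedLPPbasic} with $x(\tau_k)\leftrightarrow n$ and $y(\tau_k)\leftrightarrow N$, and likewise for part (b) where $N=\lfloor T\rfloor$ plays the role of the perpendicular-to-boundary coordinate after transposition. Parts (a1), (a2), and (b) of the theorem then follow immediately from the corresponding parts of Proposition~\ref{ThmOneSidedLPPbasic}. The only obstacle is purely bookkeeping: keeping track of the transposition between the random matrix convention (where the ``special'' direction is rows, indexed by $i$) and the LPP convention of Section~\ref{subsectOneSided} (where it is columns). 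Since Proposition~\ref{ThmOneSidedLPPbasic} is formulated symmetrically in the rectangle shape parameter $\gamma$, the critical threshold $\eta=\gamma(1+\gamma)^{-1}$ translates directly across the identification, and no additional analytic work is required beyond the verification above.
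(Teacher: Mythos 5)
Your proposal matches the paper's argument almost exactly: the paper itself offers no detailed proof beyond the two sentences preceding the theorem, which invoke the BP07/DW08 distributional identity and then assert that Proposition~\ref{ThmOneSidedLPPbasic} applies. You spell out the bookkeeping that the paper leaves implicit, which is worthwhile, but one detail is slightly off. With $\pi_1=0$, $\pi_i=1-\eta$ for $i\geq 2$, $\tilde\pi_j=\eta$, the LPP $L(n,N)$ from $(1,1)$ to $(n,N)$ has weights $\mathrm{Exp}(1/\eta)$ in the slice $\{i=1\}$ and $\mathrm{Exp}(1)$ elsewhere; after the single shift $i\mapsto i-1$ this slice lands on $\{i=0\}$, which is already where the special column of $L_1$ in~(\ref{eqWaitTimeLPPone}) lives, so no transposition $(i,j)\leftrightarrow(j,i)$ is needed. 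The zero row $\{j=0\}$ of $L_1$ is handled exactly, not merely up to $O(1)$: because the weights are nonnegative, $L_1(x,y)$ equals the last passage time from $(0,1)$ to $(x,y)$ in the weight array restricted to $j\geq1$, which is precisely $L(x+1,y)$ in the matrix convention. Hence $\lambda_{N,\max}(n)\stackrel{d}{=}L(n,N)=L_1(n-1,N)$, and the remaining $O(1)$ discrepancy between $L_1(x(\tau_k)-1,y(\tau_k))$ and $L_1(x(\tau_k),y(\tau_k))$ is indeed negligible on the $T^{1/3}$ (resp.~$T^{1/2}$) scale, after which Proposition~\ref{ThmOneSidedLPPbasic} gives the stated limits. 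So the proof is correct and is the paper's proof; only the parenthetical ``transposition'' remark is superfluous and could be dropped.
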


\subsection{Limit processes: definitions}\label{subsectDefProc}
Here we collect the definitions of the limit processes.

\begin{defin}[Airy$_2$ process, $\mathcal{A}_2$]\label{DefAiry2}
The Airy$_2$ process is defined in terms of finite dimensional distributions as
\begin{equation}
\Pb\left(\bigcap_{k=1}^{m}\{\mathcal{A}_2(\tau_k)\leq s_k\}\right)=\det(\Id-\chi_s K_{\mathcal{A}_2} \chi_s)_{L^2(\{\tau_1,\ldots, \tau_m\}\times \R)},
\end{equation}
where $\chi_s(\tau_k,x) = \Id_{[x>s_k]}$, and $K_{\mathcal{A}_2}$ is the extended Airy kernel:
\begin{equation}
K_{\mathcal{A}_2}(\tau,s;\tau',s')=\begin{cases}
{\displaystyle \int_{\R_+} dz e^{(\tau'-\tau)z}\Ai(s+z)\Ai(s'+z)}, & \tau\geq \tau',\\
{\displaystyle -\int_{\R_-} dz e^{(\tau'-\tau)z}\Ai(s+z)\Ai(s'+z)}, & \tau<\tau'.
\end{cases}
\end{equation}
\end{defin}
The Airy$_2$ process was discovered in the PNG model~\cite{PS02}. It is a stationary process with one-point distribution given by the GUE Tracy-Widom distribution $F_2$~\cite{TW94}. An integral representation of $K_{\mathcal{A}_2}$ can be found in Proposition~2.3 of~\cite{Jo03b}; another form is in Definition 21 of~\cite{BFS09} in the $M=0$ case.

\begin{defin}\label{DefBMtoAiry2}
We denote by $\mathcal{A}_{{\rm BM}\to 2}$ the transition process from Brownian Motion to Airy$_2$. It is defined in terms of finite dimensional distributions as
\begin{equation}
\Pb\left(\bigcap_{k=1}^{m}\{\mathcal{A}_{{\rm BM}\to 2}(\tau_k)\leq s_k\}\right)=\det(\Id-\chi_s K_{\mathcal{A}_{{\rm BM}\to 2}} \chi_s)_{L^2(\{\tau_1,\ldots, \tau_m\}\times \R)},
\end{equation}
where $\chi_s(\tau_k,x) = \Id_{[x>s_k]}$, and $K_{\mathcal{A}_{{\rm BM}\to 2}}$ is the rank-one perturbation $K_{\mathcal{A}_2}$:
\begin{equation}
\begin{aligned}
K_{\mathcal{A}_{{\rm BM}\to 2}}(\tau,s;\tau',s')=K_{\mathcal{A}_2}(\tau,s;\tau',s')+\Ai(s)\left(e^{\frac13\tau'^3-s'\tau'}-\int_{\R_+} dz e^{\tau'z}\Ai(s'+z)\right).
\end{aligned}
\end{equation}
\end{defin}

This transition process was derived in~\cite{SI04}. In Equation~(3.6) of~\cite{SI04} the kernel is divided into two cases. However, using the identity~(D.3) in~\cite{FS05a}, namely, $\int_\R dy e^{wy}\Ai(\beta+y)=e^{w^3/3-\beta w}$, we can rewrite
\begin{equation}
K_{\mathcal{A}_{{\rm BM}\to 2}}(\tau,s;\tau',s')=K_{\mathcal{A}_2}(\tau,s;\tau',s')+\Ai(s)\int_{\R_+}dz e^{-\tau' z} \Ai(s'-z)
\end{equation}
for $\tau'>0$.

An integral representation of the kernel $K_{\mathcal{A}_{{\rm BM}\to 2}}$ can be found in~\cite{BFS09}, Definition 21, in the $M=1$ case. To see the Brownian Motion behavior, one has to take the $\tau\ll -1$ and replace $s$ by $s+\tau^2$. This shift is needed to take into account that the actual limit shape at the critical point changes from (\ref{eqLimShapeRM}) to (\ref{eqLimShapeGauss}). So, for large $-\tau$, the approximation coming from (\ref{eqLimShapeRM}) is not optimal anymore. Indeed, using (\ref{eqLimShapeGauss}), $s-\tau^2$ would be replaced by $s$ in (\ref{scaling0}).

The definition of the process for stationary TASEP, $\mathcal{A}_{\rm stat}$, is quite intricate~\cite{BFP09}. Its joint distributions is the r.h.s.\ of Equation (1.9) in~\cite{BFP09}.

\begin{defin}
The last process, $\mathcal{B}$, is simply a standard one dimensional Brownian motion. Its finite dimensional distributions can be expressed in terms of a Fredholm determinant: let $0<\tau_1<\cdots < \tau_m$, then
\begin{equation}
\Pb\left(\bigcap_{k=1}^{m}\{\mathcal{B}(\tau_k)\leq s_k\}\right)=\det(\Id-\chi_s K_{\cal B} \chi_s)_{L^2(\{\tau_1,\ldots, \tau_m\}\times \R)},
\end{equation}
where $\chi_s(\tau_k,x) = \Id_{[x>s_k]}$, and the kernel $K_{\cal B}$ is given by
\begin{equation}
K_{\cal B}(\tau,s;\tau',s') =
\frac{1}{\sqrt{2\pi \tau}}\exp\left(-\frac{s^2}{2\tau}\right) - \frac{\Id_{[\tau>\tau']}}{\sqrt{2\pi (\tau-\tau')}}\exp\left(-\frac{(s-s')^2}{2(\tau-\tau')}\right).
\end{equation}
\end{defin}

\section{Proof of results}\label{sectproofs}

\subsection{Proof of Proposition~\ref{ThmOneSidedLPPbasic}}\label{subsectProofOneSided}
Let $n,N$ be positive integers.
Consider the directed percolation model with one-sided boundary conditions (\ref{eqWaitTimeLPPone}).
Let $L_1(n,N)$ be the last passage times from $(0,0)$ to $(n,N)$.
The joint distribution of $L_1(n,N), n\geq 0$ can be analyzed thanks to the so-called Schur process studied in~\cite{BP07}\footnote{The $(N,p)$ in~\cite{BP07} corresponds to $(n-1,N)$ in this paper.}.
In particular the joint distribution of the last passage times in the directed percolation model is given by:
\begin{equation}\label{cor_kernel}
\Pb\left(\bigcap_{k=1}^m\{L_1(n_k,N)\leq S_k\}\right)=\det(\Id-P_S K_N P_S)_{L^2(\{n_1,\ldots,n_m\}\times\R)},
\end{equation}
where $P_S(k,x):=\Id_{[x>S_k]}$ and $K_N$ is the correlation kernel given by
\begin{equation}\label{K_N1}
\begin{aligned}
K_N(n_i,x;n_j,y)&=-\Psi_{n_i,n_j}(x,y)+ K_N^1(n_i,x;n_j,y),\\
K_N^1(n_i,x;n_j,y)&=\frac{1}{(2\pi\I)^2}\oint_{\cal C}dz\oint_{\cal C'}dw \frac{e^{wy-zx}}{w-z}\frac{(z+1-\eta)^{n_i-1}}{(w+1-\eta)^{n_j-1}}\frac{z}{w}\frac{(w-\eta)^N}{(z-\eta)^N},\\
\Psi_{n_i,n_j}(x,y)&=\Id_{[n_i<n_j]}\Id_{[x<y]}\frac{1}{2\pi\I}\oint_{\cal C'}dw e^{w(y-x)}(w+1-\eta)^{n_i-n_j}.
\end{aligned}
\end{equation}
where $\cal C$ (resp.\  $\cal C'$) is a contour oriented anticlockwise and enclosing $\eta$ (resp.\  $0$ and $\eta-1$).

\subsubsection{The case where $\eta> \frac{\gamma}{1+\gamma}$}
Consider the asymptotics of the correlation kernel with the rescaling\footnote{Comparing (\ref{rescaling}) with (\ref{scaling0}) one sees two minor differences: (a) the integer parts are not explicitly written, but it is irrelevant for the large $T$ asymptotics and (b) $\tilde\ell$ does not have the shift by $-\tau_i^2$ on $s_i$. This is not a problem, because in our case the $\tau_i$ are chosen in a bounded set. The scaling (\ref{scaling0}) is reobtained in the end by replacing $s_i+\tau_i^2$ by $s_i$.}
\begin{equation}\label{rescaling}
\begin{aligned}
n_i&=\left ( \frac{1}{1+\gamma}\right)^2T+\frac{2\tau_i}{(1+\gamma)^{2/3}\gamma^{2/3}}T^{2/3},\quad
N=\left ( \frac{\gamma}{1+\gamma}\right)^2T,\\
x_i&=\tilde \ell(\tau_i, s_i)=T+\frac{2\tau_i(1+\gamma)^{1/3}}{\gamma^{2/3}}T^{2/3}+s_i \frac{(1+\gamma)^{2/3}}{\gamma^{1/3}}T^{1/3}.
\end{aligned}
\end{equation}
We fix some $s_0\in \R$ and assume that $s_i\geq s_0$ for any $i$. To give the result we need a few definitions. Let us set
\begin{equation}\label{def wc}
 w_c=\eta-\frac{\gamma}{1+\gamma},\quad \chi= \frac{(1+\gamma)^{1/3}}{\gamma^{2/3}},\quad \rho= \frac{(1+\gamma)^{2/3}}{\gamma^{1/3}}=\frac{1+\gamma}{\gamma \chi}
\end{equation}
and $Z(i):= \exp\left(2\tau_i^3/3+\tau_i s_i+TF_i(w_c)+s_iT^{1/3}\rho w_c\right)$ where
\begin{equation}
F_i(w)= w\left(1+ \frac{2\tau_i\chi}{ T^{1/3}}\right)   +\frac{\gamma^2}{(1+\gamma)^2} \ln (w-\eta)-\left (\frac{1}{(1+ \gamma)^2}+\frac{2\tau_i}{ T^{1/3} \gamma \rho}\right ) \ln (w+1-\eta).\label{def: Fi}
\end{equation}
\begin{prop} \label{prop: asym+decayKN}
Uniformly for $s_i,s_j$ in a bounded interval, it holds
\begin{equation}
\begin{aligned}
&\lim_{N \to \infty} \rho T^{1/3}\frac{Z(i)}{Z(j)}K_N^1(n_i,x_i; n_j,x_j)\\
&=\int_{0}^{\infty}e^{-\lambda (\tau_i-\tau_j)}\Ai(s_i+\tau_i^2+\lambda)\Ai(s_j+\tau_j^2+\lambda)d\lambda +\Or(T^{-1/3}).
\end{aligned}
\end{equation}
Furthemore, for any $\kappa>0$, there exists a $T_0$ large enough such that
\begin{equation}
\Big|\rho T^{1/3}\frac{Z(i)}{Z(j)}K_N^1(n_i,x_i; n_j,x_j)\Big|\leq C e^{-\kappa (s_i+s_j)}
\end{equation}
for all $s_i,s_j\in\R$ and $T\geq T_0$. The constant $C$ is uniform in $T\geq T_0$ and $s_i,s_j$.
\end{prop}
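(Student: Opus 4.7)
The plan is a double-contour steepest-descent analysis of the integral defining $K_N^1$. I would first isolate the $T$-dependence by writing the integrand as
\begin{equation*}
\frac{1}{(2\pi\I)^2}\,\frac{G(z,w)}{w-z}\,e^{TF_i(z)-TF_j(w)},
\end{equation*}
where $G(z,w)$ collects the slowly varying prefactors (the $z/w$ and the $T$-independent discrepancies between exponents $n_i-1$ and $n_i$). Denoting by $F_0$ the $\tau_i=0$ value of $F_i$, a direct computation gives $F_0'(w_c)=F_0''(w_c)=0$ and $F_0'''(w_c)=-2(1+\gamma)^2/\gamma\neq 0$, so $w_c$ is a double critical point of $F_0$. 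The scaling constants $\chi,\rho$ in~(\ref{def wc}), together with the conjugation by $Z(i)/Z(j)$, are tuned so that under the local substitution $z=w_c+Z/(\chi T^{1/3})$, $w=w_c+W/(\chi T^{1/3})$, followed by the standard Airy shift $Z\mapsto Z+\tau_i$, $W\mapsto W+\tau_j$ and an overall rescaling of $Z,W$ absorbing the cubic coefficient, the exponent reduces to $-\tfrac13Z^3-(s_i+\tau_i^2)Z+\tfrac13W^3+(s_j+\tau_j^2)W+o(1)$: the identity $\chi\rho=(1+\gamma)/\gamma$ is precisely what cancels the would-be $O(1)$ linear term coming from the $\tau_i$-perturbation of $F_0$ at $w_c$, and the $\tau_i^2$ shift on $s_i$ emerges from completing the square under the Airy shift.

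Next I would choose global contours realizing this steepest descent. Near $w_c$, the level curves of $\Re F_0$ form a ``monkey saddle'' with six alternating ascent/descent rays at angles $k\pi/3$, and I would take $\mathcal{C}$ through $w_c$ along the two descent rays and $\mathcal{C}'$ through $w_c-\varepsilon$ along the two ascent rays, with small $\varepsilon>0$ ensuring $\Re(W-Z)>0$. These local pieces can be closed up globally so that $\mathcal{C}$ still encircles only $\eta$ and $\mathcal{C}'$ only $0$ and $\eta-1$, which is possible because $\eta-1<w_c<\eta$. On the local pieces, writing $\frac{1}{W-Z}=\int_0^{\infty}e^{-\lambda(W-Z)}\,\dx\lambda$ and using the Airy representation $\Ai(s)=\frac{1}{2\pi\I}\int e^{\zeta^3/3-s\zeta}\,\dx\zeta$ along the image of the descent rays identifies the local contribution as $\int_0^\infty e^{-\lambda(\tau_i-\tau_j)}\Ai(s_i+\tau_i^2+\lambda)\Ai(s_j+\tau_j^2+\lambda)\,\dx\lambda$. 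The Jacobians of the rescalings combine with the $\rho T^{1/3}$ prefactor in the statement to give the correct normalization, while $Z(i)/Z(j)$ absorbs both the leading $e^{T(F_i(w_c)-F_j(w_c))}$ and the Airy-shift artefacts $e^{2\tau_i^3/3-\tau_is_i}/e^{2\tau_j^3/3-\tau_js_j}$.

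The remaining work is a tail estimate and the uniform-in-$s$ bound. For the tail one shows that $\Re(F_0-F_0(w_c))$ has the correct sign along $\mathcal{C}\setminus\{w_c\}$ and $\mathcal{C}'\setminus\{w_c\}$ with a cubic lower bound $|\zeta-w_c|^3\wedge 1$, so that the contribution from outside a $T^{-1/3+\delta}$ window around $w_c$ is superpolynomially small. For the uniform exponential decay in $s_i,s_j$, I would use the standard Airy contour shift $Z\mapsto Z+\sqrt{(s_i)_+}$, $W\mapsto W-\sqrt{(s_j)_+}$, which produces an overall prefactor $e^{-\tfrac23(s_i)_+^{3/2}-\tfrac23(s_j)_+^{3/2}}$ from the cubic that dominates any prescribed $e^{-\kappa(s_i+s_j)}$; together with boundedness of the limit kernel on compact sets this yields the claim for arbitrary $\kappa$. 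The main technical obstacle is to perform these contour deformations while simultaneously preserving the global enclosure of the correct poles by $\mathcal{C},\mathcal{C}'$ and the inequality $\Re(W-Z)>0$ uniformly in $T\geq T_0$; this bookkeeping is standard for such asymptotic analyses but must be carried out carefully so that both the asymptotic identity and the uniform bound survive.
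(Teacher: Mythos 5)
Your overall strategy matches the paper's: a double-contour steepest-descent analysis at the double critical point $w_c$, with the factor $1/(w-z)$ converted into an integral over $\lambda$ to produce the Airy--Airy convolution. One organizational difference worth noting: the paper performs the factorization $\tfrac{1}{w-z}=-\int_0^\infty e^{\lambda(w-z)\rho T^{1/3}}\rho T^{1/3}\,\dx\lambda$ at the \emph{global} level, before any rescaling, splitting $\rho T^{1/3}K_N^1$ into $\int_0^\infty H(\tau_i,s_i+\lambda)G(\tau_j,s_j+\lambda)\,\dx\lambda$ with $H$, $G$ single-contour integrals; it then does steepest descent on $H$ and $G$ separately. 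You instead propose to carry the double integral through the saddle-point localization and factorize only in the local (rescaled) variables. Both routes give the same local limit; the paper's early decoupling has the advantage that the tail and decay estimates become single-variable and hence much easier to state and prove uniformly.

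The genuine gap is in your treatment of the uniform exponential decay bound $|\rho T^{1/3}\tfrac{Z(i)}{Z(j)}K_N^1|\le Ce^{-\kappa(s_i+s_j)}$, required for all $s_i,s_j\in\R$ and all $T\ge T_0$. You propose the $s$-dependent ``Airy shift'' $Z\mapsto Z+\sqrt{(s_i)_+}$, $W\mapsto W-\sqrt{(s_j)_+}$ in the rescaled coordinates, which in the original variables moves the contours by $\sqrt{(s_i)_+}/(\rho T^{1/3})$. For fixed $T$ and large $s_i$ (say $s_i$ of order $T^{2/3}$ or larger) this shift is $O(1)$ or bigger and pushes the contour out of the window where the local cubic approximation of $F_i$ controls the integrand, and can force the contours to cross the poles at $\eta$, $\eta-1$, $0$ or each other. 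Thus your argument does not, as written, give a bound uniform in $T\ge T_0$ and $s_i,s_j\in\R$. The paper instead uses a \emph{fixed} shift of size $\kappa'/(\rho T^{1/3})$ with $\kappa'=\max_i|\tau_i|+\kappa$, which stays inside the controlled neighbourhood uniformly in $T$ and yields $|H(\tau_i,s_i)|\lesssim T^{-1/3}e^{-\kappa's_i}$ and similarly for $G$; the margin $\max_i|\tau_i|$ is needed to absorb the $e^{\tau_is_i-\tau_js_j}$ produced by the conjugation $Z(i)/Z(j)$, a factor your sketch does not account for (you write the Airy-shift artefact with a sign error as $e^{2\tau_i^3/3-\tau_is_i}$ and omit the dominant $e^{s_iT^{1/3}\rho w_c}$ term in $Z(i)$). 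Replacing your $\sqrt{s}$-shift by the fixed shift, and factorizing globally as the paper does so that the exponential bound propagates through the one-dimensional $\lambda$-integral (convergent precisely because $\kappa'>\max_i|\tau_i|$), repairs the proof. There is also a small labeling inconsistency in your contour description (the contour you place at $w_c-\varepsilon$ along the ``ascent'' rays and the side on which $\Re(W-Z)>0$ holds are incompatible as stated), but this is cosmetic.
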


\begin{proof}[Proof of Proposition~\ref{prop: asym+decayKN}]
The proof of Proposition~\ref{prop: asym+decayKN} relies on a saddle point analysis of the correlation kernel (\ref{K_N1}) with the rescaling (\ref{rescaling}).
We first rewrite the singularity $1/(w-z)$ in the kernel (\ref{K_N1}) as
\begin{equation}
\frac{1}{w-z}=-\int_0^{\infty} e^{\lambda(w-z) \rho T^{1/3}} \rho T^{1/3}d\lambda.
\end{equation}
This allows us to rewrite $K_N^1$ as a product of two integral kernels:
\begin{equation}
\rho T^{1/3}K_N^1(n_i,x_i; n_j,x_j)= \int_0^{\infty}H(\tau_i, s_i+\lambda)G(\tau_j, s_j+\lambda) d\lambda,
\end{equation}
where
\begin{equation}
\begin{aligned}
 H(\tau_i,s_i)&=\frac{\rho T^{1/3}}{2\pi\I} \oint_{\widetilde {\cal C}} e^{-Tz -2\tau_i \chi T^{2/3}z} z
\frac{(z+1-\eta)^{\frac{T}{(1+\gamma)^2} +2\tau_i \frac{T^{2/3}}{\rho \gamma}}}{(z-\eta)^{T(\frac{\gamma}{1+\gamma})^2}} e^{-s_i T^{1/3}\rho z}dz, \\
G(\tau_i,s_i)&=\frac{\rho T^{1/3}}{2\pi\I}\oint_{\cal C'} e^{Tw +2\tau_i \chi T^{2/3}w} \frac1w
\frac{(w-\eta)^{T(\frac{\gamma}{1+\gamma})^2}}{(w+1-\eta)^{\frac{T}{(1+\gamma)^2} +2\tau_i \frac{T^{2/3}}{\rho \gamma}}} e^{s_i T^{1/3}\rho w }dw.
\end{aligned}
\end{equation}
The contour $\widetilde {\cal C}$ is like $\cal C$ but instead of anticlockwise it is clockwise oriented. Note that the two contours $\widetilde {\cal C}$ and $\cal C'$ still cannot cross each other. To perform a saddle point analysis of both $G$ and $H$, we consider the first order leading terms in the exponential. Recall $F_i$ from (\ref{def: Fi}). Then for $\tau_i$ in a uniformly (in $T$) bounded interval, $F_i$ admits $w_c$ defined in (\ref{def wc}) as unique critical point. Furthermore \begin{equation}
F_i''(w_c)=2(1+\gamma)^2\frac{\tau_i}{\gamma \rho} T^{-1/3}, \quad F^{(3)}_i(w_c)=-2 \frac{(\gamma+1)^2}{\gamma}.
\end{equation}

Now we briefly expose the ideas of the asymptotics since such arguments have already been developed many times (see e.g.~\cite{BBP06} or at the beginning of the proof of Lemma~6.1 in~\cite{BF08}, where the steps are explained). From the assumption $\eta >\frac{\gamma}{1+\gamma}$ it follows $w_c>0$.
Consider the contours
\begin{equation}
{\cal C}_1=\{w_c+ te^{\pi\I/3}, t\in \R\}, \quad {\cal C}'_1=\{w_c+ te^{2\pi\I/3},0\leq t \leq 2\}.
\end{equation}
Then one has that
\begin{equation}
 \frac{d}{dt} \Re F_i\left(w_c+\frac{t}{\gamma+1}e^{2\pi\I/3}\right)= \frac{ -t^4+(\gamma-1)t^3-2\gamma t^2}{(t^2-t+1)(t^2+\gamma t +\gamma^2)}
-\tau_i \chi \frac{t(t+1)}{1-t+t^2}T^{-1/3}.\label{decF}
\end{equation}
The denominators in (\ref{decF}) is positive, being (product of) squared distances between the poles of the integrand and the integration path, e.g., $t^2-t+1= (\gamma+1)^2|w_c+ \frac{t}{\gamma+1}e^{2\pi\I/3}+1-\eta|^2.$
In particular $\Re F_i$ decreases along ${\cal C}'_1$. We now complete the contour in the upper half-plane as follows.
Call $w_0$ the endpoint of ${\cal C}'_1$ and set $r:= |w_0-\eta+1|$. Let $0<\theta_0<\pi$ be such that $w_0=\eta-1+re^{\I \theta_0}.$
Define
\begin{equation}
{\cal C}'_2:=\{\eta-1+r e^{\I\theta},\theta_0\leq \theta \leq \pi\}.
\end{equation}
Then it is not hard to see that $\Re F_i$ decreases along ${\cal C}'_2$:
\begin{equation}
\frac{d}{d\theta }\Re F_i\left(\eta-1+r e^{\I\theta}\right)=-r\sin \theta \left (1+\frac{2\tau_i\chi}{T^{1/3}}-\frac{1}{|-1+re^{\I\theta}|^2}\right)<-cr\sin \theta ,
\end{equation}
for some constant $c>0$. Thus ${\cal C}'={\cal C}'_1\cup {\cal C}'_2 \cup \overline{{\cal C}'_1\cup {\cal C}'_2}$ is a steep descent path for $F_i$.

For the $z-$contour, one has that
\begin{equation}
\frac{d}{dt}\Re F_i\left(w_c+ \frac{t}{\gamma+1}e^{\pi\I/3}\right)= \frac{t^4+t^3(\gamma-1)+2\gamma t^2}{(1+t+t^2)(t^2-\gamma t +\gamma^2)} +\tau_i \chi \frac{t(t-1)}{1+t+t^2} T^{-1/3}.
\end{equation}
In particular $\frac{d}{dt}\Re  F_i(w_c+ \frac{t}{\gamma+1}e^{\pi\I/3})>0$ as soon as $t\geq \frac12\gamma \tau_i \chi T^{-1/3}.$
From the latter we deduce that the main contribution to the $z-$integral will come from a $T^{-1/3}$ neighborhood of $w_c$. The contour ${\cal C}_1$ is not a steep ascent contour for $F_i$ but is enough for the purpose of evaluating the integral: it is a steep ascent path for the first order approximation of $F_i$, that is forgetting for a while the $\Or(T^{-1/3})$ terms in $F_i$.

In order to take care of the constraint on the contours $\widetilde {\cal C}$ and $\cal C'$ which cannot cross or touch each other, we now deform the $w$ and $z$ contours in a $T^{-1/3}$ neighborhood of $w_c$ so that the $w$- (resp.\ $z$-) contour lies to the left (resp.\  right) of $w_c$ (see Figure~\ref{FigContours}). As $F_i''(w)=\Or(T^{-1/3}),$ and \mbox{$F_i^{(3)} (w)=\Or(1)$} in such a neighborhood, the fact that the two contours are moved of $\Or(T^{-1/3})$ from the critical point has no impact on the asymptotics: this follows from a straightforward Taylor expansion of the exponential term.
\begin{figure}
\begin{center}
\psfrag{C}[c]{$\widetilde{\cal C}$}
\psfrag{C'}[c]{${\cal C}'$}
\psfrag{wc}[c]{$w_c$}
\psfrag{e}[c]{$\Or(T^{-1/3})$}
\includegraphics[height=4cm]{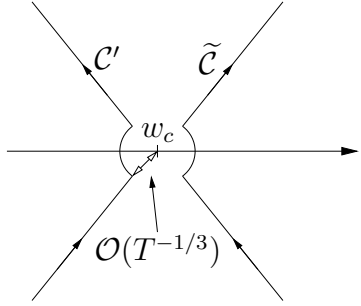}
\caption{The contours $\widetilde{\cal C}$ and $\cal C'$ are slightly deformed in the neighborhood of $w_c$ so that they don't touch each other.}
\label{FigContours}
\end{center}
\end{figure}

We then make the change of variables
\begin{equation}
w=w_c+\frac{s}{\rho T^{1/3}}, \quad z=w_c+\frac{t}{\rho T^{1/3}}.
\end{equation}
Then it is not hard to see that for bounded $s_i,s_j$ it holds
\begin{equation}\label{asyKN}
\begin{aligned}
\frac{1}{w_c}e^{TF_i(w_c)+s_i \rho T^{1/3}w_c}H(\tau_i,s_i)=&
\frac{1}{2\pi\I}\int_{\infty e^{-\pi\I/3}}^{\infty e^{\pi\I/3}}e^{t^3/3-s_it-\tau_it^2}dt +\Or(T^{-1/3}) \\
=&\Ai(\tau_i^2+s_i)e^{-2\tau_i^3/3-\tau_i s_i}+\Or(T^{-1/3}),\\
w_ce^{-TF_j(w_c)-s_i \rho T^{1/3}w_c}G(\tau_i,s_i)
=&\frac{1}{2\pi\I}\int_{\infty e^{-2\pi\I/3}}^{\infty e^{2\pi\I/3}}e^{-s^3/3+s_is+\tau_is^2}dt +\Or(T^{-1/3})\\\
=&\Ai(\tau_i^2+s_i)e^{2\tau_i^3/3+\tau_i s_i}+\Or(T^{-1/3}).
\end{aligned}
\end{equation}
In the above we used Appendix A in~\cite{BFP09} to derive Airy identities.
In the case where $s_i>0$, one also gets the following exponential decay: let $\kappa' >0$ be given. Then, as we can
modify the contours ${\cal C}$ and ${\cal C}'$ so that $\Re(w-w_c)<-\frac{\kappa'}{T^{1/3}\rho }$ while $\Re(z-w_c)>\frac{\kappa'}{T^{1/3}\rho }$ one gets that
\begin{equation}
\begin{aligned}
\Big|\frac{1}{w_c}e^{TF_i(w_c)+s_i\rho T^{1/3}w_c}H(\tau_i,s_i)\Big|&\leq C\frac{e^{-\kappa' s_i}}{T^{1/3}},\\
\Big|w_ce^{-TF_j(w_c)-s_i \rho T^{1/3}w_c}G(\tau_i,s_i)\Big|&\leq C\frac{e^{-\kappa' s_i}}{T^{1/3}}. \label{expdecHG}
\end{aligned}
\end{equation}

This ensures that
\begin{equation}\lim_{N \to \infty}\rho T^{1/3}\frac{Z(i)}{Z(j)}K_N^1(n_i,x_i; n_j,
x_j)=\int_{0}^{\infty}e^{-\lambda (\tau_i-\tau_j)}\Ai(s_i+\tau_i^2+\lambda)\Ai(s_j+\tau_j^2+\lambda)d\lambda
\end{equation}
in the trace-norm class (one can choose $\kappa' =\max \{|\tau_i|, i=1, \ldots, N\}+\kappa)$.
\end{proof}

We also need to consider the asymptotics of $\rho T^{1/3}\Psi_{n_i,n_j}(x_i,x_j)\frac{Z(i)}{Z(j)}.$

\begin{prop} \label{Prop: psi}
For $|s_i-s_j|$ in a bounded interval, it holds
\begin{equation}\rho T^{1/3}\Psi_{n_i,n_j}(x_i,x_j)\frac{Z(i)}{Z(j)}=\frac{1}{\sqrt{4\pi (\tau_j-\tau_i)} }\exp\left(-\frac{(s_j-s_i)^2}{4(\tau_j-\tau_i)}\right)+\Or(T^{-1/3}).
\end{equation}
Furthermore for any $\kappa>0$, there exists a $T_0$ large enough such that
\begin{equation}
\Big|\rho T^{1/3}\Psi_{n_i,n_j}(x_i,x_j)\frac{Z(i)}{Z(j)}\Big|\leq  C e^{-\kappa|s_i-s_j|+(\tau_i-\tau_j)(s_i+s_j)/2}. \label{majoPsi}
\end{equation}
for all $s_i,s_j\in\R$ and $T\geq T_0$. The constant $C$ is uniform in $T\geq T_0$ and $s_i,s_j$.
\end{prop}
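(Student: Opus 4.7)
The plan is to run a saddle-point analysis of the single-contour integral defining $\Psi_{n_i,n_j}(x_i,x_j)$, re-using the critical point $w=w_c$ already identified in Proposition~\ref{prop: asym+decayKN}. The fact that $\Psi$ involves only one contour rather than two is precisely what produces a Gaussian limit rather than an Airy-kernel one.

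The first step is algebraic. From the definitions of $F_i$, $n_i$, $N$, $x_i$ one checks directly that
\[
TF_i(w_c)+s_i T^{1/3}\rho w_c = w_c x_i + N\log(w_c-\eta) - n_i\log(w_c+1-\eta),
\]
hence $Z(i)/Z(j) = e^{\alpha_{ij}}\,e^{w_c(x_i-x_j)}(w_c+1-\eta)^{n_j-n_i}$ with the bounded prefactor $\alpha_{ij}:=\tfrac23(\tau_i^3-\tau_j^3)+\tau_is_i-\tau_js_j$. Substituting into $\Psi$ and setting $u=w-w_c$, the product $\Psi\cdot Z(i)/Z(j)$ becomes $e^{\alpha_{ij}}$ times a single contour integral with exponent $u(x_j-x_i)+(n_i-n_j)\log(1+u(1+\gamma))$. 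Expanding the logarithm and inserting $x_j-x_i=2(\tau_j-\tau_i)\chi T^{2/3}+(s_j-s_i)\rho T^{1/3}$, $n_j-n_i=2(\tau_j-\tau_i)(\rho\gamma)^{-1}T^{2/3}$, combined with the algebraic identities $\rho\gamma\chi=1+\gamma$ and $\rho^3\gamma=(1+\gamma)^2$, one finds that the putative $T^{2/3}$ linear-in-$u$ term cancels (this is the critical point condition), leaving the exponent as $u(s_j-s_i)\rho T^{1/3}+(\tau_j-\tau_i)\rho^2 T^{2/3}u^2+\Or(T^{2/3}u^3)$.

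Next, the Gaussian rescaling $u=it/(\rho T^{1/3})$ converts these three pieces into $it(s_j-s_i)$, $-(\tau_j-\tau_i)t^2$ and $\Or(T^{-1/3})$ respectively. Deforming the $w$-contour locally to the vertical line through $w_c$---admissible because $w_c+1-\eta=1/(1+\gamma)>0$, so the line stays away from the branch point at $w=\eta-1$---the integral collapses in the limit to $(\rho T^{1/3})^{-1}$ times the Fourier transform of a Gaussian, which evaluates in closed form to
\[
\frac{1}{\rho T^{1/3}\sqrt{4\pi(\tau_j-\tau_i)}}\exp\!\left(-\frac{(s_j-s_i)^2}{4(\tau_j-\tau_i)}\right).
\]
Multiplying by $\rho T^{1/3}\,e^{\alpha_{ij}}$ then gives the claimed leading term with $\Or(T^{-1/3})$ correction, where $e^{\alpha_{ij}}$ is the phase that, combined with the shift $s_i\mapsto s_i+\tau_i^2$ appearing in the Airy arguments of Proposition~\ref{prop: asym+decayKN}, assembles $-\Psi+K_N^1$ into the standard extended Airy kernel.

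For the pointwise bound (\ref{majoPsi}) one uses the same contour but pushed globally: the vertical line through $w_c$ must be verified to be a steep-descent path for the \emph{full} exponent, not merely its quadratic Taylor approximation. This reduces to a monotonicity check on $\Re\log(w_c+it_0\rho^{-1}T^{-1/3}+1-\eta)$ and proceeds exactly in parallel with the steep-descent analysis carried out in the proof of Proposition~\ref{prop: asym+decayKN}. The exponential factor $e^{-\kappa|s_i-s_j|}$ is then produced by the standard device of shifting the contour horizontally by $\pm\kappa/(\rho T^{1/3})$: the linear-in-$u$ part of the exponent contributes exactly $-\kappa|s_j-s_i|$, while the quadratic part is only perturbed by $\Or(T^{-1/3})$. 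The remaining factor $e^{(\tau_i-\tau_j)(s_i+s_j)/2}$ in (\ref{majoPsi}) comes directly from $|e^{\alpha_{ij}}|$. The main technical obstacle is the global steep-descent verification uniform in $s_i,s_j$, but it uses no ingredient beyond those already developed for $H$ and $G$ in the previous proposition.
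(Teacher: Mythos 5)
Your proof is correct and follows essentially the same saddle-point route as the paper: the same critical point $w_c$, the same Gaussian rescaling $w=w_c+\I t/(\rho T^{1/3})$, and the same device of pushing the contour by $\pm\kappa'/(\rho T^{1/3})$ (with $\kappa'\geq\max_i|\tau_i|+\kappa$) for the exponential bound. The only cosmetic differences are that you use a vertical line where the paper uses the circle centered at $\eta-1$ through $w_c$ (tangent to your line there, so the local analysis coincides), you explicitly record the conjugation factor $e^{\alpha_{ij}}$ with $\alpha_{ij}=\tfrac23(\tau_i^3-\tau_j^3)+\tau_is_i-\tau_js_j$ (which also appears in the paper's intermediate display but is suppressed in the statement of Proposition~\ref{Prop: psi}; this is harmless since the Fredholm determinant is invariant under such conjugations), and you call the singularity at $w=\eta-1$ a branch point whereas $n_i-n_j\in\Z$ makes it a pole.
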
\begin{proof}[Proof of Proposition~\ref{Prop: psi}]
The asymptotics of $\Psi $ are again analyzed through a saddle point argument.
Consider
\begin{equation}
f(w):=\chi w -(\rho \gamma)^{-1}\ln (w+1-\eta).
\end{equation}
Then
\begin{equation}
\Psi_{n_i,n_j}(x_i,x_j)=\frac{1}{2\pi\I}\oint_{{\cal C}'} e^{2(\tau_j-\tau_i) f(w)T^{2/3}+\rho (s_j-s_i)w T^{1/3}}dw.
\end{equation}
The critical point is again $w_c=\eta-\frac{\gamma}{1+\gamma}$ and $f''(w_c)=\rho^2>0.$
We choose the contour to be the circle centered at $\eta -1$ and passing through $w_c$: this is a steep descent path. Making the change of variables $w=w_c+\frac{\I t}{\rho T^{1/3}}$ one gets that
\begin{equation}
\begin{aligned}
\rho T^{1/3}\Psi_{n_i,n_j}(x_i,x_j)e^{-f(w_c)}&=\frac{1}{2\pi }\int_{\R}e^{-t^2(\tau_j-\tau_i)+\I t(s_j-s_i)}dt+\Or(T^{-1/3})\\
&=\frac{1}{\sqrt{4\pi (\tau_j-\tau_i)} }\exp\left(-\frac{(s_j-s_i)^2}{4(\tau_j-\tau_i)}\right)+\Or(T^{-1/3}).
\end{aligned}
\end{equation}
Thus one obtains
\begin{equation}
\begin{aligned}
&\rho T^{1/3}\Psi_{n_i,n_j}(x_i,x_j)e^{T(F_i(w_c)-F_j(w_c))+(s_i-s_j)T^{1/3}w_c+2(\tau_i^3-\tau_j^3)/3+\tau_i s_i-\tau_js_j}\\
&=\frac{1}{\sqrt{4\pi (\tau_j-\tau_i)} }\exp\left(-\frac{(s_j-s_i)^2}{4(\tau_j-\tau_i)}+\frac23(\tau_i^3-\tau_j^3)+\tau_i s_i-\tau_js_j\right)+\Or(T^{-1/3}).
\end{aligned}
\end{equation}
The exponential decay for large $s_j-s_i$ is obtained as in~\cite{BFP09}, Lemma 21. For large positive $s_j-s_i$ we can modify the contour so that it lies to the left of $w_c$: $\Re (w-w_c)<-\kappa' /(\rho T^{1/3})$ for any arbitrary $\kappa' >0.$ For large negative $s_j-s_i$, the contour is modified in the following way: it is again a circle centered at $\eta-1$ but passing through $\widetilde w_c:=w_c+\kappa'/(\rho T^{1/3})$. It is a simple computation to check that $f$ decreases along this modified contour and ensures that
\begin{equation}
\Big|\rho T^{1/3}\Psi_{n_i,n_j}(x_i,x_j)e^{T(F_i(w_c)-F_j(w_c))+(s_i-s_j)T^{1/3}w_c}\Big|\leq C e^{-\kappa' |s_i-s_j|}.
\end{equation}
The complete details to derive (\ref{majoPsi}) from the above estimate is given in Lemma 21 in~\cite{BFP09}
(choosing again $\kappa' \geq \max |\tau_i|+\kappa$).
 \end{proof}
Combining Proposition~\ref{prop: asym+decayKN}, Proposition~\ref{Prop: psi} and the definitions of Section~\ref{subsectDefProc} yields part (a1) of Proposition~\ref{ThmOneSidedLPPbasic}.

\subsubsection{The case where $\eta=\frac{\gamma}{1+\gamma}$}
The rescaling is still given by (\ref{rescaling}). In this case, $w_c=0$. We recall that the contour ${\cal C}'$ has to encircle the pole $w=0$. Thus to get the exponential decay for large positive $s_i$, one needs to consider a different conjugation of the kernel. Indeed it is no longer possible to deform the contour ${\cal C}'$ so that it lies to the left of $w_c$. On the other hand, it is a well known fact that conjugation does not impact on the correlation functions of a determinantal random point process.

Let then $\delta>0$ be given. Define
\begin{equation}
Z(i, \delta):= \exp\left(\frac23 \tau_i^3+\tau_i s_i+TF_i(w_c)+s_i\rho T^{1/3}\left(w_c+\frac{\delta}{\rho T^{1/3}}\right)\right).
\end{equation}

\begin{prop}  \label{Prop:wc=0}
Uniformly for $s_i,s_j$ in a bounded interval, it holds
\begin{equation}
\begin{aligned}
&\lim_{N \to \infty} \rho T^{1/3}\frac{Z(i, \delta)}{Z(j, \delta)}K_N^1(n_i,x_i; n_j,x_j)\\
&= \Ai(s_i+\tau_i^2)e^{(s_i-s_j)\delta} \left (e^{ -2\tau_j^3/3-\tau_j s_j}-\int_0^{\infty}\Ai(\tau_j^2+s_j+x)e^{\tau_j x}dx\right)\\
&+e^{(s_i-s_j)\delta}\int_{0}^{\infty}e^{-\lambda (\tau_i-\tau_j)}\Ai(s_i+\tau_i^2+\lambda)\Ai(s_j+\tau_j^2+\lambda)d\lambda
+\Or(T^{-1/3}).
\end{aligned}
\end{equation}
Furthemore, for any $\kappa>0$, there exists a $T_0$ large enough such that
\begin{equation}
\Big|\rho T^{1/3}\frac{Z(i, \delta)}{Z(j, \delta )}K_N^1(n_i,x_i; n_j,x_j)\Big|\leq e^{-\kappa (s_i+s_j)}
\end{equation}
for all $s_i,s_j\in\R$ and $T\geq T_0$. The constant $C$ is uniform in $T\geq T_0$ and $s_i,s_j$.
\end{prop}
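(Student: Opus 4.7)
The plan is to follow the same integral factorization
$\rho T^{1/3} K_N^1(n_i,x_i;n_j,x_j) = \int_0^\infty H(\tau_i, s_i+\lambda)\,G(\tau_j, s_j+\lambda)\,d\lambda$
as in the proof of Proposition~\ref{prop: asym+decayKN}. In the critical case $\eta=\gamma/(1+\gamma)$, the critical point $w_c=0$ coincides with the pole of the factor $1/w$ in the integrand of $G$, so a direct steep-descent argument through $w_c$ is not available. The role of the parameter $\delta>0$ in the conjugation $Z(i,\delta) = Z(i) e^{s_i\delta}$ is to give an admissible gauge: since the correlation kernel of a determinantal point process is defined only up to conjugation $f(i)/f(j)$, the extra factor $e^{(s_i-s_j)\delta}$ does not affect the Fredholm determinants of interest, but it translates the effective contours by $+\delta/(\rho T^{1/3})$, separating them from the pole and restoring the exponential decay in $s_i,s_j$ needed for trace-norm convergence.

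For $H$, I would repeat the analysis of Proposition~\ref{prop: asym+decayKN} essentially verbatim, with the $z$-contour chosen as a steep descent path through $\delta/(\rho T^{1/3})$ enclosing $z=\eta$; since the integrand has the factor $z$ and no pole at $0$, this step goes through without modification. For $G$, I would deform the $w$-contour so that it passes slightly to the left of $w=0$ and closes around $\eta-1<0$ only, no longer enclosing the origin. The residue theorem then gives the decomposition
\[
G(\tau_j, s_j+\lambda) = \rho T^{1/3}e^{T F_j(0)} + G_{\mathrm{new}}(\tau_j, s_j+\lambda),
\]
in which the residue is independent of $\lambda$ and of $s_j$ because $(1/w)\,e^{\lambda w\rho T^{1/3}+s_j T^{1/3}\rho w}$ has residue $1$ at $w=0$. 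On the new contour, which lies entirely in $\{\Re w \leq -\delta'/(\rho T^{1/3})\}$ for some $\delta'>0$, the factor $e^{s_j T^{1/3}\rho w}$ decays uniformly for large $s_j$, and the saddle-point analysis near $w_c=0$ proceeds as in the subcritical proof.

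To assemble the two contributions, note first that the residue is $\lambda$-independent, so the corresponding piece of $K_N^1$ factors as the residue times $\int_0^\infty H(\tau_i, s_i+\lambda)\,d\lambda$. Interchanging the order of integration and using $\int_0^\infty e^{-\lambda T^{1/3}\rho z}\,d\lambda = 1/(T^{1/3}\rho z)$ cancels the $z$-factor in $H$'s integrand, leaving a contour integral whose saddle-point evaluation produces $\Ai(s_i+\tau_i^2)e^{-2\tau_i^3/3-\tau_i s_i}/(\rho T^{1/3})$; multiplying by $\rho T^{1/3}e^{TF_j(0)}$ and by $Z(i,\delta)/Z(j,\delta)$ yields exactly the $\Ai(s_i+\tau_i^2)\,e^{(s_i-s_j)\delta}\,e^{-2\tau_j^3/3-\tau_j s_j}$ piece of the claim. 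The $G_{\mathrm{new}}$ piece gives, after the saddle-point analysis and $\lambda$-integration, both the $\mathcal{A}_2$ extended kernel $\int_0^\infty e^{-\lambda(\tau_i-\tau_j)}\Ai(s_i+\tau_i^2+\lambda)\Ai(s_j+\tau_j^2+\lambda)\,d\lambda$ and the subtracted correction $-\Ai(s_i+\tau_i^2)\int_0^\infty\Ai(\tau_j^2+s_j+x)e^{\tau_j x}\,dx$; the correction arises because the surviving $1/t$ factor (from $1/w$ after the rescaling $w=t/(\rho T^{1/3})$) can be represented, on the relevant side of the contour where $\Re t<0$, as $-\int_0^\infty e^{tx}\,dx$, which converts the saddle-point Airy integral into an Airy-exponential convolution against $e^{\tau_j x}$. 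The exponential bound in $s_i,s_j$ is obtained from the same steep-descent arguments as in Proposition~\ref{prop: asym+decayKN}, now made possible precisely by the fact that $\delta>0$ separates the $w$-contour from the pole. The main obstacle, which I would handle with care, is the bookkeeping of exponential prefactors among $H$, $G_{\mathrm{new}}$, the residue, and the conjugation $Z(i,\delta)/Z(j,\delta)$: the clean cancellations needed to land on the stated limit are what make the computation delicate.
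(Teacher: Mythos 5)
Your proposal is correct and reaches the stated limit, but it is organized differently from the paper's proof, and the comparison is worth noting. The paper does not extract the residue at $w=0$ at the contour-integral level. Instead it defines rescaled factors $\widetilde H, \widetilde G$ with the $(\rho T^{1/3})$-powers redistributed (so $\widetilde H$ carries $(\rho T^{1/3})^2$ and $\widetilde G$ carries none), keeps the pole at $w=0$ inside ${\cal C}'$ throughout, and deforms ${\cal C}'$ so that it encircles $0$ while staying to the left of $\widetilde w_c=\delta/(\rho T^{1/3})$. The residue then only appears after the saddle-point rescaling, inside the contour integral $\frac{1}{2\pi\I}\int\frac{1}{t}e^{-t^3/3+s_j t+\tau_j t^2}\,dt$ evaluated with the contour passing to the right of $0$, which the Airy identities of Appendix~A of \cite{BFP09} split into the ``$e^{-2\tau_j^3/3-\tau_j s_j}$'' residue contribution and the ``$-\int_0^\infty\Ai(\tau_j^2+s_j+x)e^{\tau_j x}dx$'' convolution. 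Finally the paper invokes the one-line algebraic identity $\frac{t/s-1}{t-s}=\frac{1}{s}$ to separate the Airy-kernel piece from the $\Ai(s_i+\tau_i^2)\cdot\widetilde G$-limit piece. Your route extracts the pole of $1/w$ explicitly before any asymptotics: you write $G=\rho T^{1/3}e^{TF_j(0)}+G_{\mathrm{new}}$ (the residue indeed equals $\rho T^{1/3}(-\eta)^N/(1-\eta)^{n_j-1}=\rho T^{1/3}e^{TF_j(0)}$ and is independent of $\lambda, s_j$), perform the $\lambda$-integral against the constant residue to cancel the $z$-factor in $H$, and keep the $s$-contour of $G_{\mathrm{new}}$ strictly in $\{\Re w<0\}$ so the representation $1/t=-\int_0^\infty e^{tx}dx$ gives the subtracted term directly. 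The two organizations are arithmetically equivalent; yours makes the origin of the extra term transparent (it is literally the residue at the coalescing critical point), while the paper's deferred splitting avoids ever having a $\lambda$-integral whose two factors would need to be shown convergent separately. Your last sentence correctly flags the one nontrivial point: the exponential prefactors from $Z(i,\delta)/Z(j,\delta)$, $e^{\pm TF(0)}$, $e^{\pm 2\tau^3/3\mp\tau s}$, and the $\delta$-shift all have to cancel; the check that the residue piece collapses to $\Ai(s_i+\tau_i^2)e^{(s_i-s_j)\delta}e^{-2\tau_j^3/3-\tau_j s_j}$ goes through, and the remaining bookkeeping is the same as in the paper's proof.
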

\begin{proof}[Proof of Proposition~\ref{Prop:wc=0}]
Define then
\begin{equation}
\begin{aligned}
\widetilde H(\tau_i,s_i)&=\frac{(\rho T^{1/3})^2}{2\pi\I}
 \oint_{\widetilde{\cal C}} e^{-Tz -2\tau_i \chi T^{2/3}z} z
\frac{( z+1-\eta  )^{\frac{T}{(1+\gamma)^2} +2\tau_i \frac{T^{2/3}}{\rho \gamma}}}{ ( z-\eta  )^{T (\frac{\gamma}{1+\gamma})^2}} e^{-s_i T^{1/3}\rho z}dz,\\
 \widetilde G(\tau_i,s_i)&=\frac{1}{2\pi\I}\oint_{\cal C'} e^{Tw +2\tau_i \chi T^{2/3}w} \frac1w
 \frac{ ( w-\eta )^{T (\frac{\gamma}{1+\gamma})^2}}{( w+1-\eta)^{\frac{T}{(1+\gamma)^2} +2\tau_i \frac{T^{2/3}}{\rho \gamma}} } e^{s_i T^{1/3}\rho w }dw.
\end{aligned}
\end{equation}
We can now perform the saddle point analysis of the above kernels.
We use the same contours as in the proof of Proposition~\ref{prop: asym+decayKN} up to the following modifications: the contours $\widetilde{\cal C}$ and ${\cal C}'$ are deformed in a $T^{-1/3}$ neighborhood of $w_c=0$ so that ${\cal C}'$ encircles $0$ lying to the left of $\widetilde w_c:=w_c+\delta/(\rho T^{1/3})$ and $\widetilde{\cal C}$ remains to the right of $\widetilde w_c.$
Furthermore we can assume that the distance of these contours to $\widetilde w_c$ is at least $\kappa'/T^{1/3}.$
From the preceding, one easily gets for bounded $s_i$ that
\begin{equation}
\begin{aligned}
Z(i,\delta) \widetilde H(\tau_i,s_i)&= e^{s_i\delta + \frac{2}{3} \tau_i^3+\tau_i s_i}\frac{1}{2\pi\I}
\int_{\infty e^{-\pi\I/3}}^{\infty e^{\pi\I/3}}te^{t^3/3-s_it-\tau_it^2}dt +\Or(T^{-1/3})\\
&=e^{s_i\delta }\left( -{\Ai}'(\tau_i^2+s_i)+\tau_i\Ai(\tau_i^2+s_i)\right). \label{asytildeH}
\end{aligned}
\end{equation}
Similarly
\begin{equation}
\frac{\widetilde G(\tau_i,s_i)}{Z(i,\delta)}= e^{-s_i\delta -2\tau_i^3/3-\tau_i s_i}\frac{1}{2\pi\I}
\int_{\infty e^{-2\pi\I/3}}^{\infty e^{2\pi\I/3}}\frac{1}{t}e^{-t^3/3+s_it+\tau_it^2}dt +\Or(T^{-1/3}),
\end{equation}
where the contour passes to the right of $0$. Using again Appendix A in~\cite{BFP09}
\begin{equation}
\frac{\widetilde G(\tau_i,s_i)}{Z(i,\delta)}=e^{-s_i\delta}\left (e^{ -2\tau_i^3/3-\tau_i s_i}-\int_0^{\infty}\Ai(\tau_i^2+s_i+x)e^{\tau_i x}dx\right)+\Or(T^{-1/3}).
\end{equation}
The exponential decay (as in (\ref{expdecHG})) for large positive $s_i$ follows from the fact that the $w-$contour (resp.\  $z$-contour) lies to the left (resp.\ right) of $\widetilde w_c$ and with a distance at least $\kappa'/T^{1/3}.$ Again one shall choose $\kappa'\geq \max_i |\tau_i|+\kappa.$

Finally, to derive the asymptotic correlation kernel in the case where $s_i,s_j$ lie in a fixed bounded set, we use the simple algebra:
\begin{equation}
\begin{aligned}
&\int_{0}^{\infty}e^{-\lambda (t-s)}d\lambda \frac{1}{(2\pi\I)^2}
\int_{\infty e^{-2\pi\I/3}}^{\infty e^{2\pi\I/3}}ds\int_{\infty e^{-\pi\I/3}}^{\infty e^{\pi\I/3}}dt
\left(\frac{t}{s}-1\right)\frac{e^{t^3/3-s_i t-\tau_i t^2}}{e^{s^3/3-s_j s-\tau_j s^2}} \\
&= \frac{1}{(2\pi\I)^2}
\int_{\infty e^{-2\pi\I/3}}^{\infty e^{2\pi\I/3}}ds\int_{\infty e^{-\pi\I/3}}^{\infty e^{\pi\I/3}}dt
\frac{1}{s}\frac{e^{t^3/3-s_i t-\tau_i t^2}}{e^{s^3/3-s_j s -\tau_j s^2}},
\end{aligned}
\end{equation}
yielding the asymptotic of $K_N^1$ given in Proposition~\ref{Prop:wc=0}.
\end{proof}

The asymptotic analysis of $\Psi$ for $w_c=0$ is almost unchanged from the last subsection.
For bounded $s_i-s_j$, one gets that
\begin{multline}
\rho T^{1/3}\Psi_{n_i,n_j}(x_i,x_j)\frac{Z(i, \delta)}{Z(j,\delta)}\\
=\frac{e^{(s_i-s_j)\delta}}{\sqrt{4\pi (\tau_j-\tau_i)}}\exp\left(-\frac{(s_j-s_i)^2}{4(\tau_j-\tau_i)}+\frac23(\tau_i^3-\tau_j^3)+\tau_i s_i-\tau_js_j\right)+\Or(T^{-1/3}).
\end{multline}
To get the exponential decay (as in (\ref{majoPsi})), one simply deforms the contour  to the right or left of $\widetilde w_c$ depending on the sign of $s_i-s_j$.

Combining the above with Proposition~\ref{Prop:wc=0} and definitions in Section~\ref{subsectDefProc} yields part (a2) of Proposition~\ref{ThmOneSidedLPPbasic}, by using the fact that
\begin{equation}
\det(\Id-\chi_s K_1 \chi_s)_{L^2(\{\tau_1,\ldots, \tau_m\}\times \R)}=\det(\Id-\chi_s K^b_1 \chi_s)_{L^2(\{\tau_1,\ldots, \tau_m\}\times \R)},\end{equation}
where $K_1^b(s,x;t, y)=e^{b(x-y)}K_1(s,x;t, y)$ for any $b$ in a compact interval.

\subsubsection{The case where $\eta<\frac{\gamma}{1+\gamma}$}
Let $m$ be a given integer. We now consider the asymptotic joint distribution
\begin{equation}
\Pb\left(\bigcap_{k=1}^m\left\{L_1(T,\gamma_k^2 T)\leq \left(\frac{\gamma_k^2}{\eta}+\frac{1}{1-\eta}\right)T +s_k T^{1/2}\right\}\right).
\end{equation}
For $k=1, \ldots, m$ we set
\begin{equation}
x_k=\left(\frac{\gamma_k^2}{\eta}+\frac{1}{1-\eta}\right)T +s_k T^{1/2},\quad n_k=\gamma_k^2T,\quad
c_k=\sqrt{\frac{\gamma_k^2}{\eta^2}-\frac{1}{(1-\eta)^2}}.
\end{equation}
We recall that the correlation kernel $K_N$ is defined in (\ref{K_N1}) and let us set $Z(i):=\dfrac{(-\eta)^{\gamma_i^2T}}{(1-\eta)^T}$, for a given small $\delta>0$.
\begin{prop}
\label{Prop: gauss}
For $s_i- s_j$ in a bounded interval, it holds
\begin{equation}
\begin{aligned}
&\lim_{N \to \infty}\sqrt{T} \frac{Z(i)}{Z(j)}e^{(s_i-s_j)\delta} K_N(n_i,x_i;n_j,x_j)\\
&=\frac{e^{(s_i-s_j)\delta}}{\sqrt{2\pi c_i^2}}\exp\left(-\frac{s_i^2}{ 2c_i^2}\right)
-\Id_{[\gamma_i<\gamma_j]}\frac{e^{(s_i-s_j)\delta}}{\sqrt{2\pi(c_j^2-c_i^2)}}\exp\left(-\frac{(s_i-s_j)^2}{2(c_j^2-c_i^2)}\right).
\end{aligned}
\end{equation}
Furthermore, for any $\kappa>0$, there exists a $T_0$ large enough such that
\begin{equation}
\left|\sqrt T \frac{Z(i)}{Z(j)}e^{(s_i-s_j)\delta}K_N(n_i,x_i;n_j,x_j)\right|\leq C e^{-\kappa |s_i-s_j|},
\end{equation}
for all $s_i,s_j\in\R$ and $T\geq T_0$. The constant $C$ is uniform in $T\geq T_0$ and $s_i,s_j$.
\end{prop}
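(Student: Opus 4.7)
The plan is to carry over the double saddle-point analysis of Propositions~\ref{prop: asym+decayKN} and~\ref{Prop:wc=0} to the boundary-dominated regime $\eta<\gamma/(1+\gamma)$, where the effective critical point of the action shifts to a new location at which the second derivative is of order one rather than $O(T^{-1/3})$. This replaces the Airy-type scaling $w=w_c+t/(\rho T^{1/3})$ with the diffusive scaling $w=w_c+\I t/\sqrt T$ and automatically produces Gaussian fluctuations on the $T^{1/2}$ scale. With the present scalings $n_j=\gamma_j^2T$, $N=T$ and $x_j=A_jT+s_j\sqrt T$ (where $A_j=\gamma_j^2/\eta+1/(1-\eta)$), the leading $T$-order action of the $w$-integral in $K_N^1$ is $F_j(w)=w A_j-\gamma_j^2\ln(w+1-\eta)+\ln(w-\eta)$, and a direct computation yields the two key identities
\begin{equation*}
F_j'(2\eta-1)=0,\qquad F_j''(2\eta-1)=\frac{\gamma_j^2}{\eta^2}-\frac{1}{(1-\eta)^2}=c_j^2.
\end{equation*}
Thus the saddle sits at $w_c=2\eta-1$ for every $\gamma_j$, and the positive second derivative $c_j^2$ (guaranteed by $\eta<\gamma_j/(1+\gamma_j)$) is precisely the Brownian variance appearing in the target kernel.

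Following the recipe of Proposition~\ref{Prop:wc=0}, I would rewrite $1/(w-z)=-\sqrt T\int_0^\infty e^{\lambda(w-z)\sqrt T}d\lambda$ and deform $\widetilde{\mathcal C}$ and $\mathcal C'$ in a $T^{-1/2}$-neighborhood of $w_c$ so that $\mathcal C'$ still encircles $0$ and $\eta-1$ but stays to the left of the shifted point $\widetilde w_c=w_c+\delta/\sqrt T$, while $\widetilde{\mathcal C}$ stays to the right. This contour shift by $\delta/\sqrt T$ is exactly what generates the conjugating factor $e^{(s_i-s_j)\delta}$ in the statement, while the conjugation by $Z(i)/Z(j)=(-\eta)^{(\gamma_i^2-\gamma_j^2)T}$ cancels the $O(e^T)$ prefactor $e^{T(F_j(w_c)-F_i(w_c))}$ at the saddle. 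The diffusive change of variables $w=\widetilde w_c+\I t/\sqrt T$, $z=\widetilde w_c+\I u/\sqrt T$ followed by Taylor expansion of $F_i,F_j$ to second order reduces the whole analysis to elementary Gaussian integrals and produces the $K_N^1$-contribution $e^{(s_i-s_j)\delta}/\sqrt{2\pi c_i^2}\cdot\exp(-s_i^2/(2c_i^2))$. The single-contour piece $\Psi_{n_i,n_j}(x_i,x_j)$ is handled in exactly the same way: its leading action $G(w)=w(A_j-A_i)-(\gamma_j^2-\gamma_i^2)\ln(w+1-\eta)$ has critical point at $w_c=2\eta-1$ with $G''(w_c)=(\gamma_j^2-\gamma_i^2)/\eta^2=c_j^2-c_i^2$, which is positive when $\gamma_i<\gamma_j$, yielding the off-diagonal Gaussian $e^{(s_i-s_j)\delta}/\sqrt{2\pi(c_j^2-c_i^2)}\cdot\exp(-(s_i-s_j)^2/(2(c_j^2-c_i^2)))$. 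For $\gamma_i\geq\gamma_j$ the prefactor $\Id_{[n_i<n_j]}$ in the definition of $\Psi$ kills it outright, producing the indicator $\Id_{[\gamma_i<\gamma_j]}$ in the final kernel.

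The exponential decay $\leq Ce^{-\kappa|s_i-s_j|}$ is then obtained by a further contour shift by $\pm\kappa'/\sqrt T$ depending on the sign of $s_i-s_j$, with $\kappa'$ chosen larger than $\kappa+\max_i|\tau_i|$, just as in the proofs of Propositions~\ref{prop: asym+decayKN} and~\ref{Prop:wc=0}. The main obstacle is the algebraic verification of $F_j''(2\eta-1)=c_j^2$ and $G''(2\eta-1)=c_j^2-c_i^2$, which are precisely what convert the saddle-point second derivatives into the prescribed Brownian variances; once these identities are in place, the remaining analysis is substantially simpler than in the Airy cases because the second derivative no longer vanishes at $w_c$, eliminating the need for cubic expansion or subtle contour matching. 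A minor bookkeeping subtlety is tracking how the $\delta/\sqrt T$ shift of the contours produces exactly the conjugating factor $e^{(s_i-s_j)\delta}$ that appears in the statement.
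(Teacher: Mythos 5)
Your plan to run the same double saddle-point machinery as Propositions~\ref{prop: asym+decayKN} and~\ref{Prop:wc=0}, with the cubic scaling $T^{1/3}$ replaced by the diffusive scaling $T^{1/2}$, misses the single structural step on which the paper's proof actually turns: the \emph{residue decomposition}. In the boundary-dominated regime the dominant contribution to $K_N^1$ comes not from the saddle of the action but from the pole of the integrand. The paper identifies two critical points of $f_j$, namely $w_c=0$ and $w_c^-<0$, with $f_j''(0)=-c_j^2<0$; because the $w$-contour has to enclose the pole at $w=0$ while the steep-descent path through $w_c=0$ is horizontal (wrong topology for a closed loop around the pole), they compute the residue at $w=0$ explicitly (this is $K_N^2$, a single $z$-integral that produces the diagonal Gaussian $\tfrac{1}{\sqrt{2\pi c_i^2}}e^{-s_i^2/(2c_i^2)}$) and then show that the remaining double integral $K_N^3$, whose $w$-contour can now be pushed to the other saddle $w_c^-$, is exponentially negligible because $f_j(0)>f_j(w_c^-)$. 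Your proposal to ``deform $\widetilde{\mathcal C}$ and $\mathcal C'$ in a $T^{-1/2}$-neighborhood of $w_c$'' and then Taylor-expand would at best reproduce the exponentially small $K_N^3$ and lose the dominant term; there is no analogue of the $\pm\pi/3$ / $\pm 2\pi/3$ contour separation that makes the recipe of Proposition~\ref{Prop:wc=0} work, because here the saddle is simple (not a triple point), the $z$- and $w$-steepest-descent directions are orthogonal, and the required contour deformation near the saddle is obstructed by the pole of $1/w$ that the $w$-loop must keep enclosing.

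A secondary point: your action $F_j(w)=A_jw-\gamma_j^2\ln(w+1-\eta)+\ln(w-\eta)$ and critical point $2\eta-1$ disagree with the paper's $f_j(w)=A_jw+\gamma_j^2\ln(w-\eta)-\ln(w+1-\eta)$ and critical point $0$; the two are transposes, and only one matches the powers of $(w+1-\eta)$ and $(w-\eta)$ in~(\ref{K_N1}) once $n_j$ and $N$ are fixed. The paper's placement of the saddle exactly at the pole $w=0$ is precisely what makes the residue decomposition natural, and the chosen conjugation $Z(i)=(-\eta)^{\gamma_i^2T}/(1-\eta)^T=e^{Tf_i(0)}$ confirms their convention. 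In your transposed convention the pole at $0$ is away from the saddle $2\eta-1$, but it is easy to check that $\Re F_j(0)>\Re F_j(2\eta-1)$, so the residue still dominates, and the same decomposition is unavoidable. Your identities $F_j''(w_c)=c_j^2$ and $G''(w_c)=c_j^2-c_i^2$ do correctly produce the Brownian variances, but in the paper they enter through the Gaussian steep-descent analysis of the \emph{single} $z$-integral $K_N^2$ and the single-contour $\Psi$, not through a double saddle-point integral.
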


\begin{proof}[Proof of Proposition~\ref{Prop: gauss}]
Define
\begin{equation}
f_j(w):=\left(\frac{\gamma_j^2}{\eta}+\frac{1}{1-\eta}\right) w+\gamma_j^2 \ln (w-\eta)-\ln (w+1-\eta).
\end{equation}
Then
\begin{equation}
K_N^1(n_i,x_i; n_j,x_j)= \frac{1}{(2\pi\I)^2}\oint_{{\cal C}}dz\oint_{{\cal C}'}dw\frac{z}{w}\frac{1}{w-z}e^{T(f_j(w)-f_i(z))+T^{1/2}(s_jw-s_iz)}.
\end{equation}
It is easy to check that the exponential term $f_j$ admits two critical points $w_c=0$ and $w_c^-=\frac{\eta^2-\gamma_j^2(1-\eta)^2}{\gamma_j^2(1-\eta)+\eta}<0$.
The critical point $w_c=0$ satisfies $f''(0)=\frac{1}{(1-\eta)^2}-\frac{\gamma_j^2}{\eta^2}<0.$
The steep descent (resp.\  ascent) path for $f_j$ should pass through $w_c^-$ (resp.\  $w_c=0$). Nevertheless the contour ${\cal C}'$ has to encircle the critical point $w_c$. To deal with this difficulty, we separate the contribution of the pose at $w=0$ and the pole at $w=\eta-1$. This will modify a little bit the saddle point analysis, which turns out to be similar to the analysis in Section~3 of~\cite{BBP06}.

Computing the residue at $w=0$, one gets that
\begin{equation}
\begin{aligned}
\label{corrkernelSchur3}
K_N(n_i,x_i; n_j,x_j)&=
\frac{1}{(2\pi\I)^2}\oint_{{\cal C}}dz\oint_{\mathcal{C''}}dw \frac{e^{wx_j-zx_i}}{w-z}\frac{(z+1-\eta)^{T}}{(w+1-\eta)^{T}}\frac{z}{w}\frac{(w-\eta)^{\gamma_j^2T}}{(z-\eta)^{\gamma_i^2T}}\\
&-\Psi_{n_i,n_j}(x_i,x_j) -\frac{1}{(2\pi\I)}\oint_{{\cal C}}dze^{-zx_i}\frac{(z+1-\eta)^{T}}{(1-\eta)^{T}}\frac{(-\eta)^{\gamma_j^2T}}{(z-\eta)^{\gamma_i^2T}},
\end{aligned}
\end{equation}
where the contour ${\cal C}''$ does not encircle the pole $w=0.$

Let us first consider
\begin{equation}
K_N^2(n_i,x_i;n_j,x_j):=\frac{1}{(2\pi\I)}\oint_{{\cal C}} dze^{-zx_i}\frac{(z+1-\eta)^{T}}{(1-\eta)^{T}}\frac{(-\eta)^{\gamma_j^2T}}{(z-\eta)^{\gamma_i^2T}}.
\end{equation}
For ease we assume that $\eta<1/2$ so that $\gamma_j>1$, $j=1, \ldots,m$.
Set the contour ${\cal C}_1=\{z=\I t, |t|\leq 2\}$. Then $2\Re(f_i(\I t))= \gamma_i^2\ln (t^2+\eta^2)-\ln(t^2+(1-\eta)^2)$ so that
\begin{equation}
\frac{d}{dt}\Re (f_i(\I t))=t\left( \frac{\gamma_i^2}{t^2+\eta^2}-\frac{1}{t^2+(1-\eta)^2}\right)=t\frac{(\gamma_i^2-1)t^2+\gamma_i^2(1-\eta)^2-\eta^2}{(t^2+\eta^2)(t^2+(1-\eta)^2)}
\end{equation}
has the same sign as $t$. Thus ${\cal C}_1$ is a steep descent path for $-f_i$ with maximum at $z=0$.
We complete ${\cal C}_1$ by the contour ${\cal C}_2=\{z=\eta +\sqrt{4+\eta^2}e^{\I \theta}, 0\leq \theta\leq \theta_0\}$
where $\theta_0$ is defined by $\eta +\sqrt{4+\eta^2}e^{\I \theta_0}=2\I t.$
Then setting \mbox{$z=\eta +\sqrt{4+\eta^2}e^{\I \theta}$},
\begin{equation}
\Re \frac{d}{d\theta}f_i(z)=-\Im(z)\left(\frac{\gamma_i^2}{\eta}+\frac{1}{1-\eta}-\frac{1}{|z+1-\eta|^2}\right),
\end{equation}
and there exists a $c>0$ such that $\frac{\gamma_i^2}{\eta}+\frac{1}{1-\eta}-\frac{1}{|z+1-\eta|^2}\geq c$ along ${\cal C}_2$, so that also ${\cal C}_2$ is a steep descent path for $-f_i$.

Then, for bounded $s_i$ we obtain
\begin{equation}
\lim_{N \to \infty}\sqrt{T}\frac{Z(i)}{Z(j)}K_N^2(n_i,x_i;n_j,x_j)=\frac{1}{2\pi c_i}
\int_{\R}e^{-t^2/2+\I ts_i/c_i}dt=\frac{1}{\sqrt{2\pi}c_i}e^{-\frac{s_i^2}{2c_i^2}}.
\end{equation}

Next, we consider
\begin{equation}
K_N^3(n_i,x_i;n_j,x_j):=\frac{1}{(2\pi\I)^2}\oint_{{\cal C}}dz\oint_{\mathcal{C''}}dw \frac{e^{wx_j-zx_i}}{w-z}\frac{(z+1-\eta)^{T}}{(w+1-\eta)^{T}}\frac{z}{w}\frac{(w-\eta)^{\gamma_j^2T}}{(z-\eta)^{\gamma_i^2T}}
\end{equation}
for some constant $C>0$.
We set ${\cal C}''=\{|w_c^-|e^{\I\theta}, \theta \in [0, 2\pi]\}$ so that $w_c^--\eta=-\gamma_j (1-\eta)$ ensuring that
\begin{equation}
\frac{d}{d\theta} \Re f_j(|w_c^-|e^{\I\theta})<-|w_c^-|\sin \theta\left(\frac{\gamma_j^2}{\eta}- \frac{\eta}{(1-\eta)^2}\right)<-C\sin \theta,
\end{equation}
for $0\leq \theta \leq \pi$. Thus $\Re f_j$ achieves its maximum on ${\cal C}''$ at $w=w_c^-$. Now a simple computation shows
that $f_j(0)-f_j(w_c^-)=\int_{w_c^-}^0 f_j'(x)dx>0.$
Thus, one has that
\begin{equation}
\left|\sqrt T \frac{Z(i)}{Z(j)}K_N^3(n_i,x_i;n_j,x_j)\right|\leq C e^{-cT},
\end{equation}
for some constants $C,c>0$.  Thus, for bounded $s_i$ we obtain
\begin{equation}
\lim_{N \to \infty}\sqrt{T}\frac{Z(i)}{Z(j)}K_N^3(n_i,x_i;n_j,x_j)=0.
\end{equation}

To consider large positive $s_i$, we consider the conjugated kernel
\begin{equation}
K_N^1(n_i,x_i;n_j,x_j)e^{(s_i-s_j)\delta},
\end{equation}
for some $\delta>0$ small.
The contours ${\cal C}_1$ is modified so that it passes to the right of $(\delta+\kappa)/T^{1/2}$. The $w-$contour passes to the left of $(\delta-\kappa)/T^{1/2}$. The exponential decay for large positive $s_i$ follows.

The analysis of $\Psi_{n_i,n_j}$ is similar to those of the preceding sections. The exponential term to be considered is
\begin{equation}
g(w):=(\gamma_i^2-\gamma_j^2)\left (\frac{w}{\eta}+\ln (w-\eta)\right).
\end{equation}
$g$ has a single critical point $w_c=0$ with $g''(0)=-\frac{\gamma_i^2-\gamma_j^2}{\eta^2}=c_j^2-c_i^2$.
Consider the contour $w=\eta(1+e^{\I\theta})$. Then the leading (i.e., non exponentially negligible) contribution to $\Psi_{n_i,n_j}$ comes from a neighborhood of width $T^{-1/2}$ of $w=0$.
Thus for bounded $|s_j-s_i|$ we deduce that
\begin{equation}
\begin{aligned}
\lim_{T\to\infty}\sqrt{T}\frac{Z(i)}{Z(j)}\Psi_{n_i,n_j}(x_j,x_i)&=
\frac{1}{2\pi}\int_{\R}dt e^{-(c_j^2-c_i^2)t^2/2+\I t(s_i-s_j)}\\ &=
\frac{1}{\sqrt{2\pi(c_j^2-c_i^2)}}\exp\left(-\frac{(s_i-s_j)^2}{2(c_j^2-c_i^2)}\right).
\end{aligned}
\end{equation}
For large $|s_j-s_i|$, we consider the conjugated kernel
\begin{equation}
K_N(n_i,x_i;n_j,x_j)e^{(s_i-s_j)\delta}.
\end{equation}
Depending on the sign of $s_i-s_j$ we modify the contour to be the circle of ray $\delta \pm 2\kappa'/\sqrt T$. This ensures the exponential decay for large $|s_i-s_j|$.
\end{proof}

The above Proposition~\ref{Prop: gauss} has the required asymptotic results needed to conclude part (b) of Proposition~\ref{ThmOneSidedLPPbasic}.

\subsection{Coupling Lemmas}\label{coupling_lemmas}
The following technical lemmas provide a basis for the coupling arguments necessary in our proof of Theorem~\ref{ThmTwoSidedLPP}. They provide generalizations of Lemma 4.1 and Lemma 4.2 of~\cite{BC09} from one-point functions to n-point functions. The proofs, however, are almost identical.

For the purpose of the lemmas let $X_n$ and $\tilde X_n$ take values in $\R^{k}$. We say that $X_n\geq \tilde X_n$ if, with probability one, every coordinate of $X_n$ is greater than or equal to the corresponding coordinate of $\tilde X_n$. We say that $X_n\Rightarrow F$, where $F$ is a distribution function on $\R^k$ if for all $\e>0$ and $(s_1,\ldots, s_{k})$ a continuity point of $F$, there exists an $N=N(\e,s_1,\ldots,s_{k})$ such that for all $n>N$,
\begin{equation}
\left|\Pb\left(\cap_{i=1}^k\{X_n^i\leq s_i\}\right)- F(s_1,\ldots, s_{k})\right|<\e.
\end{equation}
Finally we say that $X_n-\tilde X_n$ converges in probability to zero if for all $\e>0$, $\lim_{n\to\infty}\Pb(\|X_n-\tilde X_n\|_{\infty}>\e)\rightarrow 0$ (the infinity norm is just that max over all finitely many coordinates: $\|X\|_{\infty}=\max_{1\leq i\leq k} |X^i|$).

In the three lemmas below we assume all random variables are $\R^k$ valued.

\begin{lem}\label{stoch_dom_convergence}
If $X_n\geq\tilde X_n$ and $X_n\Rightarrow D$ as well as $\tilde X_n\Rightarrow D$, then $X_n-\tilde{X}_n$ converges to zero in probability. Conversely, if  $X_n\geq\tilde X_n$, $\tilde X_n\Rightarrow D$ and $X_n-\tilde{X}_n$ converges to zero in probability then $X_n\Rightarrow D$ as well.
\end{lem}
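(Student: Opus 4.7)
Both directions reduce to single-coordinate statements. Indeed, by the entrywise ordering one has $\|X_n-\tilde X_n\|_\infty=\max_i(X_n^i-\tilde X_n^i)$, and the joint convergence $X_n\Rightarrow D$ (resp.\ $\tilde X_n\Rightarrow D$) projects to marginal convergence $X_n^i\Rightarrow D^i$ (resp.\ $\tilde X_n^i\Rightarrow D^i$) by letting the other coordinates tend to $+\infty$ through continuity points of $D$. A union bound will then convert each coordinate into the $\|\cdot\|_\infty$-statement.

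For the first direction, I would fix a coordinate $i$ and $\e,\delta>0$, and pick continuity points $s_0<s_1<\cdots<s_M$ of $D^i$ with gaps at most $\e/2$, $D^i(s_0)<\delta$ and $1-D^i(s_M)<\delta$. Then slice $\{X_n^i-\tilde X_n^i>\e\}$ by the location of $\tilde X_n^i$: the two tails $\{\tilde X_n^i\leq s_0\}$ and $\{\tilde X_n^i>s_M\}$ contribute at most $2\delta+o(1)$, and on the slice $\{\tilde X_n^i\in(s_{k-1},s_k]\}$ the event forces $X_n^i>s_k+\e/2$, so
\begin{equation}
\Pb\bigl(\tilde X_n^i\leq s_k,\,X_n^i>s_k+\e/2\bigr)\leq \Pb(\tilde X_n^i\leq s_k)-\Pb(X_n^i\leq s_k)\to 0,
\end{equation}
using $\{X_n^i\leq s_k\}\subseteq\{\tilde X_n^i\leq s_k\}$ and the fact that both marginals share the limit $D^i(s_k)$. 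Summing the $M$ slices and sending $\delta\downarrow 0$ will conclude.

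For the converse, the plan is a multivariate Slutsky-type sandwich at a continuity point $(s_1,\ldots,s_k)$ of $D$. The ordering directly gives the upper bound $\Pb(\bigcap_i\{X_n^i\leq s_i\})\leq \Pb(\bigcap_i\{\tilde X_n^i\leq s_i\})\to D(s_1,\ldots,s_k)$. For the matching lower bound, observe that on $\{\|X_n-\tilde X_n\|_\infty\leq\eta\}$ the inclusion $\bigcap_i\{\tilde X_n^i\leq s_i-\eta\}\subseteq\bigcap_i\{X_n^i\leq s_i\}$ holds, hence
\begin{equation}
\Pb\Big(\bigcap_i\{X_n^i\leq s_i\}\Big)\geq \Pb\Big(\bigcap_i\{\tilde X_n^i\leq s_i-\eta\}\Big)-\Pb\bigl(\|X_n-\tilde X_n\|_\infty>\eta\bigr).
\end{equation}
Sending $n\to\infty$ and then $\eta\downarrow 0$ through continuity points of $D$ will force both bounds to $D(s_1,\ldots,s_k)$, by continuity of $D$ at that point.

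The only non-mechanical ingredient is the key inequality $\Pb(\tilde X_n^i\leq s,\,X_n^i>s+\e/2)\leq \Pb(\tilde X_n^i\leq s)-\Pb(X_n^i\leq s)$: the monotone coupling converts weak convergence of the two marginals to a common limit directly into smallness of the coordinatewise gap, with no appeal to finer coupling constructions. The overall argument is then a straightforward vector-valued analog of Lemmas~4.1--4.2 of~\cite{BC09}.
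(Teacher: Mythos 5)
Your proof is correct and is in the spirit of what the paper intends: the paper simply cites Lemma~4.1 of~\cite{BC09} and instructs the reader to carry it over coordinate-wise (intervals become boxes, absolute values become $\ell_\infty$-norms). Your treatment of the forward direction is a particularly clean way to do that: since the ordering is coordinate-wise, $\|X_n-\tilde X_n\|_\infty=\max_i(X_n^i-\tilde X_n^i)$, so a union bound reduces matters to the one-dimensional statement per coordinate, and the key slicing inequality $\Pb(\tilde X_n^i\leq s,\,X_n^i>s)\leq\Pb(\tilde X_n^i\leq s)-\Pb(X_n^i\leq s)$ (using $\{X_n^i\leq s\}\subseteq\{\tilde X_n^i\leq s\}$ from the coupling) is exactly the engine of the BC09 argument. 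The converse via the Slutsky-type sandwich at a continuity point of $D$ is standard and matches what the cited lemma does; the only care needed, which you flag, is to let $\eta\downarrow0$ through values for which $(s_1-\eta,\ldots,s_k-\eta)$ remains a continuity point, which is possible since each marginal has at most countably many atoms.
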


For vectors $X$ and $Y$ in $\R^k$ we define $Z=\max(X,Y)$ to be the coordinate-wise maximum (i.e., $Z^i=\max(X^i,Y^i)$ for $i=1,\ldots, k$).

\begin{lem}\label{max_lemma}
Assume $X_n\geq \tilde X_n$ and  $X_n\Rightarrow D_1$  as well as
$\tilde X_n\Rightarrow D_1$; and similarly
 $Y_n\geq \tilde Y_n$ and  $Y_n\Rightarrow D_2$  as well as $\tilde Y_n\Rightarrow D_2$. Let $Z_n=\max(X_n,Y_n)$ and $\tilde Z_n=\max(\tilde X_n,\tilde Y_n)$. Then if $\tilde Z_n\Rightarrow D_3$, we also have $Z_n\Rightarrow D_3$.
\end{lem}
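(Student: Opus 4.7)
The plan is to reduce Lemma~\ref{max_lemma} to a direct application of Lemma~\ref{stoch_dom_convergence}. First I would extract, from the hypotheses $X_n \geq \tilde X_n$ together with $X_n \Rightarrow D_1$ and $\tilde X_n \Rightarrow D_1$, the conclusion that $X_n - \tilde X_n$ converges to zero in probability; this is exactly the first (forward) half of Lemma~\ref{stoch_dom_convergence}. The same argument applied to the pair $(Y_n, \tilde Y_n)$ yields $Y_n - \tilde Y_n \to 0$ in probability.

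Next I would combine these two facts using the elementary coordinate-wise Lipschitz property of $\max$: for any $a,b,c,d \in \R$ one has $|\max(a,b) - \max(c,d)| \leq \max(|a-c|, |b-d|)$. Applying this to each coordinate and taking the supremum, I get
\begin{equation}
\|Z_n - \tilde Z_n\|_\infty \leq \max\bigl(\|X_n - \tilde X_n\|_\infty,\ \|Y_n - \tilde Y_n\|_\infty\bigr).
\end{equation}
A union bound then shows that for every $\e>0$,
\begin{equation}
\Pb(\|Z_n - \tilde Z_n\|_\infty > \e) \leq \Pb(\|X_n - \tilde X_n\|_\infty > \e) + \Pb(\|Y_n - \tilde Y_n\|_\infty > \e),
\end{equation}
and the right-hand side tends to zero by the previous step. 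Hence $Z_n - \tilde Z_n \to 0$ in probability.

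Finally, observe that the coordinate-wise inequalities $X_n \geq \tilde X_n$ and $Y_n \geq \tilde Y_n$ immediately give $Z_n \geq \tilde Z_n$. Together with the hypothesis $\tilde Z_n \Rightarrow D_3$ and the probability convergence of $Z_n - \tilde Z_n$ to zero just proved, the converse (second) half of Lemma~\ref{stoch_dom_convergence} applies and yields $Z_n \Rightarrow D_3$. There is no serious obstacle here; the only thing to be careful about is that the hypotheses of Lemma~\ref{stoch_dom_convergence} are stated componentwise for $\R^k$-valued random vectors, and one must check that the coordinate-wise max operation respects stochastic domination and preserves the infinity-norm Lipschitz bound — both of which are routine.
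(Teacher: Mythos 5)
Your argument is correct and matches the paper's intent: the paper defers to Lemma~4.2 of~\cite{BC09} (and in the proof of Lemma~\ref{multipoint_marginal_lemma} explicitly notes that Lemma~\ref{max_lemma} is a corollary of the same reduction to Lemma~\ref{stoch_dom_convergence}), which is exactly your route of deducing $X_n-\tilde X_n\to 0$ and $Y_n-\tilde Y_n\to 0$ in probability, using the $\ell_\infty$-Lipschitz bound for coordinate-wise $\max$ together with a union bound to get $Z_n-\tilde Z_n\to 0$, and then invoking the converse half of Lemma~\ref{stoch_dom_convergence}.
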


\begin{lem}\label{multipoint_marginal_lemma}
Assume $X_n\geq \tilde X_n$ and  $X_n\Rightarrow D_1$  as well as
$\tilde X_n\Rightarrow D_1$; and similarly $Y_n\geq \tilde Y_n$ and  $Y_n\Rightarrow D_2$  as well as $\tilde Y_n\Rightarrow D_2$. Then if \mbox{$(\tilde X_n,\tilde Y_n)\Rightarrow F$} (a $2k$-dimensional distribution function) so does $(X_n, Y_n)\Rightarrow F$. More generally this also applies to $m$ sequences of random variables under the same hypotheses on each sequence and on their $m$-point joint distribution function limit.
\end{lem}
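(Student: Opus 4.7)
The plan is to reduce the statement to Lemma~\ref{stoch_dom_convergence} together with a standard Slutsky-type perturbation argument. Since the paper already notes that the proofs here are almost identical to the one-point versions in~\cite{BC09}, the main work is simply bookkeeping with continuity points of the $2k$-dimensional distribution $F$.

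First, apply Lemma~\ref{stoch_dom_convergence} separately to the pairs $(X_n,\tilde X_n)$ and $(Y_n,\tilde Y_n)$. In each case the hypotheses (coordinate-wise stochastic domination plus convergence of both sequences to the same marginal) are exactly those needed for the first direction of Lemma~\ref{stoch_dom_convergence}, so $X_n-\tilde X_n\to 0$ and $Y_n-\tilde Y_n\to 0$ in probability. Since
\[
\|(X_n,Y_n)-(\tilde X_n,\tilde Y_n)\|_\infty
= \max\bigl(\|X_n-\tilde X_n\|_\infty,\|Y_n-\tilde Y_n\|_\infty\bigr),
\]
a union bound then yields $(X_n,Y_n)-(\tilde X_n,\tilde Y_n)\to 0$ in probability in $\R^{2k}$.

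Second, prove $(X_n,Y_n)\Rightarrow F$ by a sandwich argument. Fix a continuity point $(s_1,\dots,s_{2k})$ of $F$ and $\e>0$. By continuity of $F$ at that point, pick $\delta>0$ so small that $F$ evaluated on the two shifted rectangles with corners $(s_i\pm\delta)$ differs from $F(s_1,\dots,s_{2k})$ by less than $\e/3$, and so that these shifted corners remain continuity points of $F$ (this last requirement can be arranged since the set of discontinuity hyperplanes of a distribution function is at most countable in each coordinate, so we can take $\delta$ outside a null set). Then
\begin{equation*}
\Pb\Bigl(\bigcap_i\{(X_n,Y_n)^i\leq s_i\}\Bigr)
\leq \Pb\Bigl(\bigcap_i\{(\tilde X_n,\tilde Y_n)^i\leq s_i+\delta\}\Bigr)
+\Pb\bigl(\|(X_n,Y_n)-(\tilde X_n,\tilde Y_n)\|_\infty>\delta\bigr),
\end{equation*}
and the analogous lower bound with $s_i-\delta$ on the right. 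By the first step the error term vanishes as $n\to\infty$, while by $(\tilde X_n,\tilde Y_n)\Rightarrow F$ the main terms converge to values within $\e/3$ of $F(s_1,\dots,s_{2k})$. Taking $n$ large enough gives the desired convergence.

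Third, the extension to $m$ sequences $(X_n^{(1)},\dots,X_n^{(m)})$ coupled with $(\tilde X_n^{(1)},\dots,\tilde X_n^{(m)})$ is immediate: the coordinate-wise difference still tends to zero in probability in $\R^{mk}$ by a union bound over the $m$ marginals, and the same Slutsky-type sandwich applies verbatim. There is no real obstacle here; the only place requiring any care is the choice of $\delta$ so as to land on continuity points of $F$ when reading off the $\e/3$ bound, and this is handled by the standard remark that a distribution function on $\R^{mk}$ has only countably many discontinuity hyperplanes in each coordinate.
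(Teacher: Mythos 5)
Your proof is correct and follows essentially the same route as the paper: apply Lemma~\ref{stoch_dom_convergence} coordinate-wise to deduce $X_n-\tilde X_n\to 0$ and $Y_n-\tilde Y_n\to 0$ in probability, combine via the union bound to get $(X_n,Y_n)-(\tilde X_n,\tilde Y_n)\to 0$ in $\R^{2k}$, and then conclude that the joint limits agree. The only difference is that the paper dismisses the final step as "immediate," while you spell out the Slutsky-type sandwich and the selection of $\delta$ avoiding the (countably many) discontinuity hyperplanes of $F$; this extra care is harmless and in fact makes the argument self-contained.
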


\begin{proof}[Proof of Lemma~\ref{stoch_dom_convergence}]
This proof is a straight forward generalization of the proof of Lemma 4.1 of~\cite{BC09} and hence we will not reproduce it. All inequalities in the original proof should now be considered as holding true coordinate-wise, all absolute values should be replaced by $\ell_\infty$ norms (on $\R^k$). The $\e$ sized blocks used should be replaced by $k$ dimensional $\e$ boxes, and all intervals should interpreted as boxes in $\R^k$. Other than these changes, the proof goes through word for word.
\end{proof}

\begin{proof}[Proof of Lemma~\ref{max_lemma}]
 Again this proof is word for word the same as Lemma 4.2 of~\cite{BC09}, with the modified interpretations of notation noted above.
\end{proof}

\begin{proof}[Proof of Lemma~\ref{multipoint_marginal_lemma}]
Lemma~\ref{stoch_dom_convergence} shows that $X_n-\tilde X_n$ and likewise $Y_n-\tilde Y_n$  converges in probability to zero. This implies that for all $\e>0$, using the triangle inequality and the union bound,
\begin{equation}
\Pb(\|(X_n,Y_n)-(\tilde X_n,\tilde Y_n)\|_\infty>\e)\leq \Pb(\|X_n-\tilde X_n\|_\infty>\e)+\Pb(\|Y_n-\tilde Y_n\|_\infty>\e),
\end{equation}
which, by Lemma~\ref{stoch_dom_convergence} goes to zero as $n\to\infty$. This immediately implies that the joint $(X_n,Y_n)$ converge to the same distribution as $(\tilde X_n, \tilde Y_n)$. Lemma~\ref{max_lemma} is, in fact a corollary of this result. The generalization follows by the exact same argument as above.
\end{proof}

\subsection{Proof of Theorem~\ref{ThmTwoSidedLPP}}\label{proof_two_sided_section}
In Section~\ref{subsectProofOneSided} we proved Proposition~\ref{ThmOneSidedLPPbasic} directly from asymptotic analysis of the Schur Process. From that theorem we will, using the three lemmas above and the slow decorrelation result of Proposition~\ref{ThmSlowDec}, provide proofs of Proposition~\ref{ThmOneSidedLPP} and Theorem~\ref{ThmTwoSidedLPP}.

\begin{proof}[Proof of Proposition~\ref{ThmOneSidedLPP}]
The proof is based on Proposition~\ref{ThmOneSidedLPPbasic} together with slow-decorrelation phenomenon (Proposition~\ref{ThmSlowDec}, see also Remark~\ref{decorrelation_remark}).

Consider first cases (a1) and (a2). Denote by $\tilde\tau_k$ the number such that $(x(\tilde \tau_k),y(\tilde\tau_k))$ (defined in (\ref{scaling0})) and $(x(\tau_k,\theta_k),y(\tau_k,\theta_k))$ (defined in (\ref{scaling1}) belongs to the same characteristic line. Moreover, notice that the projection along the characteristic direction for $\tau=0$ to the line $y=\frac{\gamma^2}{(1+\gamma)^2} T$ is obtained by choosing
\begin{equation}
\theta_k = 2\tau_k \frac{(1+\gamma)^{4/3}}{\gamma^{2/3}(1+\gamma^2)} T^{2/3-\nu}
\end{equation}
in (\ref{scaling1}). However, the slope of the characteristic line passing by $(x(\tau_k,\theta_k),y(\tau_k,\theta_k))$ differs from the slope of the characteristic line for $\tau_k=0$ by just $\Or(T^{-1/3})$. Therefore, as $\nu<1$, $\tilde\tau_k=\tau_k+\Or(T^{\nu-1})\to \tau_k$ as $T\to\infty$. Also, due to slow decorrelations, the fluctuation of $L_1(x(\tau_k,\theta_k),y(\tau_k,\theta_k))$ differs from the fluctuation of $L_1(x(\tilde \tau_k),y(\tilde \tau_k))$ by $o(T^{1/3})$, whose differs from the fluctuations of $L_1(x(\tau_k),y(\tau_k))$ again by $o(T^{1/3})$. Thus Proposition~\ref{ThmOneSidedLPP}~(a1) and (a2) follows.

The case (b) is even simpler. In that case, the point $(\theta_k T,\gamma^2_k\theta_k T)$ is on the same characteristic line as $(T,\gamma^2_k T)$, at a distance $\Or(T)$. Therefore by Proposition~\ref{ThmSlowDec}~(b) the fluctuations of $L_1(\theta_k T,\gamma^2_k\theta_k T)$ and $L_1(T,\gamma^2_kT)$ differs only by $o(T^{1/2})$, from which Proposition~\ref{ThmOneSidedLPP}~(b) follows.
\end{proof}

\begin{proof}[Proof of Theorem~\ref{ThmTwoSidedLPP}]
We follow the method of~\cite{BC09} and define two coupled random vectors $X$ and $Y$. $X$ is the vector of last passage times from $(0,0)$ to $(x(\tau_i,\theta_i),y(\tau_i,\theta_i))$, $1\leq i \leq m$, with last passage paths forced to take a first step to the right, and $Y$ is the vector of last passage times with paths forced to take a first step up. Therefore, their coordinate-wise maximum \mbox{$Z=\max(X,Y)$} is the last passage times without any restrictions on the first step (i.e., $Z^i= L_2(x(\tau_i,\theta_i),y(\tau_i,\theta_i))$. A key observation is that $X$ and $Y$ are both marginally distributed as the last passage times for last passage percolation models with only one-sided boundary conditions (as opposed to the two-sided conditions we must consider for $Z$). In the case of $Y$ the one-sided boundary waiting time is exponential of mean $1/\eta$ and in the case of $X$ the boundary waiting time is exponential of mean $1/\pi$. But the coordinates must be flipped so as to conform to our definition of last passage percolation with one-sided boundary conditions (the boundary condition should appear on the left boundary, not the bottom). Depending on the regime of fluctuations we will be able to compare the random vectors to related but simplified vectors $\tilde X$ and $\tilde Y$ which have the same asymptotic limiting distribution but are strictly less than $X$ and $Y$. Then, using the coupling lemmas we will be able to show that $Z$ and $\tilde Z = \max(\tilde X,\tilde Y)$ have the same distribution limits as $T$ goes to infinity. However, as $\tilde X$ and $\tilde Y$ are simpler than $X$ and $Y$, we will be able to identify $\tilde Z$ exactly and hence determine the asymptotic multipoint distribution of $Z$, completing our proof.

As slightly different coupling arguments are necessary for each part of the theorem we will split the proof up according to the four cases of the theorem.

\subsubsection{Proof of Theorem~\ref{ThmTwoSidedLPP} part (a1)}
This case corresponds to both $X$ and $Y$ being last passage time vectors from last passage percolation models with one-sided boundary conditions of small enough mean so as to behave asymptotically the same as the corresponding models without boundary conditions. With this in mind we define $\tilde{X}$ and $\tilde{Y}$, random vectors which are coupled to $X$ and $Y$ in terms of the underlying random last passage waiting times. Let $\tilde X = X$ and let $\tilde Y$ be the vector of last passage times defined by $\tilde Y^i=\tilde L_1( x(\tau_i,\theta_i),y(\tau_i,\theta_i))$. The new last passage time $\tilde L_1$ is the last passage time in a coupled model where the boundary waiting times (which are exponential with mean $1/\eta$) are multiplied by $\eta$ (hence making them distributed as exponentials of mean $1$). The key is that this random last passage time is coupled to the last passage time $L_1)$ since that they are based off of the same random waiting times. Additional, because of $\eta\leq 1$ it holds $\tilde L_1\leq L_1$. Therefore $\tilde Y \leq Y$ where the inequality is in terms of each coordinate separately. More trivially we also have that $\tilde X\leq X$.

In order to apply our coupling lemmas we must center and rescale $X,Y,\tilde X$ and $\tilde Y$ so that our new $X=(X^1,\ldots,X^m)$ equals the vector with coordinates
\begin{equation}
\frac{X^i - \ell(\tau_i,\theta_i,0)}{(1+\gamma)^{2/3}\gamma^{-1/3}T^{1/3}},\quad 1\leq i \leq m,
\end{equation}
with $\ell(\tau,\theta,0)$ given in (\ref{scaling1}), and likewise for the other variables. Under this centering and rescaling $\tilde X\leq X$ and Proposition~\ref{ThmOneSidedLPP} shows that both $\tilde X$ and $X$ converge in joint-distribution to the same $\mathcal{A}_2$ process as $T\to \infty$. Likewise $\tilde Y\leq Y$ by construction and $\tilde Y$ and $Y$ converge in joint-distribution to the same $\mathcal{A}_2$ process as well. Moreover, $\tilde Z = \max (\tilde X,\tilde Y)$ is (except for a single waiting time of zero at the origin, which is asymptotically irrelevant) the last passage time vector for a one-sided last passage percolation model with boundary waiting times with mean $1/\eta$. Proposition~\ref{ThmOneSidedLPP} shows that $\tilde Z$ converges in joint-distribution to the $\mathcal{A}_2$ process. Therefore, using Lemma~\ref{max_lemma} it follows that $Z=\max(X,Y)$ also converges to the $\mathcal{A}_2$ process, which is
 exactly what we needed to prove.

\subsubsection{Proof of Theorem~\ref{ThmTwoSidedLPP} part (a2)}
We are in the case of $\eta=\gamma(1+\gamma)^{-1}$ and $\pi>(1+\gamma)^{-1}$. As such, we can apply the exact same argument as in the proof of part (a1) above. The only difference is that the $\tilde Z$ process will now, as determined by Proposition~\ref{ThmOneSidedLPP}, converge to the $\mathcal{A}_{{\rm BM}\to 2}$ process. Therefore $Z$ will also converge in finite-distribution to the $\mathcal{A}_{{\rm BM}\to 2}$ process, which is, again, what we desired to show.

\subsubsection{Proof of Theorem~\ref{ThmTwoSidedLPP} part (a3) (see~\cite{BFP09})}
This proof is the subject of the recent paper~\cite{BFP09}. The coupling techniques employed for all of the other proof do not apply here. The heuristic explanation is that the last passage time comes from the competition of two sets of paths each of which goes along the boundary for distance of order $T^{2/3}$ and then enters the bulk. Because the range of the transversal fluctuations of a last passage path are of that order $T^{2/3}$, these sets of paths have non-trivial correlation, which is evident in that fact that they yield a different process, the $\mathcal{A}_{\rm stat}$ process.

\subsubsection{Proof of Theorem~\ref{ThmTwoSidedLPP} part (b)}
As before we write the last passage random variable $L_2(\theta_i T,\gamma_i^2\theta_i T)$ as $\max(X^i,Y^i)$ where $X^i$ and $Y^i$ are coupled last passage times, restricted to paths which step first right or up, respectively. We now couple $X^i$ with $\tilde X^i$ which is the last passage time when forced to stay along the bottom edge for a specific deterministic fraction of the path, and then depart into the bulk. Specifically we define $\tilde X^i$ to be the max of passage times over all paths which go distance
\begin{equation}\label{X_distance}
 \left(1-\frac{\gamma_i^2}{(\pi^{-1}-1)^2}\right)\theta_i T
\end{equation} and then take a step up. Likewise we define $\tilde Y^i$ to be the max of the passage times over all paths which go distance
\begin{equation}\label{Y_distance}
\left(\gamma_i^2-\frac{1}{(\eta^{-1}-1)^2}\right)\theta_i T
\end{equation}
and then take a step right. It is clear that $\tilde X^i\leq X^i$ and that $\tilde Y^i\leq Y^i$. What is not obvious is the choice of distances. In short, this is given by the solution to an optimization problem at the level of the law of large numbers (see~\cite{BBP06} for an explanation of this heuristic).

Define the following events: for $i\in\{1,\ldots,m_l+m_s\}$, set
\begin{equation}
E_i=\left\{L_2(\theta_i T,\gamma_i^2\theta_i T)\leq \left(\frac{\gamma_i^2}{\eta}+\frac{1}{1-\eta}\right)\theta_i T+s_i T^{1/2}\right\},
\end{equation}
while for $i\in\{1+m_l+m_s,\ldots,m\}$, set
\begin{equation}
E_i=\left\{L_2(\theta_i T,\gamma_i^2\theta_i T)\leq \left(\frac{1}{\pi}+\frac{\gamma_i^2}{1-\pi}\right)\theta_i T+s_i T^{1/2}\right\}.
\end{equation}
Notice that for $i\in\{m_l+1,\ldots,m_l+m_s\}$ both definitions are identical.
Likewise, define $\widetilde E_i$ except in place of $L_2(\theta_i T,\gamma_i^2\theta_i T)$ use $\tilde Z^i=\max(\tilde X^i, \tilde Y^i)$. Let us denote $L_2(\theta_i T,\gamma_i^2\theta_i T)=\max(X^i,Y^i)$ as $Z^i$. It is clear that $\tilde Z^i\leq Z^i$.
We claim that
\begin{equation}\label{claim1}
\lim_{T\to \infty} \Pb\bigg(\bigcap_{k=1}^{m} E_k\bigg) =\lim_{T\to \infty} \Pb\bigg(\bigcap_{k=1}^{m} \widetilde E_k\bigg).
\end{equation}

For $i\in \{1,\ldots ,m_l\}$ center and scale $X^i,\tilde X^i,Y^i,\tilde Y^i, Z^i$ and $\tilde Z^i$ by applying
\begin{equation}\label{eq3.67}
x\mapsto \frac{x-\left(\frac{\gamma_i^2}{\eta} + \frac{1}{1-\eta}\right)\theta_i T}{T^{1/2}}.
\end{equation}
It follows from the one-point fluctuation result of~\cite{BC09} that the centered and scaled $\tilde Z^i$ and $Z^i$ both converge in distribution to the same Gaussian random variable with variance
\begin{equation}\label{var1}
\theta_i\left[\frac{\gamma_i^2}{\eta^2}-\frac{1}{(1-\eta)^2}\right].
\end{equation}

For $i\in \{m_l+m_s+1,\ldots, m\}$ we center and scale with
\begin{equation}\label{eq3.67b}
x\mapsto \frac{x-\left(\frac{1}{\pi}+\frac{\gamma_i^2}{1-\pi} \right)\theta_i T}{T^{1/2}}.
\end{equation}
Then the centered and scaled $\tilde Z^i$ and $Z^i$ both converge in distribution to the same Gaussian random variable with variance
\begin{equation}\label{var2}
\theta_i\left[\frac{1}{\pi^2}-\frac{\gamma_i^2}{(1-\pi)^2}\right].
\end{equation}

For $i\in \{m_l+1,\ldots, m_l+m_s\}$ we center and scale as (\ref{eq3.67}).
Then the centered and scaled $\tilde Z^i$ and $Z^i$ both converge in distribution to the maximum of two Gaussian random variables with variances given by equations (\ref{var1}) and (\ref{var2}).

Since for every $i\in 1,\ldots, m$, the centered and scaled $Z^i$ and $\tilde Z^i$ converge to the same distributions, and since $\tilde Z^i\leq Z^i$, Lemma~\ref{multipoint_marginal_lemma} implies that the asymptotic joint distribution of the $Z^i$, and of the $\tilde Z^i$ converge to the same distribution. This proves the claim given in equation (\ref{claim1}).

Therefore it remains to show that the joint distribution of the $\tilde Z^i$ behaves as desired. The fluctuations given by $\tilde Z^i$ are a combination of the fluctuations from the boundary waiting times and from the bulk waiting times. However, since we scaled by $T^{1/2}$ and since the boundary and bulk fluctuations are independent, the bulk fluctuations have a prefactor of $T^{-1/6}$ and hence (by, for instance applying the Converging Together Lemma on page 89 of~\cite{Dur05}) only the boundary fluctuations contribute asymptotically. The covariance of these fluctuations depends on portion of the boundary which the $\tilde X^i$ and $\tilde Y^i$ depend upon. We can encode this covariance structure in terms of two independent Brownian motions (one for the left boundary and one for the bottom boundary). For $i\in \{1,\ldots ,m_l\}$ all of the fluctuations come from the left boundary. For $i\in \{m_l+m_s+1,\ldots, m\}$ all of the fluctuations come from the bottom boundary. For $i\in \{m_l+1,\ldots, m_l+m_s\}$ fluctuations come from the maximum of the left and bottom boundary Brownian motions. Writing down this joint distribution leads exactly to (\ref{bmeqn}).
\end{proof}

\begin{remark}
It is worth noting that in the proof of part~(b) above we did not, in fact, appeal to the analogous one-sided last passage percolation result of Proposition~\ref{ThmOneSidedLPP}. This is because we needed to establish the product structure and hence reduce everything to just processes along the boundary. As such the Brownian motion results of Proposition~\ref{ThmOneSidedLPPbasic} and~\ref{ThmOneSidedLPP} may, in fact, be proved directly in this manner (as they are corollaries of this result) and do not require the asymptotic analysis of the Schur process.
\end{remark}

\subsection{Proof of the TASEP height function theorem}\label{subsectProofTASEP}
\begin{proof}[Proof of Theorem~\ref{ThmTASEP}]
The connection between two-sided directed percolation and TASEP has been discussed in Section~\ref{sectConnection}. Consider first the cases (a1)-(a3). From (\ref{eqTASEPvsDP}) we have
\begin{multline}\label{eq3.71}
\Pb\bigg(\bigcap_{k=1}^m\{h_{T+\theta_k T^\nu}(X(\tau_k,\theta_k))\geq H(\tau_k,\theta_k,s_k)\}\bigg)\\
=\Pb\bigg(\bigcap_{k=1}^m\{L_2(x_k,y_k)\leq T+\theta_k T^\nu\}\bigg)
\end{multline}
with (difference of order 1, due to the integer parts which are not explicitly written since they are irrelevant in the asymptotics)
\begin{equation}
\begin{aligned}
x_k=\tfrac12(X(\tau_k,\theta_k)+H(\tau_k,\theta_k,s_k)),\\
y_k=\tfrac12(H(\tau_k,\theta_k,s_k)- X(\tau_k,\theta_k)),
\end{aligned}
\end{equation}
where $X$ and $H$ are defined in~(\ref{eq2.8}). Explicitly, by setting $\gamma:=(1-\xi)/(1+\xi)$, i.e., $\xi=(1-\gamma)/(1+\gamma)$, we have
\begin{equation}
\begin{aligned}
x_k&=\frac{1}{(1+\gamma)^2}(T+\theta_k T^\nu)+\tau_k\frac{2\gamma^{1/3}}{(1+\gamma)^{5/3}}T^{2/3}+(\tau_k^2-s_k)\frac{\gamma^{2/3}}{(1+\gamma)^{4/3}} T^{1/3},\\
y_k&=\frac{\gamma^2}{(1+\gamma)^2}(T+\theta_k T^\nu)-\tau\frac{2\gamma^{4/3}}{(1+\gamma)^{5/3}}T^{2/3}+(\tau_k^2-s_k)\frac{\gamma^{2/3}}{(1+\gamma)^{4/3}}T^{1/3}.
\end{aligned}
\end{equation}

Once the problem is rewritten in terms of directed percolation, the theorem is proven using Theorem~\ref{ThmTwoSidedLPP} and the slow decorrelation (see Proposition~\ref{ThmSlowDec}), i.e., we use a similar strategy of the proof of Proposition~\ref{ThmOneSidedLPP} starting from Proposition~\ref{ThmOneSidedLPPbasic}.

For the above given $(x_k,y_k)$, the limit shape (\ref{eqLimShapeRM}) gives us
\begin{equation}
x_k(1+\sqrt{y_k/x_k})^2=T+\theta_k T^\nu-s_k \frac{(1+\gamma)^{2/3}}{\gamma^{1/3}}T^{1/3}+\Or(1).
\end{equation}
Therefore,
\begin{multline}
\Pb\bigg(\bigcap_{k=1}^m\{L_2(x_k,y_k)\leq T+\theta_k T^\nu\}\bigg)\\
=\Pb\bigg(\bigcap_{k=1}^m\big\{L_2(x_k,y_k)\leq x_k(1+\sqrt{y_k/x_k})^2+s_k \frac{(1+\gamma)^{2/3}}{\gamma^{1/3}}T^{1/3}+\Or(1)\big\}\bigg).
\end{multline}
The fluctuations (with respect to the limit shape behavior) are, by the slow decorrelation theorem, the same as the fluctuations of the projection along the characteristic line on the line $x+y=\frac{1+\gamma^2}{(1+\gamma)^2}(T+\theta_k T^\nu)$. Exactly as in the proof of Proposition~\ref{ThmOneSidedLPP}, we can use an approximate characteristic line, namely the characteristic line for $\tau_k=s_k=0$. We look for $\tilde\tau_k$ such that
\begin{equation}
\begin{aligned}
x_k&=x(\tilde \tau_k,\theta_k)+r(1+\gamma)^{-2},\\
y_k&=y(\tilde \tau_k,\theta_k)+r\gamma^2(1+\gamma)^{-2}
\end{aligned}
\end{equation}
with $x(\tau,\theta),y(\tau,\theta)$ as defined in (\ref{scaling1}). If $\tilde \tau_k\to\tau_k$ as $T\to\infty$, then the theorem is proven. This is the case, algebraic computations lead to
\mbox{$\tilde\tau_k=\tau_k+\Or((s_k-\tau_k^2)T^{-1/3})$} as desired.

Consider now the case (b). From (\ref{eqTASEPvsDP}) we have
\begin{equation}\label{eq3.71b}
\Pb\bigg(\bigcap_{k=1}^m\{h_{\theta_k T}(\xi_k\theta_k T)\geq h_{\rm ma}(\xi_k)\theta_k T-2s_k T^{1/2}\}\bigg)=\Pb\bigg(\bigcap_{k=1}^m\{L_2(x_k,y_k)\leq \theta_k T\}\bigg)
\end{equation}
with
\begin{equation}
\begin{aligned}
x_k&=\tfrac12(\xi_k\theta_kT+h_{\rm ma}(\xi_k)\theta_k T-2s_k T^{1/2}),\\
y_k&=\tfrac12(h_{\rm ma}(\xi_k)\theta_k T-\xi_k\theta_k T-2s_k T^{1/2}).
\end{aligned}
\end{equation}
Let us focus on the case $\xi_k\leq 1-(\rho_-+\rho_+)$ (remind $\eta=\rho_-$ and $\pi=\rho_+$); we have
\begin{equation}
\begin{aligned}
x_k&=(1-\rho_-)(\rho_-+\xi_k)\theta_kT+s_k T^{1/2},\\
y_k&=\rho_-(1-\rho_--\xi_k)\theta_kT+s_k T^{1/2}.
\end{aligned}
\end{equation}
Let $\tilde \theta_k$ and $\tilde\gamma_k$ such that $x_k=\tilde\theta_kT$ and $y_k=\tilde\theta_k\tilde\gamma_k^2T$. Then, we get
\begin{equation}
\theta_k T=S(\tilde\gamma_k)\tilde\theta_k T+s_k\frac{T^{1/2}}{\rho_-(1-\rho_-)},
\end{equation}
and
\begin{equation}
\tilde \theta_k=(1-\rho_-)(\rho_-+\xi_k)\theta_k+\Or(T^{-1/2}).
\end{equation}
Then, we can apply directly (\ref{bmeqn}) with $\theta_k$ replaced by $\tilde\theta_k$, $\gamma_k$ by $\tilde\gamma_k$ and $s_k$ replaced by $s_k/(\rho_-(1-\rho_-))$ (similarly for the case $\xi_k>1-(\rho_-+\rho_+)$) to get
\begin{equation}
\begin{aligned}
\lim_{T\to\infty}&\Pb\bigg(\bigcap_{k=1}^m\{L_2(x_k,y_k)\leq \theta_k T\}\bigg) \\
&=\Pb\left(\bigcap_{k=1}^{m_l+m_s}\left\{\mathcal{B}\left(\theta_k\frac{1-2\rho_--\xi_k}{\rho_-(1-\rho_-)}\right)\leq \frac{s_k}{\rho_-(1-\rho_-)}\right\}\right) \\
&\times \Pb\left(\bigcap_{k=m_l+1}^m\left\{\mathcal{B}'\left(\theta_k\frac{\xi_k+2\rho_+-1}{\rho_+(1-\rho_+)}\right)\leq \frac{s_k}{\rho_+(1-\rho_+)}\right\}\right).
\end{aligned}
\end{equation}
Finally, one uses the scaling of Brownian Motion to rewrite it as in (\ref{eq2.12}).
\end{proof}

\newpage


\end{document}